\newtheorem{theorem}{Theorem}
\newtheorem{properties}[theorem]{Properties}
\newtheorem{definition}[theorem]{Definition}
\newtheorem{example}[theorem]{Example}
\newtheorem{proposition}[theorem]{Proposition}
\newtheorem{remark}[theorem]{Remark}
\newenvironment{proof}[1][Proof]{\textbf{#1.} }{\ \rule{0.5em}{0.5em}}
\begin{document}

\title{Disentangling Price, Risk and Model Risk: \\
{V\&R Measures}}
\author{Marco Frittelli \qquad \\
{\small Milano University, email: marco.frittelli@unimi.it} \and Marco
Maggis\qquad \\
{\small Milano University, email: marco.maggis@unimi.it} }
\maketitle

\begin{abstract}
We propose a method to assess the intrinsic risk carried by a financial
position $X$ when the agent faces uncertainty about the pricing rule
assigning its present value. Our approach is inspired by a new
interpretation of the quasiconvex duality in a Knightian setting, where a
family of probability measures replaces the single reference probability and
is then applied to value financial positions.

Diametrically, our construction of Value\&Risk measures is based on the
selection of a basket of claims to test the reliability of models. We
compare a random payoff $X$ with a given class of derivatives written on $X$%
, and use these derivatives to \textquotedblleft test\textquotedblright\ the
pricing measures.

We further introduce and study a general class of Value\&Risk measures $%
R(p,X,\mathbb{P})$ that describes the additional capital that is required to
make $X$ acceptable under a probability $\mathbb{P}$ and given the initial
price $p$ paid to acquire $X$.
\end{abstract}

\bigskip

\noindent \textbf{Keywords}: Model Risk, Pricing Uncertainty, Test
Functions, Value\&Risk Measures, Law Invariant Risk Measures, Quasi-convex
Duality.


\section{Introduction}

\bigskip The art of finance is essentially related to the capacity of
transferring the Risk: many notions (replicability, hedging trading
strategies, superhedging, quantile hedging, partial hedging, indifference
pricing, see for example \cite{FoSch}) are essentially based on some
technique which aims at \textit{replacing} the risk carried by one claim $X$
by the risk of some other object $Y$ that is considered \textit{sufficiently
close} to $X$ (whatever it means), provided that the risk of the auxiliary
object $Y$ is easier to compute.

\medskip

In this paper we take such approach in order to evaluate the intrinsic risk
of a claim $X$ by comparing the value of $X$ with the value of a family of
derivatives $f(X)$ on $X$, having a bounded level of risk. In this way, we
will conclude that the intrinsic risk of $X$ corresponds to the maximal risk
reduction we would obtain buying $X$ at price $p$ and selling a derivative $%
f(X)$, in the given class, with a price at most equal to $p$. This
methodology is sketched below but will be analyzed in detail only in Section %
\ref{secVR}, as in the Introduction we will illustrate the main concepts
only and defer to the subsequent sections the precise notations and
mathematical details.

\bigskip

In the literature the approaches used are mainly based on the selection of a
set of \textquotedblleft calibrated\textquotedblright\ pricing model. In
this setting, an important contribution is provided by Cont \cite{Co06},
where a quantitative framework to assess Model Uncertainty was introduced.
The prices of a set of benchmark instruments written on the underlying was
supposed to be known (allowing the possibility to belong to the bid-ask
interval). Consequently arbitrage-free pricing models $\mathcal{Q}$
consistent with these benchmark prices lead to the natural definition of
Coherent Measure of Model Risk as: $\mu _{\mathcal{Q}}(\cdot )=\sup_{Q\in
\mathcal{Q}}E_{Q}[\cdot ]-\inf_{Q\in \mathcal{Q}}E_{Q}[\cdot ]$. \newline
The absolute and relative measures of model risk, based on the specification
of a set of alternative distributions around a reference one and on a worst-
and best-case approach, are introduced in Barrieu and Scandolo \cite{BS15}.
\newline
Both the approaches in \cite{Co06} and \cite{BS15} are however very
different from our analysis developed in Section \ref{secVR}.

\bigskip

We let $\mathcal{L}(\Omega ,\mathcal{F})$ be the space of $\mathcal{F}$
measurable finite valued random variables with $\mathcal{F\subseteq B}%
(\Omega )$, the Borel sigma algebra of a Polish space $\Omega $. If $f:%
\mathbb{R}\rightarrow \mathbb{R}$ is a Borel function and $X\in \mathcal{L}%
(\Omega ,\mathcal{F})$, the random variable $f(X)$ is interpreted as the
terminal payoff of a contingent claim written on the underlying asset having
terminal value $X$. Suppose that the price of this contingent claim is
determined by the real function $f$ and by the distribution function $%
\mathbb{Q}(X\leq x)$ of $X$ with respect to a \textquotedblleft
pricing\textquotedblright\ probability measure $\mathbb{Q}$. As the choice
of such pricing measure is clearly an important and problematic issue, in
our approach we will contemplate a \textit{model risk} function defined on a
set of plausible models. The price under $\mathbb{Q}$ will be given by the
formula:%
\begin{equation*}
E_{\mathbb{Q}}[f(X)]=\int f_{X}d\mathbb{Q}=\int fdQ_{X}\text{,}
\end{equation*}%
where $f_{X}:=f\circ X$ is the random variable $f(X)$ defined on $(\Omega ,%
\mathcal{F})$ and $Q_{X}:=\mathbb{Q}\circ X^{-1}$ is the law of $X$ under $%
\mathbb{Q}$. \newline
The reason of writing explicitly the above formula is that in the two
approaches below we will simply exploit the \textquotedblleft bilinear
form\textquotedblright\ $\left\langle f,\mathbb{Q}\right\rangle _{X}:=E_{%
\mathbb{Q}}[f(X)]$, testing one variable via a set of the dual (testing)
variables. \newline
We stress the analogy of the two approaches that will be developed in
Sections 4.1 and 4.2 and that are here briefly introduced.

\paragraph{Use Models to test Claims\label{sec1}.}

Consider an underlying $X\in \mathcal{L}(\Omega ,\mathcal{F})$ and a claim $%
f:\mathbb{R}\rightarrow \mathbb{R}$.

In this approach, we \textquotedblleft compare\textquotedblright\ the prices
of the contingent claim $f(X)$ with respect to a given class of probability
models $\mathbb{Q}\in \mathcal{M}$, and use these models to
\textquotedblleft test\textquotedblright\ $f(X)$. In other words we take the
classical Knightian Uncertainty point of view and adopt a set of probability
measures $\mathbb{Q}\in \mathcal{M}$ to asses possible prices of the claim.
This idea is in agreement with the definition proposed by Cont \cite{Co06}:
the range of feasible prices varies from the minimal to the maximal one.
Indeed an agent may incorporate her preferences, binding a maximal model
risk she is willing to accept when choosing a pricing probability.

In our approach we further assume the existence of a model-risk function $A$
on $\mathcal{M}$ so that we may define the best (seller) price of the claim $%
f(X)$ relative to all possible choices of pricing measures under
the constraint that the model risk is less than or equal to $a$,
formally $$ V(a,X;f)=\sup_{\mathbb{Q}\in \mathcal{M}}\left\{ E_{\mathbb{Q}}[f(X)]\mid A(%
\mathbb{Q},X)\leq a\right\}.$$ In this way $V(a,X;f)$ represents
the maximum value of the contingent claim $f$ on the underlying
$X$, for the level $a$ of model-risk in the choice of $Q$, i.e. is
the best (seller) price of the claim $f(X)$ relative to all
possible choices of pricing measures under the constraint that the
model risk is less than or equal to $a$. By applying results from
quasi-convex duality (see \cite{FMP12}) we then show under which
conditions it is possible to recover the model risk function $A$
from the inverse function of $V.$

\paragraph{Use Claims to test Models\label{sec2}.}

Consider an underlying $X\in \mathcal{L}(\Omega ,\mathcal{F})$ and a
probability $\mathbb{Q}\in \mathcal{P}(\Omega )$.

In this novel approach, we \textquotedblleft compare\textquotedblright\ $X$
with the derivatives on $X$, in a given class of derivatives $f\in K$, and
use these derivatives to \textquotedblleft test\textquotedblright\ $\mathbb{Q%
}$. Contrary to the above mentioned (Knightian Uncertainty)
approach, here we select a class $K$ of derivatives to test the
\textquotedblleft reliability\textquotedblright\ of the model\
$\mathbb{Q}$. \newline An agent is willing to hold (or sell) the
position $X$ but she is aware that she may face losses. In order
to control these potential losses she will try to transfer/reduce
the overall risk by buying (or selling) derivatives/insurances on
$X$. We assume the existence of a risk reduction function,
$\varphi (f,X)$, defined on the basket of claims $K$ (which in
general is independent from a particular choice of the reference
probability). Among those derivatives, which guarantee the same
level of risk reduction $r$, the agent will choose the cheapest
one with respect to the pricing rule she adopt, computing the
minimal price

\begin{equation*}
\Pi _{\varphi }(r,X,;\mathbb{Q}):=\inf_{f\in K}\left\{ E_{\mathbb{Q}}\left[
f(X)\right] \mid \varphi (f,X)\geq r\right\} .
\end{equation*}%
The intrinsic risk for $X$, given that its present value is $p$ and $\mathbb{%
Q}$ is the pricing rule selected by the agent, is therefore provided by the
left inverse of $\Pi ,$ namely

\begin{equation}
R_{\varphi }(p,X;\mathbb{Q}):=\sup \{s\in \mathbb{R}\mid \Pi _{\varphi }(s,X;%
\mathbb{Q})\leq p\}.  \label{RR}
\end{equation}%
In fact if the optimization problems just mentioned can be solved then there
will exists a derivative $f(X)$ such that the price $E_{\mathbb{Q}}[f(X)]$
is equal to $p$ and provides a risk reduction $R_{\varphi }(p,X;\mathbb{Q})$.

We analyze several properties of the map $R_{\varphi}$ (see
Proposition \ref{properties} and \ref{cash}) including the
dependence of $R_{\varphi }$ from the set $K$ (Proposition
\ref{dependence:K}).

In Section \ref{examples} we show how the choice of the class of test
functions $K$ for $R_{\varphi }$ defined in \eqref{RR}, can be adapted to
several different contexts. The key idea is that $K$ collects those
derivatives which can be sold or acquired in order to cover
unexpected/unbounded losses of the underlying $X$. In addition, we prove in
Proposition \ref{dualityK} a quasi-convex duality result that allows us to
recover the risk reduction function $\varphi $ from $R_{\varphi }$.

\bigskip

%

To the best of our knowledge, the approach of using a fixed basket of claims
to test the reliability of models was not yet developed in the mathematical
finance literature and it represents the first main contribution of this
paper (see Section \ref{secVR}). The second one is the analysis and
axiomatization on the Value and Risk (V\&R) measures that we now illustrate.

\subsection{On Value and Risk Measures}

In Section \ref{ProfitLoss} we propose a systematic study which allows to
answer to the controversy about whether one should consider the future
\textit{value} of a position or the \textit{change in values} as the
argument of a risk measure (see the following excerpt from \cite{ADEHb}).

\bigskip

\textquotedblleft \emph{Although several papers (including an earlier
version of this one) define risk in terms of changes in values between two
dates, we argue that because risk is related to the variability of the
future value of a position, due to market changes or more generally to
uncertain events, it is better to instead consider future values only.
Notice indeed that there is no need for the initial costs of the components
of the position to be determined from universally defined market prices
(think of over-the-counter transactions). The principle of bygones are
bygones leads to this future wealth approach.}\textquotedblright , Section
2.1, Artzner et al. \cite{ADEHb}.

\bigskip

Differently from what is suggested in \cite{ADEHb}, it is a common practice
to apply standard risk measures as the Value at Risk or the Expected
Shortfall to Profit and Loss (P\&L) distributions. Given the triple $(p,X,%
\mathbb{P})$ with $p$ being the observed present value of $X$ and
$\mathbb{P} $ a reference probability, the P\&L distribution is
the induced distribution of the variation $X-p$ with respect to
$\mathbb{P}$. Indeed the P\&L approach has the benefit to
incorporate the price component in the risk assessment. On the
other hand it is not possible to distinguish which source
contributes mostly to the risk exposure, either a potential
mis-pricing of $X $ or the future realization of $X$. This is
clarified in Example \ref{inversione} where we consider two random
payoffs $X$ and $Y$ whose initial values are respectively $x,y\in
\mathbb{R}$ and show that, even if the payoff $X$ is
\textquotedblleft riskier\textquotedblright\ than $Y$ by any Risk
Measure (which is monotone decreasing with respect to the first
stochastic order), when considering the P\&L distributions of
$X-x$ and $Y-y$ the risk order may be reverted, if the price $y$
is too large.

\bigskip

To overcome this drawback in Section \ref{ProfitLoss} we will thus consider
the triple\textit{\ }%
\begin{equation*}
(p,X,\mathbb{P})\in \mathbb{R}\times \mathcal{L}(\Omega ,\mathcal{F})\times
\mathcal{P}(\Omega )
\end{equation*}%
as the argument of a \textbf{Value\&Risk Measure} $R(p,X,\mathbb{P})$, where
$p$ is the observed initial value of $X$ or is assigned by a pricing
functional. \newline
Informally, $R(p,X,\mathbb{P})$ should describe the additional capital that
is required to make $X$ acceptable under $\mathbb{P\in }\mathcal{P}(\Omega )$
and given the initial price $p$ paid to acquire $X$. We propose an axiomatic
approach to define such Value\&Risk (V\&R) measures by describing some
desirable minimal properties that $R(p,X,\mathbb{P})$ should satisfy.
\newline
Indeed risk measures defined on Profit and Loss distributions can be
recovered as a particular case in the family of Value\&Risk measures by
defining $R(p,X;\mathbb{P})=\rho _{\mathbb{P}}(X-p)$ with $\rho _{\mathbb{P}}
$ being a risk measure defined on some vector space $\mathcal{X}\subseteq
L^{0}(\Omega ,\mathcal{F},\mathbb{P})$. This case also suggests which are
the reasonable properties that a V\&R measure should satisfy. \newline
On the other hand the map $R_{\varphi }$ as defined in \eqref{RR} is a
Value\&Risk measure (see Theorem \ref{Price&Risk}), which exceeds the common
use of Profit and Loss distributions.

\bigskip

We point out that $R(p,X,\mathbb{P})$ can be interpreted:

\begin{itemize}
\item As an index of feasibility of the measure $\mathbb{P}$, with $X$
acting the role of a fixed parameter; in this case $R(p,X,\mathbb{\cdot })$
should behave as a model risk measure over the laws $P_{X}\in \mathcal{P}(%
\mathbb{R})$ (see Section \ref{secRM} for a review of such notion);

\item As a measure of the risk we are facing buying $X$ at price $p$, with $%
\mathbb{P}$ acting as the agent model belief; in this case $R(p,\cdot ,%
\mathbb{P})$ should behave as a risk measure on random variables $X\in
\mathcal{L}(\Omega ,\mathcal{F})$.
\end{itemize}

One relevant feature of considering such V\&R Measures $R(p,X,\mathbb{P})$
is the possibility to disentangle the three most important sources of
uncertainty: the price $p$ (which in general might not be unique, but rather
belong to a bid-ask interval), the random payoff $X$ and the probability $%
\mathbb{P}$. This differs to the common practice of concentrating these
three information in a unique object which is the Profit and Loss
distribution. \newline
As a consequence this reflects into the behavior of $R(p,X,\mathbb{P})$ with
respect to the addition of a cash amount $\alpha \in \mathbb{R}$. Note that
there are several reasonable properties regarding \textquotedblleft cash
invariance\textquotedblright , corresponding to the different ways one may
add cash: $R(p+\alpha ,X,\mathbb{P});$ $R(p,X+\alpha ,\mathbb{P});$ $R(p,X,%
\mathbb{P}^{\alpha })$. In Section \ref{ProfitLoss} we will explicitly
characterize the V\&R Measures satisfying three distinct cash invariance
properties and show the relevance of taking into account the initial amount $%
p$ needed to buy $X$. We will therefore conclude (Proposition \ref%
{acceptance:cash} and Remark \ref{delbaen}), that the choice between
\textquotedblleft future value only\textquotedblright\ versus
\textquotedblleft P\&L\textquotedblright\ is not arbitrary and it rests on
the type of cash invariance one is willing to accept.

\section{Risk Measures on $\mathcal{P}(\mathbb{R})$.\label{secRM}}

\paragraph{Notations.}

Let $(\Omega ,\mathcal{B}(\Omega ))$ be a probability space with $\Omega $
Polish and $\mathcal{B}(\Omega )$ the Borel sigma algebra induced by the
metric. Let $\mathcal{L}(\Omega ,\mathcal{F})$ be the space of $\mathcal{F}$
measurable finite valued random variables, endowed with the pointwise
partial order $\leq $ and $\mathcal{L}^{\infty }(\Omega ,\mathcal{F})$ its
subspace of bounded random variables. We denote respectively by $\mathcal{P}%
(\Omega ),\;\mathcal{P}(\mathbb{R})$ the set of all probability measures on $%
(\Omega ,\mathcal{B}(\Omega ))$, $(\mathbb{R},\mathcal{B}(\mathbb{R}))$.
\newline
Notice that for $\mathbb{P}\in \mathcal{P}(\Omega )$ and $X\in \mathcal{L}%
(\Omega ,\mathcal{F})$ the expectation $E_{\mathbb{P}}[X]$ might not be even
defined and for this reason we will make use of the convention $E_{\mathbb{P}%
}[X]=E_{\mathbb{P}}[X^{+}]-E_{\mathbb{P}}[X^{-}]$ with $\infty -\infty
=-\infty $.

\noindent For any Borel function $f:\mathbb{R}\rightarrow \mathbb{R}$ and $%
X\in \mathcal{L}(\Omega ,\mathcal{F})$, the random variable $f(X)$ is
interpreted as the terminal payoff of a contingent claim written on the
underlying asset having terminal value $X$. \newline
If a probability $\mathbb{P}\in \mathcal{P}(\Omega )$ is fixed we define $%
L^{0}(\Omega ,\mathcal{F},\mathbb{P})$ be the space of $\mathcal{F}$
measurable random variables that are $\mathbb{P}$ almost surely finite,
endowed with the $\mathbb{P}$-almost sure partial order $\leq _{\mathbb{P}}$%
. \newline
For any fixed $\mathbb{P}\in \mathcal{P}(\Omega )$ the random variable $X\in
\mathcal{L}(\Omega ,\mathcal{F})$ induces a probability measure $P_{X}\in
\mathcal{P}(\mathbb{R})$ by $P_{X}=\mathbb{P}\circ X^{-1}$. We refer to \cite%
{Ali} Chapter 15 for a detailed study of the convex sets $\mathcal{P}(\Omega
)$ (resp. $\mathcal{P}(\mathbb{R})$). If $\mathbb{P}(X=x)=1$ for some $x\in
\mathbb{R}$ then $P_{X}$ is the Dirac distribution, denoted by $\delta _{x},$
that concentrates the mass in the point $x\in \mathbb{R}$. Similarly we
denote by $\delta _{\omega }\in \mathcal{P}(\Omega )$ the Dirac distribution
on $\omega \in \Omega $.

\begin{definition}
\label{defOrder} We consider the following partial order for probability
measures.

\begin{enumerate}
\item[(i)] The first order stochastic dominance on $\mathcal{P}(\mathbb{R})$
is given by:
\begin{equation*}
Q\preccurlyeq _{1}P\text{ if and only if }F_{P}(x)\leq F_{Q}(x)\text{ for
all }x\in \mathbb{R},
\end{equation*}%
where $F_{P}(x)=P(-\infty ,x]$ and $F_{Q}(x)=Q(-\infty ,x]$ are the
distribution functions of $P,Q\in \mathcal{P}(\mathbb{R})$.

\item[(ii)] For any fixed $X\in \mathcal{L}(\Omega ,\mathcal{F})$ we define
the following partial order on $\mathcal{P}(\Omega )$%
\begin{equation*}
\mathbb{P}^{1}\preccurlyeq _{X}\mathbb{P}^{2}\text{ if and only if }%
P_{X}^{1}\preccurlyeq _{1}P_{X}^{2}.
\end{equation*}
\end{enumerate}
\end{definition}

Notice that when $\mathbb{P}^{1}\preccurlyeq _{X}\mathbb{P}^{2}$ then $%
\mathbb{P}^{2}$ is a safer scenario than $\mathbb{P}^{1}$ for $X$. Observe
also that, for $\mathbb{P}\in \mathcal{P}(\Omega )$ and any $X,Y\in \mathcal{%
L}(\Omega ,\mathcal{F})$, $X\leq Y$ implies $X\leq _{\mathbb{P}}Y$ which
implies $P_{X}\preccurlyeq _{1}P_{Y}$. \newline
We shall always refer to $C^{0}(\mathbb{R})=\{f:\mathbb{R}\rightarrow
\mathbb{R}\mid f\text{ continuous }\}$ and $C_{+}^{0}(\mathbb{R})=\{f\in
C^{0}(\mathbb{R})\mid f\text{ increasing }\}$. Let $C_{b}(\Omega )$ be the
space of bounded continuous function $f:\Omega \rightarrow \mathbb{R}$ and $%
ca(\Omega )$ the space of countably additive signed measures $\mu :\mathcal{B%
}(\Omega )\rightarrow \mathbb{R}$. We endow $ca(\Omega )$ with the weak$%
^{\ast }$ topology $w^{\ast }=\sigma (ca(\Omega ),C_{b}(\Omega ))$. The dual
pairing $\langle \cdot ,\cdot \rangle :C_{b}(\Omega )\times ca(\Omega
)\rightarrow \mathbb{R}$ is given by $\langle f,\mu \rangle =\int fd\mu $
and the function $\mu \mapsto \int fd\mu $ ($f\in C_{b}(\Omega )$) is $%
w^{\ast }$ continuous.

\paragraph{Risk Measures on $\mathcal{P}(\mathbb{R})$ for a fixed reference
probability.}

\label{FMP}

We refer to \cite{FMP12} for a detailed analysis of risk measures defined on
$\mathcal{P}(\mathbb{R}).$ Recall that, when $\mathbb{P}\in \mathcal{P}%
(\Omega )$ is fixed, a map $\rho _{\mathbb{P}}:L\rightarrow \overline{%
\mathbb{R}}:=\mathbb{R}\cup \left\{ -\infty \right\} \cup \left\{ \infty
\right\} $, defined on given subset $L\subseteq L^{0}(\Omega ,\mathcal{F},%
\mathbb{P}),$ is called law invariant if $X,Y\in L$ and $P_{X}=P_{Y}$
implies $\rho _{\mathbb{P}}(X)=\rho _{\mathbb{P}}(Y)$.

\noindent Therefore, when considering law invariant risk measures $\rho _{%
\mathbb{P}}:L^{0}(\Omega ,\mathcal{F},\mathbb{P})\rightarrow \overline{%
\mathbb{R}}$ it is natural to shift the problem to the set $\mathcal{P}(%
\mathbb{R})$ by defining the new map $\Phi :\mathcal{P}(\mathbb{R}%
)\rightarrow \overline{\mathbb{R}}$ \ as $\Phi (\mathbb{P}\circ X^{-1})=\rho
_{\mathbb{P}}(X)$. This map $\Phi $ is well defined on the entire $\mathcal{P%
}(\mathbb{R})$, since there exists a bi-injective relation between $\mathcal{%
P}(\mathbb{R})$ and the quotient space $\frac{L^{0}}{\sim }$ (provided that $%
(\Omega ,\mathcal{F},\mathbb{P})$ supports a random variable with uniform
distribution), where the equivalence is given by $X\sim _{\mathcal{D}}Y$ $%
\Leftrightarrow P_{X}=P_{Y}$. However, $\mathcal{P}(\mathbb{R})$ is only a
convex set and the usual operations on $\mathcal{P}(\mathbb{R})$ are not
induced by those on $L^{0}$, namely $(P_{X}+P_{Y})(A)=P_{X}(A)+P_{Y}(A)\neq
P_{X+Y}(A)$, $A\in \mathcal{B}_{\mathbb{R}}$. From \cite{FMP12} we recall
the following

\begin{definition}
\label{defRM}A Risk Measure on $\mathcal{P}(\mathbb{R})$ is a map $\Phi :%
\mathcal{P}(\mathbb{R})\rightarrow \mathbb{R}\cup \{+\infty \}$ such that:

\begin{description}
\item[(Mon)] $\Phi $ is $\preccurlyeq _{1}$-monotone decreasing: $%
P\preccurlyeq _{1}Q$ implies $\Phi (P)\geq \Phi (Q)$;

\item[(QCo)] $\Phi $ is quasi-convex: $\Phi (\lambda P+(1-\lambda )Q)\leq
\Phi (P)\vee \Phi (Q)$, $\lambda \in \lbrack 0,1].$
\end{description}
\end{definition}

Quasiconvexity can be equivalently reformulated in terms of sublevel sets: a
map $\Phi $ is quasi-convex if for every $c\in \mathbb{R}$ the set $\mathcal{%
A}_{c}=\{P\in \mathcal{P}(\mathbb{R})\mid \Phi (P)\leq c\}$ is convex.

\bigskip

As suggested by \cite{Weber}, we define the translation operator $T_{p}$ on
the set $\mathcal{P}(\mathbb{R})$ by: $T_{p}P(-\infty ,x]:=P(-\infty ,x-p]$,
for every $p\in \mathbb{R}$. Equivalently, if $P_{X}$ is the probability
distribution of a random variable $X$ we define the translation operator as $%
T_{p}P_{X}=P_{X+p}$, $p\in \mathbb{R}$. As a consequence we map the
distribution $F_{X}(x)$ into $F_{X}(x-p)$. Notice that $P\preccurlyeq
_{1}T_{p}P$ for any $p>0$. We will interpret $T_{-p}P_{X}=\mathbb{P}\circ
(X-p)^{-1}$ as the Profit and Loss distribution of the random payoff $X$
whose initial value is $p$.

\begin{definition}
We consider the following additional property for a risk measure $\Phi :%
\mathcal{P}(\mathbb{R})\rightarrow \mathbb{R}\cup \{+\infty \}$:

\begin{description}
\item[(TrI)] $\Phi $ is translation invariant if $\Phi (T_{p}P)=\Phi (P)-p$
for any $p\in \mathbb{R}.$
\end{description}
\end{definition}

Notice that (TrI) corresponds exactly to the notion of cash additivity for
risk measures defined on a space of random variables as introduced in \cite%
{ADEHb}.

\section{Value and Risk measures: $V\&R$}

\label{ProfitLoss}

We consider a simple setting, in which the risk of a financial portfolio is
evaluated over its (empirical) profit and loss (P\&L) distribution in a
one-period investment horizon. (i.e. we restrict the problem to two dates $%
t_{0}$ and $t_{1}$). For simplicity we can think of $t_{0}=0$ and $t_{1}=1$,
but a (sufficiently long) market history is supposed to be known before time
$t_{0}=0$. The risk manager can observe the present market values of a
basket of financial tradable assets at $t_{0}$ and hence she will be able to
compute the\ time $t_{0}$ price of any portfolio strategy. Tradable assets
are described by a $d$-dimensional vector of initial prices $S^{0}\in
\mathbb{R}^{d}$ and a $d$-dimensional random vector of payoffs $S:(\Omega ,%
\mathcal{F})\rightarrow \mathbb{R}^{d}$. We are implicitly assuming the
interest rate is zero or that the asset prices are already discounted. Given
a random variable $X:(\Omega ,\mathcal{F})\rightarrow \mathbb{R}$ any choice
of the (historical) probability $\mathbb{P}\in \mathcal{P}(\Omega )$ and any
price $p$ of $X$ will determine the Profit and Loss (P\&L) distribution of $%
(X-p)$, namely
\begin{equation*}
\mathbb{P}\circ (X-p)^{-1}=T_{-p}P_{X}.
\end{equation*}%
The price $p$ could represent the observed initial price of $X$ or could be
assigned via a pricing functional. In either cases, the P\&L distribution
will be given by: $T_{-p}P_{X}$. In addition, if a risk measure $\rho _{%
\mathbb{P}}$ is also assigned, then it will induce a risk measure on P\&L
distribution $\Phi :\mathcal{P}(\mathbb{R})\rightarrow \mathbb{R}\cup
\left\{ \infty \right\} $ by: $\rho _{\mathbb{P}}(X-p)=\Phi \left(
T_{-p}P_{X}\right) $.

The drawback of such P\&L approach is that usually the price component
cannot be distinguished from the distribution component and this becomes a
critical point as far as we are facing Uncertainty on the reference
probability $\mathbb{P}$.

\bigskip

\textit{We will thus consider the triple }$(p,X,\mathbb{P})\in \mathbb{R}%
\times \mathcal{L}(\Omega ,\mathcal{F})\times \mathcal{P}(\Omega )$\textit{\
as the argument of our Value}\&\textit{Risk functional }$R(p,X,\mathbb{P})$,
where the initial value of $X$ is assigned by $p$. \newline

\begin{remark}
\label{RemCall}Let $\mathbb{P}\in \mathcal{P}(\Omega )$. To better clarify
the role of the sign of the variable $p$ we consider the following simple
example: if $C=(S_{T}-k)^{+}$ is the payoff of a Call Option written on an
underlying asset $S$, then the initial value of $C$ is positive and given by
$c$. In the case we buy $C$ we will consider the triple $(c,C,\mathbb{P})$
as the argument of $R$. On the other hand if we are selling $C$ we will
consider $(-c,-C,\mathbb{P})$. Thus in general for $R(p,X,\mathbb{P})$ the
variable $p$ represents the value of $X$ at time $0$. In particular a
positive value $p>0$ represents the price we paid to hold $X$ and a negative
values $p<0$ corresponds to the amount $|p|$ we received selling $X$.
\end{remark}

\paragraph{Illustrative observations.}

Risk measurement is in general not only a binary answer to the question `is
a portfolio acceptable?'. Any risk procedure allows us to quantify the level
of risk exposure so that an extra capital requirement can be assessed to
cover future unexpected losses. In order to develop the intuition leading to
the following definition and properties of the V\&R measures we present a
common simple situation.

\bigskip

Consider a price/portfolio couple $(p,X)$ given by the selling of a call
option by an agent whose personal belief is $\mathbb{P}$ (i.e $(p,X)=(-c,-C)$
with $c>0$). Obviously we expect that any rational agent will willingly sell
$C$ if the statistical information guarantee that the risk $\Phi (\mathbb{P}%
_{-C})$ is low enough to be recovered by the amount $c$, i.e. if $R(-c,-C,%
\mathbb{P}):=-c+\Phi (\mathbb{P}_{-C})$ is non positive. Similarly an agent
who is paying $c$ to acquire $C$ will be happy to be informed that no
additional capital is required i.e. $R(c,C,\mathbb{P})=c+\Phi (\mathbb{P}%
_{C})$ is non-positive. \newline
Informally we claim that the quantity $R(p,X,\mathbb{P})$ gives the eventual
extra capital requirement the agent has to save if the level of risk $\Phi
(P_{X})$ is too high. \newline
Thus this extra capital requirement can be written in terms of acceptance
set as follows:%
\begin{eqnarray*}
R(p,X,\mathbb{P}) &=&p+\inf \{m\mid X+m\text{ is acceptable}\} \\
&=&p+\inf \{m\mid T_{m}P_{X}\in \mathcal{A}^{0}\}
\end{eqnarray*}%
where $\mathcal{A}^{0}\subseteq \mathcal{P}(\mathbb{R})$ is the set of $P\&L$
distributions that are acceptable for the regulator. But this is only a
particular case of a more general formulation that allow to conceive several
reasonable cash additivity properties for $R(p,X,\mathbb{P}).$

\bigskip

The aforementioned situation can be summarized by a decomposition of the
type
\begin{equation}
\begin{array}{ccccc}
\text{P\&L Risk} & = & \text{Price of X} & + & \text{Risk of the Payoffs} \\
R(p,X,\mathbb{P}) & = & p & + & \Phi (P_{X})%
\end{array}
\label{12}
\end{equation}%
Recall that usually regulators/risk managers focus their attention only on
the component which estimates the risk of the Profit and Loss distribution.
The interpretation is the following: the risk of the Profit and Loss
distribution $T_{-p}P_{X}$ is strongly related to the price that was paid to
hold $X$. When $R$ is defined as in (\ref{12}) the total capital requirement
will be given by the market price $p$ that was paid to acquire $X$ plus the
risk of the payoff $\Phi (P_{X})$. Notice that usually if $p$ is positive
(resp. negative) then $\Phi (P_{X})$ is expected to be negative (resp.
positive) as suggested by the example of the Call Option described in Remark %
\ref{RemCall}. The simplest example of such $V\&R$ measure is:
\begin{equation*}
R(p,X,\mathbb{Q})=p-E_{\mathbb{Q}}[X],
\end{equation*}%
which express in fact that the intrinsic risk of acquiring $X$ at a given
price $p$ is exactly the discrepancy between $p$ and $E_{\mathbb{Q}}[X]$,
assuming that $\mathbb{Q}$ is the pricing rule. However, this case and the
decomposition in (\ref{12}) may hold only in special cases of the general
family of Value\& Risk measures.

\bigskip

In this paper we generalize the usual form $p+\inf \{m\mid T_{m}P_{X}\in
\mathcal{A}^{0}\}$. To explain this generalization we consider the following
two steps.

\bigskip

\textbf{First} we consider a situation in which acceptance of a position $X$
has an explicit dependence on its price $p$. In such a case $R(p,X,\mathbb{P}%
)=\inf \{m\mid T_{m}P_{X}\in \mathcal{A}^{p}\}$, and we recover the
classical framework if $\mathcal{A}^{p}=\{T_{-p}P\mid P\in \mathcal{A}^{0}\}$%
.

\bigskip

\textbf{Second} we push the problem to the utmost general and interesting
situation where $R(p,X,\mathbb{P})=\inf \{m\mid P_{X}\in \mathcal{A}%
_{m}^{p}\}$. We would like to stress that in the definition $R_{\mathbb{A}%
}(p,X,\mathbb{P})=\inf \{m\mid T_{m}P_{X}\in \mathcal{A}^{p}\}$ the position
$T_{m}P_{X}$ has a different initial value with respect to $P_{X}$ which is
naively speaking $p+m$. Notice that the set $\mathcal{A}_{m}^{p}$ is
explicitly splitting the two components $p$ and $m$ corresponding
respectively to the \emph{initial value of the position} and the \emph{%
capital requirement to cover expected losses}. Potentially these two
components might be expressed in two different currencies and for this
reason the quantity $T_{m}P$ might loose its meaning.

\begin{example}
\label{inversione} We now consider two portfolios $X,Y$ whose initial values
are respectively $x,y$ and suppose that the distribution of $X$ dominates
the one of $Y,$ $P_{X}\preccurlyeq _{1}P_{Y}$ (which informally means $X$ is
\textquotedblleft riskier\textquotedblright\ than $Y$) and therefore for any
Risk Measure $\Phi :\mathcal{P}\rightarrow \overline{\mathbb{R}}$ (which is
monotone decreasing with respect to the first stochastic order) we have $%
\Phi (P_{X})\geq \Phi (P_{Y})$. It is also plausible that the initial price $%
y$ of $Y$ is not smaller than the one of $X$. However, if $y$ is
\textquotedblleft too large\textquotedblright\ compared to $x$ it is
possible that the corresponding P\&L distribution $T_{-y}P_{Y}$ is shifted
too much to the left, the two distributions $T_{-y}P_{Y}$ and $T_{-x}P_{X}$
intersect each other and the risk order is reverted: $\Phi
(T_{-x}P_{X})<\Phi (T_{-y}P_{Y})$.

\noindent For instance suppose that the distributions of $X$\ and $Y$ are
given by
\begin{eqnarray*}
F_{X}(z) &=&\left( 1\wedge (z+0,5)\right) \mathbf{1}_{[-0,5,+\infty )}(z),%
\text{ }z\in \mathbb{R}, \\
F_{Y}(z) &=&(1\wedge (z+0,5)^{2})\mathbf{1}_{[-0,5,+\infty )}(z),\text{ }%
z\in \mathbb{R},
\end{eqnarray*}%
and take $\Phi =V@R_{\lambda }$ with $\lambda =0.01$. Then $V@R_{\lambda
}(P_{X})=0.5-0.01=0,49>0.4=0.5-0.1=V@R_{\lambda }(P_{Y})$. But if $y>x+0.09$
then one easily checks that $V@R_{\lambda }(T_{-x}P_{X})<V@R_{\lambda
}(T_{-y}P_{Y})$.

\noindent If we focus only on payoffs, an agent is induced to prefer $Y$
respect to $X$ since $P_{X}\preccurlyeq _{1}P_{Y}$. But obviously in order
to hold position $Y$ the agent will have to pay an initial price which
influences the risk profile: the first stochastic dominance makes sense as
far as we compare positions having the same initial price.
\end{example}

We now provide the formal definition of Value\&Risk measures and their
properties.

\begin{definition}
\label{defi1} A \textbf{Value\&Risk} measure is any map $R:\mathbb{R}\times
\mathcal{L}(\Omega ,\mathcal{F})\times \mathcal{P}(\Omega )\rightarrow
\overline{\mathbb{R}}$ having the following four properties:

\begin{description}
\item[(1Mon)] for any fixed $(X,\mathbb{P})\in \mathcal{L}(\Omega ,\mathcal{F%
})\times \mathcal{P}(\Omega )$ and $p\leq q$ we have $R(p,X,\mathbb{P})\leq
R(q,X,\mathbb{P})$;

\item[(2Mon)] for any fixed $(p,\mathbb{P})\in \mathbb{R}\times \mathcal{P}%
(\Omega )$ and $X\leq Y$ we have $R(p,Y,\mathbb{P})\leq R(p,X,\mathbb{P})$;

\item[(3Mon)] for any fixed $(p,X)\in \mathbb{R}\times \mathcal{L}(\Omega ,%
\mathcal{F})$ and $\mathbb{P}^{1}\preccurlyeq _{X}\mathbb{P}^{2}$ we have $%
R(p,X,\mathbb{P}^{2})\leq R(p,X,\mathbb{P}^{1})$;

\item[(QCo)] Quasiconvex on $\mathcal{P}(\Omega )$: for any $p\in \mathbb{R}$%
, $X\in \mathcal{L}(\Omega ,\mathcal{F})$, $\mathbb{P}^{1},\mathbb{P}^{2}\in
\mathcal{P}(\Omega )$ and $\lambda \in (0,1)$ we have
\begin{equation*}
R(p,X,\lambda \mathbb{P}^{1}+(1-\lambda )\mathbb{P}^{2})\leq R(p,X,\mathbb{P}%
^{1})\vee R(p,X,\mathbb{P}^{2}).
\end{equation*}
\end{description}
\end{definition}

\noindent (1Mon) is simply justified by observing that the higher is the
price paid for $X,$ the higher is the risk. (2Mon) is the classical
monotonicity property for risk measures on random variables. (3Mon) and
(Qco) are the characteristic properties of risk measures on distributions
(see Definition \ref{defRM}). Proposition \ref{acceptance:mon} will
characterize these different types of monotonicity in terms of acceptance
sets.

\noindent The following condition is the appropriate extension, to this
context, of the law invariant property of risk measures:

\begin{description}
\item[(CLI)] Cross-Law Invariant: for any fixed $(X,\mathbb{P}^{1}),(Y,%
\mathbb{P}^{2})\in \mathcal{L}(\Omega ,\mathcal{F})\times \mathcal{P}(\Omega
)$ such that $P_{X}^{1}=P_{Y}^{2}$ then $R(p,X,\mathbb{P}^{1})=R(p,Y,\mathbb{%
P}^{2})$ for all $p\in \mathbb{R}$.
\end{description}

An additional feature (which in general fails in examples like $R(p,X;%
\mathbb{P})=p+V@R_{\lambda }(P_{X})$) is the quasiconvexity of the $R$ with
respect to the $X$ variable. This corresponds to the usual principle of
diversification as introduced in \cite{CMMMa}.

\begin{description}
\item[(QCoX)] Quasiconvex on $\mathcal{L}(\Omega ,\mathcal{F})$: for any $%
p\in \mathbb{R}$, $\mathbb{P} \in \mathcal{P}(\Omega )$, $X_1,X_2\in
\mathcal{L}(\Omega ,\mathcal{F})$ and $\lambda \in (0,1)$ we have
\begin{equation*}
R(p,\lambda X_1+(1-\lambda)X_2, \mathbb{P})\leq R(p,X_1,\mathbb{P} )\vee
R(p,X_2,\mathbb{P}).
\end{equation*}
\end{description}

\paragraph{V\&R measures and addition of cash.}

In Definition \ref{defi1} we do not require a priori any Cash Invariance
property of $R$. We now introduce the three axioms (Aff), (CA) and (DI) that
describe different level of invariancy with respect to additional cash and
needs to be studied separately. We will give a characterization of these
properties in Propositions \ref{acceptance:cash}.

\begin{definition}
\label{def2}Consider the following properties, with respect to addition of a
cash amount $\alpha \in \mathbb{R}$, that a Value\&Risk measure $R$ may
satisfy:

\begin{description}
\item[(Aff)] Price Affinity: $R(p+\alpha ,X,\mathbb{P})=R(p,X,\mathbb{P}%
)+\alpha $;

\item[(CA)] Cash Additivity: $R(p,X+\alpha ,\mathbb{P})=R(p,X,\mathbb{P}%
)-\alpha $;

\item[(DI)] Deviation Invariancy: $R(p+\alpha ,X+\alpha ,\mathbb{P})=R(p,X,%
\mathbb{P})$;

\item[(DCA)] Distribution Cash Additivity: $R(p,X,\mathbb{P}^{1})-\alpha
=R(p,Y,\mathbb{P}^{2})$ if $T_{\alpha }P_{X}^{1}=P_{Y}^{2}$.
\end{description}
\end{definition}

Finally we will also need the following property:

\begin{description}
\item[(Nor)] Normalization: $R(0,0,\mathbb{P})=0$ for all $\mathbb{P}\in
\mathcal{P}(\Omega )$;
\end{description}


\begin{remark}
\label{implications} Easy computations show that for a V\&R measure:

\begin{enumerate}
\item (Aff) and (CA) imply (DI); (Aff) and (DI) imply (CA); (CA) and (DI)
imply (Aff).

\item (DCA) iff (CA) and (CLI).

\item (Nor) and (DI) imply that $R(p,p,\mathbb{P})=0$ for any choice of $%
\mathbb{P}\in \mathcal{P}(\Omega )$.
\end{enumerate}
\end{remark}

\paragraph{Examples.}

\begin{itemize}
\item[(1)] First we consider the case in which $\Phi :\mathcal{P}(\mathbb{R}%
)\rightarrow \mathbb{R}\cup \left\{ \infty \right\} $ is a Risk Measure on
distribution, as in Definition \ref{defRM}, that also satisfies (TrI), as in
the case of the V@R or the Entropic Risk Measure. Define $R(p,X,\mathbb{P}%
):=\Phi (T_{-p}P_{X})$. By the property (TrI), coherently with equation (\ref%
{12}), we deduce
\begin{equation*}
R(p,X,\mathbb{P}):=\Phi (T_{-p}P_{X})=p+\Phi (P_{X}).
\end{equation*}%
Here the map $R(p,X,\mathbb{P})$ satisfies all the properties given in
Definitions \ref{defi1} and \ref{def2} and property (CLI) but not (Nor),
unless $\Phi (\delta _{0})=0$ ($\delta _{0}\in \mathcal{P}(\mathbb{R})$
being the Dirac distribution on $0\in \mathbb{R}$).

\item[(2)] In Appendix B we describe the risk measure $\Lambda V@R$,
introduced in \cite{FMP12}, which depends on a Probability/Loss function $%
\Lambda :\mathbb{R}\rightarrow \lbrack 0,1]$ and is defined as follows:
\begin{equation*}
\Lambda V@R(P_{X}):=-\sup \left\{ m\in \mathbb{R}\mid \mathbb{P}(X\leq
x)\leq \Lambda (x),\;\forall x\leq m\right\} .
\end{equation*}%
Define $R:\mathbb{R}\times \mathcal{L}(\Omega ,\mathcal{F})\times \mathcal{P}%
(\Omega )\rightarrow \overline{\mathbb{R}}$ by
\begin{eqnarray*}
R(p,X,\mathbb{P}) &=&\Lambda V@R(T_{-p}P_{X}) \\
&=&-\sup \left\{ m\in \mathbb{R}\mid P_{X}(-\infty ,y+p]\leq \Lambda
(y),\;\forall y\leq m\right\} .
\end{eqnarray*}%
By a simple change of variables and by defining the one parameter family $%
\Lambda ^{-p}$ as $\Lambda ^{-p}(x)=\Lambda (x-p)$ we get
\begin{equation}
R(p,X,\mathbb{P})=p+\Lambda ^{-p}V@R(P_{X})  \label{deco1}
\end{equation}%
Here the map $R$ satisfies (1-2-3Mon), (QCo), (DI), (CLI) but not (CA) nor
(Aff). Even though (CA) fails, we may deduce from equation (\ref{pha}) and (%
\ref{deco1}) that $R(p,p,\mathbb{P})=0$ independently from the choice of $%
\mathbb{P}$. We have $\Lambda (\cdot +\alpha )\geq \Lambda (\cdot )$ for $%
\alpha >0$, which implies $\Lambda ^{-p}V@R(P_{X+\alpha })=\Lambda
^{-p+\alpha }V@R(P_{X})-\alpha \geq \Lambda ^{-p}V@R(P_{X})-\alpha $ and
therefore
\begin{equation*}
\text{(Sup-CA)}\quad R(p,X+\alpha ,\mathbb{P})\geq R(p,X,\mathbb{P})-\alpha .
\end{equation*}%
Similarly if $\alpha <0$
\begin{equation*}
\text{(Sub-CA)}\quad R(p,X+\alpha ,\mathbb{P})\leq R(p,X,\mathbb{P})-\alpha .
\end{equation*}
\end{itemize}

\paragraph{Acceptance sets and Value\&Risk measures.\label{subAcceptance}}

We now consider a general family $\mathbb{A}=\{\mathcal{A}_{m}^{p}\}_{p,m\in
\mathbb{R}}$, $\mathcal{A}_{m}^{p}\subseteq \mathcal{P}(\mathbb{R})$ for
every $p,m\in \mathbb{R}$, and study the properties of the map
\begin{equation}
R_{\mathbb{A}}(p,X,\mathbb{P})=\inf \{m\mid P_{X}\in \mathcal{A}_{m}^{p}\}.
\label{capital}
\end{equation}%
As already mentioned the set $\mathcal{A}_{m}^{p}$ is intentionally
splitting the two components $p$ and $m$ corresponding respectively to the
\emph{initial value of the position} and the \emph{capital requirement to
cover expected losses}. \newline
We begin with the analysis of three different types of monotonicity and
quasiconvexity.

\begin{proposition}
\label{acceptance:mon} Consider a family $\{\mathcal{A}_{m}^{p}\}_{p,m\in
\mathbb{R}}$ contained in $\mathcal{P}(\mathbb{R})$ and $R_{\mathbb{A}}:%
\mathbb{R}\times \mathcal{L}(\Omega ,\mathcal{F})\times \mathcal{P}(\Omega
)\rightarrow \overline{\mathbb{R} }$ as defined in (\ref{capital}).

\begin{enumerate}
\item[(m1)] If for every $m\in \mathbb{R}$, $\mathcal{A}_{m}^{p}\subseteq
\mathcal{A}_{m}^{q}$ for $q\leq p$ then $R_{\mathbb{A}}$ is (1Mon).

\item[(m2)] If for every $(m,p,\mathbb{P})\in \mathbb{R}\times \mathbb{R}%
\times \mathcal{P}(\Omega )$, $X\leq Y$ and $P_{X}\in \mathcal{A}_{m}^{p}$
imply $P_{Y}\in \mathcal{A}_{m}^{p}$, then $R_{\mathbb{A}}$ is (2Mon).

\item[(m3)] If for every $(m,p,X)\in \mathbb{R}\times \mathbb{R}\times
\mathcal{L}(\Omega ,\mathcal{F})$, $\mathbb{P}_{1}\preccurlyeq _{X}\mathbb{P}%
_{2}$ and $P_{X}^{1}\in \mathcal{A}_{m}^{p}$ imply $P_{X}^{2}\in \mathcal{A}%
_{m}^{p}$, then $R_{\mathbb{A}}$ is (3Mon).

\item[(c)] Suppose that: (i) for all $p\in \mathbb{R}$ and all $\alpha \leq
\beta ,$ $\mathcal{A}_{\alpha }^{p}\subseteq \mathcal{A}_{\beta }^{p}$; (ii)
$\mathcal{A}_{m}^{p}$ is convex for all $p,m\in \mathbb{R}$; then $R_{%
\mathbb{A}}$ is (QCo).
\end{enumerate}

Viceversa take $R:\mathbb{R}\times \mathcal{L}(\Omega ,\mathcal{F})\times
\mathcal{P}\rightarrow \overline{R}$ and define $\mathbb{A}=\{\mathcal{A}%
_{m}^{p}\}_{p,m}$ by: $\mathcal{A}_{m}^{p}=\{Q_{X}\in \mathcal{P}(\mathbb{R}%
)\mid R(p,X,\mathbb{Q})\leq m\}$. Then:

\begin{enumerate}
\item[(M1)] If $R$ is (1Mon) then $\mathcal{A}_{m}^{p}\subseteq \mathcal{A}%
_{m}^{q}$ for $q\leq p$ and $m\in \mathbb{R}$.

\item[(M2)] If $R$ is (2Mon) then for every $(m,p,\mathbb{P})\in \mathbb{R}%
\times \mathbb{R}\times \mathcal{P}(\Omega )$, $X\leq Y$ and $P_{X}\in
\mathcal{A}_{m}^{p}$ imply $P_{Y}\in \mathcal{A}_{m}^{p}$.

\item[(M3)] If $R$ is (3-Mon) then for every $(m,p,X)\in \mathbb{R}\times
\mathbb{R}\times \mathcal{L}(\Omega ,\mathcal{F})$, $\mathbb{P}%
^{1}\preccurlyeq _{X}\mathbb{P}^{2}$ and $P_{X}^{1}\in \mathcal{A}_{m}^{p}$
imply $P_{X}^{2}\in \mathcal{A}_{m}^{p}$.

\item[(C)] If $R$ is (QCo) then $\mathcal{A}_{m}^{p}$ is convex for every $%
p,m\in \mathbb{R}$.
\end{enumerate}
\end{proposition}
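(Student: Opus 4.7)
My plan is to use the standard infimum/acceptance-set duality: for the forward direction, unravel $R_{\mathbb{A}}(p,X,\mathbb{P})=\inf\{m : P_X \in \mathcal{A}_m^p\}$, translate each set-theoretic hypothesis into an inclusion of the one-dimensional level sets $\{m : P_X \in \mathcal{A}_m^p\}$, and invoke the fact that set inclusion reverses under infimum. Part (m1) follows immediately: for $q\leq p$, the inclusion $\mathcal{A}_m^p \subseteq \mathcal{A}_m^q$ yields $\{m:P_X\in\mathcal{A}_m^p\}\subseteq\{m:P_X\in\mathcal{A}_m^q\}$, whence $R_{\mathbb{A}}(q,X,\mathbb{P})\leq R_{\mathbb{A}}(p,X,\mathbb{P})$. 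Parts (m2) and (m3) are identical in structure: the assumed preservation of membership in $\mathcal{A}_m^p$ under pointwise dominance of random variables (resp. under $\preccurlyeq_X$ on $\mathcal{P}(\Omega)$) shows that any $m$ acceptable for $X$ (resp.\ for $\mathbb{P}^1$) is acceptable for $Y$ (resp.\ for $\mathbb{P}^2$), and taking infima gives the desired inequality.

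For (c) the delicate point is that the infimum need not be attained, so I will argue by $\varepsilon$-approximation. Set $c=R_{\mathbb{A}}(p,X,\mathbb{P}^1)\vee R_{\mathbb{A}}(p,X,\mathbb{P}^2)$ and fix $\varepsilon>0$. By definition of the infimum one can pick $\alpha_i\leq c+\varepsilon$ with $P_X^i\in\mathcal{A}_{\alpha_i}^p$; hypothesis (i) then promotes both to $P_X^i\in\mathcal{A}_{c+\varepsilon}^p$, and convexity of $\mathcal{A}_{c+\varepsilon}^p$ (hypothesis (ii)) gives $\lambda P_X^1+(1-\lambda)P_X^2\in\mathcal{A}_{c+\varepsilon}^p$. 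The key algebraic identity
\begin{equation*}
(\lambda\mathbb{P}^1+(1-\lambda)\mathbb{P}^2)\circ X^{-1}=\lambda P_X^1+(1-\lambda)P_X^2
\end{equation*}
identifies this mixture with the $X$-law under the convex combination, yielding $R_{\mathbb{A}}(p,X,\lambda\mathbb{P}^1+(1-\lambda)\mathbb{P}^2)\leq c+\varepsilon$; letting $\varepsilon\downarrow 0$ gives (QCo).

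The converse implications (M1)--(M3), (C) are a direct read-off from the definition $\mathcal{A}_m^p=\{Q_X\in\mathcal{P}(\mathbb{R}) : R(p,X,\mathbb{Q})\leq m\}$: each axiom on $R$ immediately implies the corresponding closure property of the sublevel set. For (M1), if $q\leq p$ and $R(p,X,\mathbb{Q})\leq m$, then (1Mon) gives $R(q,X,\mathbb{Q})\leq m$; for (M2) and (M3) the analogous invocation of (2Mon) or (3Mon) transfers membership from $(X,\mathbb{P})$ to $(Y,\mathbb{P})$ or from $\mathbb{P}^1$ to $\mathbb{P}^2$; for (C), given two elements of $\mathcal{A}_m^p$ witnessed by $(X,\mathbb{Q}^1),(X,\mathbb{Q}^2)$ with the same $X$, (QCo) applied to $R(p,X,\cdot)$ forces $R(p,X,\lambda\mathbb{Q}^1+(1-\lambda)\mathbb{Q}^2)\leq m$, so the convex combination of the two distributions $\lambda \mathbb{Q}^1\!\circ\!X^{-1}+(1-\lambda)\mathbb{Q}^2\!\circ\!X^{-1}$ lies in $\mathcal{A}_m^p$. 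The main obstacle throughout is to keep the quantifiers and the $\Omega$-to-$\mathbb{R}$ pushforward straight, especially in (c)/(C) where the linearity of $\mathbb{Q}\mapsto \mathbb{Q}\circ X^{-1}$ is what allows quasiconvexity on $\mathcal{P}(\Omega)$ to correspond to convexity of the trace sets in $\mathcal{P}(\mathbb{R})$; everything else is a mechanical translation between infima and level-set inclusions.
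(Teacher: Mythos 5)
Your proof is correct and follows essentially the same route as the paper: the forward items are handled by translating each hypothesis into an inclusion of the level sets $\{m : P_X \in \mathcal{A}_m^p\}$ and taking infima, part (c) uses the identical $\varepsilon$-approximation through $\mathcal{A}_{c+\varepsilon}^p$ together with the linearity of the pushforward $\mathbb{Q}\mapsto \mathbb{Q}\circ X^{-1}$, and the converse items are the same direct read-off from the definition of $\mathcal{A}_m^p$ (which the paper simply declares straightforward). Your explicit mention of the identity $(\lambda\mathbb{P}^1+(1-\lambda)\mathbb{P}^2)\circ X^{-1}=\lambda P_X^1+(1-\lambda)P_X^2$ and of the need for a common witness $X$ in (C) is a welcome touch of extra care, but does not change the argument.
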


\begin{proof}
(m1) Let $q\leq p$, by assumption $\{m\in \mathbb{R}\mid P\in \mathcal{A}%
_{m}^{p}\}\subseteq \{m\in \mathbb{R}\mid P\in \mathcal{A}_{m}^{q}\}$ .
Hence $R_{\mathbb{A}}(p,X,\mathbb{P})=\inf \{m\in \mathbb{R}\mid
T_{m}P_{X}\in \mathcal{A}^{p}\}\geq \inf \{m\in \mathbb{R}\mid T_{m}P_{X}\in
\mathcal{A}^{q}\}=R_{\mathbb{A}}(q,X,\mathbb{P})$. Similarly for (m2) and
(m3). \newline
(c) We fix $(p,X)\in \mathbb{R}\times \mathcal{L}(\Omega ,\mathcal{F})$ and
consider the map $R_{\mathbb{A}}(p,X,\cdot )$. We want to show that the
sublevels of this map are convex. Let $B_{a}=\{\mathbb{Q}\in \mathcal{P}%
(\Omega )\mid R_{\mathbb{A}}(p,X,\mathbb{Q})\leq a\}$ and $\mathbb{P}^{1},%
\mathbb{P}^{2}\in B_{a}$. Assume w.l.o.g. that $a\geq M:=R_{\mathbb{A}}(p,X,%
\mathbb{P}^{1})\geq R_{\mathbb{A}}(p,X,\mathbb{P}^{2})$. Fix any $%
\varepsilon >0.$ Then there exist $M^{i}\leq M+\varepsilon $ $(i=1,2)$ such
that $P_{X}^{i}\in \mathcal{A}_{M^{i}}^{p}$. Since $\mathcal{A}%
_{M^{i}}^{p}\subseteq \mathcal{A}_{M+\varepsilon }^{p}$ and $\mathcal{A}%
_{M+\varepsilon }^{p}$ is convex, we deduce that $\lambda
P_{X}^{1}+(1-\lambda )P_{X}^{2}\in \mathcal{A}_{M+\varepsilon }^{p}$ for any
$\lambda \in (0,1).$ Then $R_{\mathbb{A}}(p,X,\lambda \mathbb{P}%
^{1}+(1-\lambda )\mathbb{P}^{2})=\inf \{m\mid \lambda P_{X}^{1}+(1-\lambda
)P_{X}^{2}\in \mathcal{A}_{m}^{p}\}\leq M+\varepsilon $. As this holds for
any $\varepsilon >0$ we obtain $R_{\mathbb{A}}(p,X,\lambda \mathbb{P}%
^{1}+(1-\lambda )\mathbb{P}^{2})\leq M\leq a$. \newline
Items (M1-2-3) and (C) are straightforward consequences of the definitions.
\end{proof}

\bigskip

We are interested in possible declinations of the family $\mathbb{A}=\{%
\mathcal{A}_{m}^{p}\}_{p,m\in \mathbb{R}}$ which leads to different types of
behavior with respect to cash addition. The following Proposition fully
characterizes those Cross-Law-Invariant maps $R$ that satisfy either (CA) or
(Aff) or (DI).

\begin{proposition}
\label{acceptance:cash} Consider a family $\{\mathcal{A}_{m}^{p}\}_{p,m\in
\mathbb{R}}$ contained in $\mathcal{P}(\mathbb{R})$ and $R_{\mathbb{A}}:%
\mathbb{R}\times \mathcal{L}(\Omega ,\mathcal{F})\times \mathcal{P}(\Omega
)\rightarrow \overline{\mathbb{R} }$ as defined in (\ref{capital}).

\begin{enumerate}
\item[(CA)] If $\mathcal{A}_{m}^{p}=T_{-m}\mathcal{A}^{p}=\{T_{-m}P\mid P\in
\mathcal{A}^{p}\}$ , for a given family $\{\mathcal{A}^{p}\}_{p\in \mathbb{R}%
}\subseteq \mathcal{P}(\mathbb{R})$, then
\begin{equation*}
R_{\mathbb{A}}(p,X,\mathbb{P})=\inf \{m\mid T_{m}P_{X}\in \mathcal{A}^{p}\},
\end{equation*}%
and $R_{\mathbb{A}}$ is (DCA) and hence (CA) and (CLI). \newline
Viceversa take $R:\mathbb{R}\times \mathcal{L}(\Omega ,\mathcal{F})\times
\mathcal{P}(\Omega )\rightarrow \overline{\mathbb{R}}$ satisfying (DCA).
Define $\mathbb{A}=\{T_{-m}\mathcal{A}^{p}\}_{p,m\in \mathbb{R}}$ where $%
\mathcal{A}^{p}=\{Q_{X}\in \mathcal{P}(\mathbb{R})\mid X\in \mathcal{L}%
(\Omega ,\mathcal{F})$ and $R(p,X,\mathbb{Q})\leq 0\}$. Then $R_{\mathbb{A}%
}=R$.

\item[(Aff)] If $\mathcal{A}_{m}^{p}$ satisfies for every $\alpha $, $%
\mathcal{A}_{m}^{p+\alpha }=\mathcal{A}_{m-\alpha }^{p}$ then there exists $%
\beta :\mathcal{P}(\mathbb{R})\rightarrow \overline{\mathbb{R}}$ such that $%
R_{\mathbb{A}}(p,X,\mathbb{P})=p+\beta (P_{X})$ and $R_{\mathbb{A}}$ is
(Aff) and (CLI). \newline
Viceversa take $R:\mathbb{R}\times \mathcal{P}\rightarrow \overline{R}$
satisfying (Aff) and (CLI) and define $\mathbb{A}=\{\mathcal{A}%
_{m}^{p}\}_{p,m}$ where $\mathcal{A}_{m}^{p}=\{Q_{X}\in \mathcal{P}(\mathbb{R%
})\mid X\in \mathcal{L}(\Omega ,\mathcal{F})$ and $R(p,X,\mathbb{Q})\leq m\}$%
. Then $\mathcal{A}_{m}^{p+\alpha }=\mathcal{A}_{m-\alpha }^{p}$ for all $%
\alpha $ and $R_{\mathbb{A}}=R$.

\item[(DI)] If $\mathcal{A}_{m}^{p}=\{Q\in \mathcal{P}(\mathbb{R})\mid
T_{-p}Q\in \mathcal{A}_{m}^{0}\}$ then there exists $\beta :\mathcal{P}(%
\mathbb{R})\rightarrow \overline{\mathbb{R}}$ such that $R_{\mathbb{A}}(p,X,%
\mathbb{P})=\beta (T_{-p}P_{X})$ and $R_{\mathbb{A}}$ is (DI) and (CLI).
\newline
Viceversa take $R:\mathbb{R}\times \mathcal{P}\rightarrow \overline{R}$
satisfying (DI) and (CLI) and define $\mathbb{A}=\{\mathcal{A}%
_{m}^{p}\}_{p,m}$ where $\mathcal{A}_{m}^{0}=\{Q_{X}\in \mathcal{P}(\mathbb{R%
})\mid R(0,X,\mathbb{Q})\leq m\}$ and $\mathcal{A}_{m}^{p}=\{Q\in \mathcal{P}%
(\mathbb{R})\mid T_{-p}Q\in \mathcal{A}_{m}^{0}\}$. Then $R_{\mathbb{A}}=R$.

\end{enumerate}
\end{proposition}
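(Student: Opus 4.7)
The plan is to treat the three cases (CA), (Aff), (DI) in parallel: in each direction I would (i) substitute the assumed structure of $\mathbb{A}$ into the definition $R_{\mathbb{A}}(p,X,\mathbb{P})=\inf\{m\mid P_X\in\mathcal{A}_m^p\}$, (ii) read off an explicit formula for $R_\mathbb{A}$, (iii) verify the cash-invariance axiom together with (CLI) directly from that formula, and (iv) for the converse, check that the acceptance sets extracted from $R$ obey the structural identity and that $R_\mathbb{A}$ recovers $R$. The one fact used repeatedly is $P_{X+\alpha}=T_\alpha P_X$ and the group property $T_\alpha T_\beta=T_{\alpha+\beta}$.

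For part (CA), the computation $P_X\in T_{-m}\mathcal{A}^p \iff T_m P_X\in\mathcal{A}^p$ gives the stated formula. Then if $T_\alpha P_X^1=P_Y^2$, a change of variable $m\mapsto m-\alpha$ in the defining infimum yields $R_\mathbb{A}(p,Y,\mathbb{P}^2)=R_\mathbb{A}(p,X,\mathbb{P}^1)-\alpha$, which is (DCA), hence (CA) and (CLI). Conversely, starting from $R$ satisfying (DCA) and defining $\mathcal{A}^p=\{Q_X\mid R(p,X,\mathbb{Q})\le 0\}$, one uses (DCA) with shift $m$ to see $T_m P_X\in\mathcal{A}^p \iff R(p,X,\mathbb{P})\le m$, whence $R_\mathbb{A}=R$. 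Here the mild subtlety is that the set $\mathcal{A}^p$ is well defined on $\mathcal{P}(\mathbb R)$ rather than on pairs $(X,\mathbb Q)$ precisely because (DCA) implies (CLI), as recorded in Remark \ref{implications}.

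For part (Aff), setting $\alpha=-p$ in $\mathcal{A}_m^{p+\alpha}=\mathcal{A}_{m-\alpha}^p$ gives $\mathcal{A}_m^p=\mathcal{A}_{m-p}^0$; substituting in the infimum and reindexing by $k=m-p$ produces $R_\mathbb{A}(p,X,\mathbb{P})=p+\beta(P_X)$ with $\beta(Q):=\inf\{k\mid Q\in\mathcal{A}_k^0\}$. From this formula (Aff) and (CLI) are immediate. Conversely, (CLI) makes $\mathcal{A}_m^p=\{Q_X\mid R(p,X,\mathbb{Q})\le m\}$ well defined, (Aff) translates into $\mathcal{A}_m^{p+\alpha}=\mathcal{A}_{m-\alpha}^p$ by one line, and $R_\mathbb{A}=R$ follows from $\inf\{m\mid R(p,X,\mathbb P)\le m\}=R(p,X,\mathbb P)$.

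For part (DI), the assumption $\mathcal{A}_m^p=\{Q\mid T_{-p}Q\in\mathcal{A}_m^0\}$ gives directly $R_\mathbb{A}(p,X,\mathbb{P})=\beta(T_{-p}P_X)$ with the same $\beta$ as above; (DI) then follows from $T_{-(p+\alpha)}P_{X+\alpha}=T_{-(p+\alpha)}T_\alpha P_X=T_{-p}P_X$, and (CLI) is immediate. For the converse, one defines $\mathcal{A}_m^0$ from $R(0,\cdot,\cdot)$ using (CLI) and then sets $\mathcal{A}_m^p$ by the prescribed translate; since $T_{-p}P_X=P_{X-p}$, the membership $T_{-p}P_X\in\mathcal{A}_m^0$ is $R(0,X-p,\mathbb{P})\le m$, and (DI) rewrites this as $R(p,X,\mathbb{P})\le m$, giving $R_\mathbb{A}=R$.

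I do not expect any genuine obstacle here; the whole statement is an exercise in translating an algebraic property of $R$ into an algebraic property of the family $\mathbb{A}$ and back, using only $P_{X+\alpha}=T_\alpha P_X$ and the definitions. The only point that requires some care is to notice that in each case (CLI) is what lets us legitimately index acceptance sets by elements of $\mathcal{P}(\mathbb{R})$ rather than by pairs $(X,\mathbb{Q})$, so that the two constructions are mutually inverse; this is what distinguishes (CA)+(CLI)$=$(DCA) from (CA) alone and is already flagged in Remark \ref{implications}.
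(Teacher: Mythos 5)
Your proposal is correct and follows essentially the same route as the paper's (much terser) proof: substitute the assumed structure of $\mathbb{A}$ into $R_{\mathbb{A}}(p,X,\mathbb{P})=\inf\{m\mid P_X\in\mathcal{A}_m^p\}$, reindex the infimum using $P_{X+\alpha}=T_\alpha P_X$, and for the converses use the identity $\inf\{m\mid R(p,X,\mathbb{P})\leq m\}=R(p,X,\mathbb{P})$ together with (CLI) to make the acceptance sets well defined on $\mathcal{P}(\mathbb{R})$. Your explicit verification of (DCA) in the forward direction of (CA) and your remark on the role of (CLI) are details the paper leaves implicit, but they introduce nothing new in approach.
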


\begin{proof}
(CA): the first implication is straightforward. For the viceversa notice
that $R_{\mathbb{A}}(p,X,\mathbb{P})=\inf\{m\mid P_X\in T_{-m}\mathcal{A}%
^p\}= \inf\{m\mid T_mP_X\in \mathcal{A}^p\}= \inf\{m\mid R(p,X+m,\mathbb{P}%
)\leq 0\}= \inf\{m\mid R(p,X,\mathbb{P})\leq m\}=R(p,X,\mathbb{P})$. \newline
(Aff): we show the existence of $\beta$ by observing $R_{\mathbb{A}}(p,X,%
\mathbb{P})=\inf\{m\mid P_X\in \mathcal{A}^p_m\}=p+\inf\{m-p\mid P_X\in
\mathcal{A}^p_m\}=p+\inf\{m\mid P_X\in \mathcal{A}^p_{m+p}\}=p+\inf\{m\mid
P_X\in \mathcal{A}^0_{m}\}$ and therefore $\beta(P_X):=\inf\{m\mid P_X\in
\mathcal{A}^0_{m}\}$. The viceversa is similar to the case (CA). \newline
(DI): both implications follows as in the previous cases.
\end{proof}

\begin{remark}
\label{delbaen}A particular case of (DCA) is when the set $\mathcal{A}^{p}$
in $\mathcal{A}_{m}^{p}:=T_{-m}\mathcal{A}^{p}$ is independent from $p$ (in
which case we may set $\mathcal{A}_{m}^{p}=T_{-m}\mathcal{A}^{0}$ ). Then $%
R_{\mathbb{A}}(p,X,\mathbb{P})=$ $R_{\mathbb{A}}(0,X,\mathbb{P})$, for any $%
p\in \mathbb{R}$, and this corresponds, as mentioned in the Introduction, to
the intuition proposed in the original paper by Delbaen et al. \cite{ADEHb}
that bygones are bygones.
\end{remark}


%

\section{Model Risk and Intrinsic Risk}

In Section \ref{interpretation:quasiconvex} (resp. \ref{secVR}) we develop
the two approaches sketched in the Introduction. In Section \ref{secVR} we
will introduce intrinsic risk maps which constitute particular $V$\&$R$
measures.

\subsection{Use Models to Test Claims \label{interpretation:quasiconvex}}

Here we adopt the Knightian uncertainty point of view. We consider a set $%
\mathcal{M}\subseteq \mathcal{P}(\Omega )$ of probability measures $\mathbb{Q%
}$ on $(\Omega ,\mathcal{F})$, each representing a possible pricing rule,
and for a given $X\in \mathcal{L}(\Omega ,\mathcal{F})$ the corresponding set%
\begin{equation*}
\mathcal{M}_{X}:=\left\{ Q_{X}\in \mathcal{P(\mathcal{B}}(\mathbb{R})%
\mathcal{)}\mid Q_{X}=\mathbb{Q}(X\leq x)\text{, }\mathbb{Q}\in \mathcal{M}%
\right\}
\end{equation*}%
of associated probability distribution $Q_{X}$ on $(\mathbb{R},\mathcal{B}(%
\mathbb{R}))$. For example, $\mathcal{M}$ could be a set of calibrated
martingale measures, i.e. those induced by a fixed set of benchmark
contingent claims $F^{i}$ each having initial cost $F_{0}^{i}$%
\begin{equation*}
\mathcal{M}=\{\mathbb{Q}\in \mathcal{P}(\Omega )\mid E_{\mathbb{Q}}[S]=S_{0}%
\text{ and }E_{\mathbb{Q}}[F^{i}]=F_{0}^{i}\text{, }i=1,...,N\}.
\end{equation*}

In this approach we assume that we have a criterion to asses the correctness
of our selection, by assuming the existence of a model-risk function $A:%
\mathcal{P}(\Omega )\times \mathcal{L}(\Omega ,\mathcal{F})\rightarrow
\overline{\mathbb{R}}$ which asses the risk (or level of ambiguity) in the
choice of a probability $\mathbb{Q}$, whenever we are modelling a random
payoff $X$. A small value of $A(\mathbb{Q},X)$ means that we are quite
confident in our choice. We proceed in four steps.

\begin{itemize}
\item[(i)] We \textquotedblleft test\textquotedblright\ the claim $f(X)$
over the set $\mathcal{M}$ under the constraint $A(\mathbb{Q},X)\leq a$ and
obtain a Value function.
\end{itemize}

\begin{definition}
\label{defPrimal}Let $X\in \mathcal{L}(\Omega ,\mathcal{F}),$ $\mathcal{M}%
\subseteq \mathcal{P}(\Omega )$, $a\in \mathbb{R}$ and $A:\mathcal{P}(\Omega
)\times \mathcal{L}(\Omega ,\mathcal{F})\rightarrow \overline{\mathbb{R}}$.
Define the map $V_{A}:\mathbb{R}\times \mathcal{L}(\Omega ,\mathcal{F}%
)\times C_{b}\rightarrow \overline{\mathbb{R}}$%
\begin{equation*}
V_{A}(a,X;f):=\sup_{\mathbb{Q}\in \mathcal{M}}\left\{ E_{\mathbb{Q}}\left[
f(X)\right] \mid A(\mathbb{Q},X)\leq a\right\} =\sup_{\mathbb{Q}\in \mathcal{%
M}}\left\{ \int f(X)d\mathbb{Q}\mid A(\mathbb{Q},X)\leq a\right\} .
\end{equation*}
\end{definition}

\noindent\textit{Here }$a$\textit{\ and }$X$\textit{\ are given and we test
the price of the claim }$f(X)$\textit{\ over the set }$\mathcal{M}$ (compare
with Definition \ref{defdual}). We will omit the dependence of $V_{A}$ from $%
A$, whenever no confusion may arise.

\begin{remark}
\label{remPrimal}(compare with Remark \ref{remdual} and equation (\ref{123}%
)). In many cases there might exist $\widetilde{A}:\mathcal{M}%
_{X}\rightarrow \mathbb{R}\cup \left\{ \infty \right\} $ such that $A(%
\mathbb{Q},X)=\widetilde{A}(\mathbb{Q}\circ X^{-1}).$ If this is the case we
reduce the problem to
\begin{equation}
V(a,X;f)=\sup_{Q\in \mathcal{M}_{X}}\left\{ \int fdQ\mid \widetilde{A}%
(Q)\leq a\right\} .  \label{V}
\end{equation}
\end{remark}

\noindent The value
\begin{equation*}
v=V(a,X;f)
\end{equation*}%
is the maximum value of the contingent claim $f$ on the underlying $X,$ for
the level $a$ of model-risk in the choice of $Q$, i.e. is the best (seller)
price of the claim $f(X)$ relative to all possible choices of pricing
measures under the constraint that the model risk is less than or equal to $%
a $.

\begin{itemize}
\item[(ii)] The Intrinsic Model Risk.
\end{itemize}

\noindent By defining the generalized inverse of $V(\cdot ,X;f)$ we obtain:%
\begin{equation*}
a=V^{-1}(v,X;f)=\inf \{s\in \mathbb{R}\mid V(s,X;f)\geq v\}
\end{equation*}%
which represents the minimum model risk one has to accept relative to a
claim $f$ on $X,$ having price larger than $v.$ In other words, $a$ is the
smallest model risk the decision maker is forced to accept in order to find
a pricing model that attributes to $f(X)$ the price $v$.

\begin{itemize}
\item[(iii)] The Indirect Model-Risk function.
\end{itemize}

If a pricing model $\mathbb{Q\in }\mathcal{P}(\Omega )$ is determined, the
quantity%
\begin{equation*}
V^{-1}\left( \int fdQ_{X},X;f\right) =\inf \left\{ s\in \mathbb{R}\mid
V(s,X;f)\geq \int fdQ_{X}\right\}
\end{equation*}%
is the risk associated to the choice of the distribution $Q_{X}$, induced by
$X$, for pricing the particular claim $f$. Let $K\subseteq C_{b}$ and let%
\begin{equation*}
\alpha _{K}(Q_{X}):=\sup_{f\in K}V^{-1}\left( \int fdQ_{X},X;f\right)
\end{equation*}%
be the maximum (w.r.to $f\in K)$ model risk associated to $\mathbb{Q}$,
given the underlying $X.$ We then see that starting from the a priori given
model risk function $\widetilde{A}$ we end up with another map $\alpha _{K}:%
\mathcal{M}_{X}\rightarrow \overline{\mathbb{R}}$\ induced by $\widetilde{A}$
and $K$, which can be interpreted as the \textquotedblleft Indirect Model
Risk\textquotedblright\ function. \newline

\begin{itemize}
\item[(iv)] Duality.
\end{itemize}

The natural problem now is to find conditions on the set $K$ for which $%
\alpha _{K}=\widetilde{A}.$ The solution is given by the following result,
a\ reformulation of Proposition \ref{propvolle} in Appendix B.

\begin{proposition}
\label{propA}Let $\widetilde{A}:\mathcal{P}(\mathbb{R})\rightarrow \overline{%
\mathbb{R}}$ be quasi-convex, $\preccurlyeq _{1}$-monotone decreasing and $%
\sigma (\mathcal{P}(\mathbb{R}),C_{b}(\mathbb{R}))$-lsc and let $X\in
\mathcal{L}(\Omega ,\mathcal{F}).$ Then
\begin{equation*}
\widetilde{A}(Q_{X})=\sup_{f\in C_{b}^{-}}V_{\widetilde{A}}^{-1}\left( \int
fdQ_{X},X;f\right) =\alpha _{C_{b}^{-}}(Q_{X}).
\end{equation*}
\end{proposition}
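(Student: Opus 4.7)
The plan is to prove $\widetilde{A}(Q_X) = \alpha_{C_b^-}(Q_X)$ by establishing the two inequalities separately, with the second being substantially more delicate. Both sides will be re-expressed through the sublevel sets $\mathcal{A}_c := \{Q \in \mathcal{P}(\mathbb{R}) \mid \widetilde{A}(Q) \leq c\}$, so that the standing hypotheses on $\widetilde{A}$ translate into convexity, weak$^*$-closedness, and upward-closedness of $\mathcal{A}_c$ with respect to first-order stochastic dominance, which is exactly the setting in which a monotone Hahn--Banach separation yields a duality in the class $C_b^-$ of decreasing bounded continuous functions.

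For the easy inequality $\alpha_{C_b^-}(Q_X) \leq \widetilde{A}(Q_X)$, I fix $f \in C_b^-$ and set $s := \widetilde{A}(Q_X)$. Then $Q_X$ trivially belongs to the competing set $\{Q \mid \widetilde{A}(Q) \leq s\}$ defining $V_{\widetilde{A}}(s, X; f)$, so $V_{\widetilde{A}}(s, X; f) \geq \int f\,dQ_X$ and hence $V_{\widetilde{A}}^{-1}(\int f\,dQ_X, X; f) \leq s = \widetilde{A}(Q_X)$. Taking the supremum over $f \in C_b^-$ yields the claimed bound.

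For the reverse inequality I argue by contradiction: assume $\alpha_{C_b^-}(Q_X) < c < \widetilde{A}(Q_X)$, so that $Q_X \notin \mathcal{A}_c$. Using the three assumptions on $\widetilde{A}$, the set $\mathcal{A}_c$ is convex (quasi-convexity), $\sigma(\mathcal{P}(\mathbb{R}), C_b(\mathbb{R}))$-closed (lsc), and upward-closed with respect to $\preccurlyeq_1$ (monotonicity). A monotone quasi-convex separation, exactly in the form of Proposition \ref{propvolle} in Appendix B, then delivers a decreasing $g \in C_b^-$ and $\gamma \in \mathbb{R}$ with $\int g\,dQ_X > \gamma \geq \sup_{Q \in \mathcal{A}_c} \int g\,dQ = V_{\widetilde{A}}(c, X; g)$. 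Since $s \mapsto V_{\widetilde{A}}(s, X; g)$ is non-decreasing in $s$, I obtain $V_{\widetilde{A}}(s, X; g) < \int g\,dQ_X$ for every $s \leq c$, and therefore $V_{\widetilde{A}}^{-1}(\int g\,dQ_X, X; g) \geq c > \alpha_{C_b^-}(Q_X)$, contradicting the definition of $\alpha_{C_b^-}$.

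The main obstacle is the monotone separation step: plain Hahn--Banach in the dual pair $(C_b(\mathbb{R}), \mathcal{P}(\mathbb{R}))$ only guarantees a separating $g \in C_b(\mathbb{R})$, whereas Proposition \ref{propA} requires $g$ in the smaller class $C_b^-$. It is precisely the upward-closedness of $\mathcal{A}_c$ in the stochastic dominance order that rules out any increasing component of a separating functional and thus restricts admissible separators to decreasing continuous bounded functions. Since this refinement is already the content of the quasi-convex duality lemma in Appendix B, the present statement is its reformulation in the language of $V_{\widetilde{A}}$ and its generalised inverse, and invoking Proposition \ref{propvolle} is what converts the one-line identity $V_{\widetilde{A}}^{-1}(\int g\,dQ_X, X; g) \geq c$ into the full duality.
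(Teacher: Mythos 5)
Your argument is correct and coincides with the paper's: the paper proves Proposition \ref{propA} simply by declaring it a reformulation of Proposition \ref{propvolle} in Appendix B, and your two-inequality decomposition --- the easy bound $\alpha_{C_b^-}(Q_X)\le\widetilde{A}(Q_X)$ from feasibility of $Q_X$ at level $\widetilde{A}(Q_X)$, plus the separation of $Q_X$ from the convex, weak$^*$-closed, $\preccurlyeq_1$-monotone sublevel set by a decreasing $g\in C_b^-$ --- is exactly the content of that cited duality. The one caveat is that the monotone separation step you defer to Proposition \ref{propvolle} is the entire substance of the result (and that proposition is stated as the final representation rather than as a separation lemma), so your proof is an unpacking of the citation rather than an independent derivation; this matches the paper, which offers no further argument.
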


This also shows that whenever $K\subseteq C_{b}^{-}$ then the indirect model
risk function is less conservative than $\widetilde{A},$ i.e. $\alpha
_{K}\leq \widetilde{A}$.

\subsection{Use Claims to Test Models\label{secVR}}

In this section we explain our approach that constitutes one of the main
contributions of this paper. It can be considered as the dual formulation of
the situation described in Section \ref{interpretation:quasiconvex} and the
presentation will intentionally follow the analogous four steps of the
previous section. Given a position $X\in \mathcal{L}(\Omega ,\mathcal{F})$
and a probability $\mathbb{Q}\in \mathcal{P}$, we look at all possible
prices $E_{\mathbb{Q}}[f(X)],$ for $f$ belonging to a subset $K$ of $%
C^{0}:=C^{0}(\mathbb{R})$ the space of continuous functions on $\mathbb{R}$.
The idea is to use the claims in $K$, or in the set
\begin{equation*}
K_{X}:=\left\{ F\in \mathcal{L}(\Omega ,\mathcal{F})\mid F=f(X)\text{, }f\in
K\right\} ,
\end{equation*}%
to test the pricing rule $\mathbb{Q}$.

\bigskip

In this approach we assume the existence of a map $\varphi :K\times \mathcal{%
L}(\Omega ,\mathcal{F})\rightarrow \overline{\mathbb{R}}$, where
$\varphi (f,X)$ assigns the \textit{risk reduction} the agent will
benefit by introducing a derivative $f(X)$ to cover the losses of
$X$. Such function $\varphi $ has the analogue role of the map $A$ introduced in Section \ref%
{interpretation:quasiconvex}, but a different interpretation (see
the examples below). Indeed $\varphi (f,X)$ will determine all
claims $f\in K$ having at
most the same level of risk reduction and use these claims to test $\mathbb{Q%
}$. The nomenclature `risk reduction' relates to the fact that we are
looking to the effects that additional derivatives have on the overall risk.

\begin{example}
\label{primoesempio} Some examples can be easily built up considering a
classical risk measure $\rho :\mathcal{L}(\Omega ,\mathcal{F})\rightarrow
\overline{\mathbb{R}}$, namely:
\begin{eqnarray*}
\varphi (f,X) &=&-\rho (f(X));\quad \quad \quad \quad \quad \varphi
(f,X)=\rho (X-f(X)); \\
\varphi (f,X) &=&\rho (X)-\rho (f(X));\quad \quad \varphi (f,X)=-\rho
(f(X)-X).
\end{eqnarray*}
\end{example}

In this examples the choice of a specific risk measure could be strongly
related to the knowledge of a reference probability if $\rho =\rho _{\mathbb{%
P}}$. If we do not want to rely on $\mathbb{P}$ nor on $X$ natural choices
for $\varphi $ are:
\begin{eqnarray}
\varphi (f,X) &=&-\inf_{x\in \mathbb{R}}\{x-f(x)\}=\sup_{x\in \mathbb{R}%
}\{f(x)-x\}  \label{1} \\
\varphi (f,X) &=&\inf_{x\in \mathbb{R}}\{f(x)-x\}  \label{2} \\
\varphi (f,X) &=&\inf_{x\in \mathbb{R}}\{f(x)\}  \label{3}
\end{eqnarray}%
which assign the risk reduction led by selling/buying $f$ in the worst case
scenario, whatever underlying we are considering and independently from the
reference probability $\mathbb{P}$. The risk reductions defined in (\ref{1})
(\ref{2}) and (\ref{3}) satisfy the following condition:
\begin{equation}
\varphi (f,X)=\varphi (f,Y),\text{ for all }f\in K\text{ and }X,Y\in
\mathcal{L}(\Omega ,\mathcal{F}).  \label{equal1}
\end{equation}%
Since $K$ could be very small, (\ref{equal1}) may be weaker than requiring
that $\varphi (f,X)$ is independent from $X$. We will explain better this
fact in the examples provided in Section \ref{examples}.

Similarly to the previous section, we proceed in four steps.

\begin{itemize}
\item[(i)] We test $Q_{X}$ over the set $K$ under the constraint $\varphi
(f,X)\geq r$ and obtain a Price function.
\end{itemize}

\begin{definition}
\label{defdual}Let $X\in \mathcal{L}(\Omega ,\mathcal{F})$, $K\subseteq
C^{0},$ $r\in \mathbb{R}$ and $\varphi :\mathcal{L}(\Omega ,\mathcal{F}%
)\rightarrow \overline{\mathbb{R}}$ be a risk reduction. Define the map $\Pi
_{\varphi }:\mathbb{R}\times \mathcal{L}(\Omega ,\mathcal{F})\times \mathcal{%
P}(\Omega )\rightarrow \overline{\mathbb{R}}$
\begin{equation}
\Pi _{\varphi }(r,X;\mathbb{Q}):=\inf_{f\in K}\left\{ E_{\mathbb{Q}}\left[
f(X)\right] \mid \varphi (f,X)\geq r\right\} =\inf_{f\in K}\left\{ \int
fdQ_{X}\mid \varphi (f,X)\geq r\right\}  \label{Pi}
\end{equation}
\end{definition}

\noindent \textit{Here }$r$\textit{\ and }$X$\textit{\ are given and we test
}$Q_{X}$\textit{\ over the set }$K$ (compare with Definition \ref{defPrimal}%
).

\begin{remark}
\label{remdual}(Compare with Remark \ref{remPrimal}) In some cases, for
example when $\varphi (f,X)=-\rho (f(X))$, the function $\varphi $ can be
written as $\varphi (f,X)=\widetilde{\varphi }(f(X))$ with $\widetilde{%
\varphi }:K_{X}\rightarrow \overline{\mathbb{R}}$. In this case
\begin{equation*}
\Pi_{\varphi} (r,X;\mathbb{Q})=\inf_{F\in K_{X}}\left\{ \int
Fd\mathbb{Q}\mid \widetilde{\varphi }(F)\geq r\right\} .
\end{equation*}
\end{remark}

The price
\begin{equation}
p=\Pi_{\varphi} (r,X;\mathbb{Q})  \label{equality:price}
\end{equation}%
corresponds to the cheapest $\mathbb{Q}-$price of any derivative
(on $X$) in the class $K$ which guarantees a reduction (at least
equal to $r$) of the level of risk. When $p=\Pi_{\varphi}
(r,X;\mathbb{Q})$ then there exists some $f(X)\in K_{X}$ having
$\mathbb{Q}$-price almost equal to $p$ and risk reduction at
least equal to $r$. Notice that from the buyer point of view, an underlying $%
X$ bought at price $p_{2}$ is riskier than the same $X$ bought at the
smaller price $p_{1}<p_{2}$.

\begin{itemize}
\item[(ii)] The Intrinsic Risk:
\end{itemize}

\begin{definition}
\label{defR}The map $R_{\varphi}:\mathbb{R}\times \mathcal{L}(\Omega ,\mathcal{F}%
)\times \mathcal{P}(\Omega )\rightarrow \overline{\mathbb{R}}$ is
the
generalized left inverse of $\Pi_{\varphi} (\cdot ,X;\mathbb{Q})$:%
\begin{equation}
R_{\varphi}(p,X;\mathbb{Q}):=\Pi_{\varphi}
^{-1}(p,X;\mathbb{Q})=:\sup \{s\in \mathbb{R}\mid \Pi_{\varphi}
(s,X;\mathbb{Q})\leq p\},  \label{R}
\end{equation}%
with $R_{\varphi}(p,X;\mathbb{Q})=-\infty $ for $p<\inf_{f\in
K}E_{\mathbb{Q}}[f(X)]$ and $R_{\varphi}(p,X;\mathbb{Q})=+\infty $
for $p>\sup_{f\in K}E_{\mathbb{Q}}[f(X)]$.
\end{definition}

 We will omit the dependence of $\Pi _{\varphi}, R_{\varphi}$
from $\varphi $, whenever no confusion may arise.

Suppose that we buy $X$ at a price $p$ and that $\mathbb{Q}$ is the correct
pricing rule. We set: $r:=\sup \{s\in \mathbb{R}\mid \Pi (s,X;\mathbb{Q}%
)\leq p\}$. Then $r$ is the maximal risk reduction $s$ for which $\Pi (s,X;%
\mathbb{Q})\leq p$, i.e. the maximal risk reduction that allows to find a
claim $f(X)$ having $\mathbb{Q}$-price not larger than $p$ and risk
reduction at least equal to $r$. Therefore $r=R(p,X;\mathbb{Q})$ is the
intrinsic risk of acquiring $X$ at price $p$, assuming that $Q$ is the
correct pricing rule, and\emph{\ }it corresponds, for particular functions%
\emph{\ }$\varphi (f,X)$\emph{,} to the maximal risk reduction we would
obtain buying $X$ and selling a derivative $f(X)$ with a price at most equal
to $p$ (see examples in sections \ref{esempio:1} and \ref{esempio:tail}).

\begin{remark}
\label{reduction}The previous interpretation can be more precisely explained
as follows. The maximal risk reduction an agent may obtain by selling
derivatives (on $X)$ with $\mathbb{Q}$-price smaller than $p$ is given by
the function
\begin{equation*}
H(p,X;\mathbb{Q}):=\sup_{F\in K_{X}}\left\{ \widetilde{\varphi }(F)\mid E_{%
\mathbb{Q}}[F]\leq p\right\} ,
\end{equation*}%
with $\widetilde{\varphi }$ given in Remark \ref{remdual}. Then Proposition %
\ref{propHH} will show that $R(\cdot ,X;\mathbb{Q})$ is the right-continuous
version of $H(\cdot ,X;\mathbb{Q})$.
\end{remark}

\begin{example}[A simple case.]
If we assume that $K=C_{+}^{0}=\{f\in C^{0}\mid f\text{ increasing }\}$ and $%
\varphi (f(X))=\inf_{x\in \mathbb{R}}(f(x)-x)$ then $\Pi (r,X,\mathbb{Q})=E_{%
\mathbb{Q}}[X]+r,$ with the abuse of notation that if $E_{\mathbb{Q}}[X]=\pm
\infty $ then $\Pi (r,X,\mathbb{Q})=\pm \infty $. Thus
\begin{equation*}
R(p,X;\mathbb{Q})=p-E_{\mathbb{Q}}[X]
\end{equation*}%
and, as already mentioned, in this case the intrinsic risk $R(p,X,\mathbb{Q}%
) $ of acquiring $X$ at price $p$ is exactly the discrepancy between $p$ and
$E_{\mathbb{Q}}[X]$, assuming that $\mathbb{Q}$ is the pricing rule.
\end{example}

\begin{itemize}
\item[(iii)] The Indirect Risk Reduction
\end{itemize}

If a contingent claim $f(X)$ is determined, then the quantity%
\begin{eqnarray*}
R(E_{\mathbb{Q}}\left[ f(X)\right] ,X;\mathbb{Q})& := & \Pi ^{-1}(E_{\mathbb{Q}}%
\left[ f(X)\right] ,X;\mathbb{Q})
\\ & = & \sup \{s\in \mathbb{R}\mid \Pi (s,X;%
\mathbb{Q})\leq E_{\mathbb{Q}}\left[ f(X)\right] \}
\end{eqnarray*}
is the risk reduction we face buying $X$ at the price $E_{\mathbb{Q}}\left[
f(X)\right] $. If $\mathcal{M}\subseteq ca(\Omega )$ then
\begin{equation*}
\Psi _{\mathcal{M}}(f(X)):=\inf_{\mathbb{Q}\in \mathcal{M}}R(E_{\mathbb{Q}%
}[f(X)],X;\mathbb{Q})
\end{equation*}%
represents the smallest (with respect to all $\mathbb{Q}\in \mathcal{M}$)
risk reduction associated to the claim $f(X)$.

\begin{itemize}
\item[(iv)] Duality.
\end{itemize}

Such indirect risk reduction\ function $\Psi _{\mathcal{M}}$ should then be
compared with the one we started from. The following proposition, an
immediate consequence of Theorem \ref{thA} in Appendix, shows under which
conditions we might recover $\varphi $ from $\Psi _{\mathcal{M}}$.

\begin{proposition}
\label{dualityK}Assume that $K_{X}\subset \mathcal{L}^{\infty }(\Omega ,%
\mathcal{F})$ is a convex cone, $\sigma (\mathcal{L}^{\infty },ca)$-closed.
If $\varphi :K_{X}\rightarrow \overline{\mathbb{R}}$ is monotone increasing,
quasiconcave and $\sigma (\mathcal{L}^{\infty },ca)$-upper semicontinuous.
Then
\begin{equation*}
\varphi (f(X))=\inf_{\mathbb{Q}\in \mathcal{K}_{1}^{\circ }}R_{\varphi }(E_{%
\mathbb{Q}}[f(X)],X;\mathbb{Q})=\Psi _{\mathcal{K}_{1}^{\circ }}(f(X)),
\end{equation*}%
where $\mathcal{K}_{1}^{\circ }=\{\mu \in ca\mid \mu (\Omega )=1\text{ and }%
\int f(X)d\mu \geq 0\;\forall f\in K\text{ s.t. }f(X)\geq 0\}$. (Compare
with Proposition \ref{propA}).
\end{proposition}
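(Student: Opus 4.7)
The strategy is to invoke the quasi-concave duality result Theorem \ref{thA} in the Appendix (the monotone, quasi-concave, upper semi-continuous analog of Penot--Volle-type duality on a closed convex cone) and then match its dual objects with the set $\mathcal{K}_{1}^{\circ}$ and its dual value with the functional $R_{\varphi}$.

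First I would rewrite $R_{\varphi}(E_{\mathbb{Q}}[F],X;\mathbb{Q})$ in its natural sup form. Unfolding the definition of the generalized left inverse (\ref{R}) and of $\Pi_{\varphi}$ (\ref{Pi}) one obtains
\begin{equation*}
R_{\varphi}(E_{\mathbb{Q}}[F],X;\mathbb{Q})\ =\ \sup\bigl\{\varphi(G):G\in K_{X},\ E_{\mathbb{Q}}[G]\leq E_{\mathbb{Q}}[F]\bigr\},
\end{equation*}
up to an approximation argument that exploits the $\sigma(\mathcal{L}^{\infty},ca)$-upper semi-continuity of $\varphi$ together with the closedness of $K_{X}$. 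From this identity the easy inequality $\inf_{\mathbb{Q}\in\mathcal{K}_{1}^{\circ}}R_{\varphi}(E_{\mathbb{Q}}[F],X;\mathbb{Q})\geq\varphi(F)$ is immediate: for every $\mathbb{Q}$, the choice $G=F$ is admissible in the inner supremum.

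For the reverse inequality I would invoke Theorem \ref{thA}, whose hypotheses (a $\sigma(\mathcal{L}^{\infty},ca)$-closed convex cone $K_{X}$ and $\varphi$ monotone increasing, quasi-concave, u.s.c.) match those assumed here. Its conclusion delivers a dual representation
\begin{equation*}
\varphi(F)\ =\ \inf_{\mu\in\mathcal{D}}\sup\bigl\{\varphi(G):G\in K_{X},\ \textstyle\int G\,d\mu\leq\int F\,d\mu\bigr\},
\end{equation*}
where $\mathcal{D}\subset ca$ is the appropriate cone of dual functionals. Since the $\sigma(\mathcal{L}^{\infty},ca)$-continuous linear functionals on $\mathcal{L}^{\infty}$ are exactly those of the form $F\mapsto\int F\,d\mu$ with $\mu\in ca$, and since the monotonicity of $\varphi$ on the cone $K_{X}$ forces the relevant $\mu$ to be positive on the nonnegative elements of $K_{X}$, one concludes that $\mathcal{D}\subseteq\{\mu\in ca:\int G\,d\mu\geq 0\ \forall G\in K_{X}\text{ with }G\geq 0\}$.

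The final step is normalization: the inner supremum is invariant under positive rescaling $\mu\mapsto c\mu$ (the constraint $\int G\,d\mu\leq\int F\,d\mu$ is homogeneous), so the outer infimum may be restricted to measures with $\mu(\Omega)=1$, discarding the degenerate case $\mu\equiv 0$. The resulting index set is precisely $\mathcal{K}_{1}^{\circ}$, and each inner sup coincides with $R_{\varphi}(E_{\mu}[F],X;\mu)$ by the first step. The main obstacle I expect is this third step: extracting Theorem \ref{thA} in a form that identifies its dual cone exactly as $\{\mu\in ca:\int G\,d\mu\geq 0\ \forall G\in K_{X},\,G\geq 0\}$ rather than as the larger cone of measures positive on all nonnegative elements of $\mathcal{L}^{\infty}$, together with handling the normalization carefully for signed measures (in particular, ensuring that restricting to $\mu(\Omega)=1$ does not artificially inflate the infimum).
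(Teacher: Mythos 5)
Your proposal follows the paper's own route: Proposition \ref{dualityK} is obtained there as an immediate consequence of Theorem \ref{thA}, whose hypotheses are exactly those you list and whose proof supplies both the Hahn--Banach separation identifying the dual cone $\mathcal{K}^{\circ}$ (the polar of $K_{X}\cap \mathcal{L}_{+}^{\infty }$, not of all of $\mathcal{L}_{+}^{\infty }$) and the positive-rescaling normalization to $\mathcal{K}_{1}^{\circ }$. The only imprecision is that $R_{\varphi }(p,\mu )$ equals the right-continuous version $H_{\varphi }^{+}(p,\mu )$ of the supremum you write down rather than that supremum itself; the paper bridges this gap via Propositions \ref{propHH} and \ref{propHR}, using continuity from above of $\Psi _{\varphi }$, which is precisely what your \textquotedblleft approximation argument\textquotedblright\ must amount to.
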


The following theorem, a consequence of Proposition \ref{properties} in
Section \ref{GeneralResults}, shows that the maps $R$ defined in (\ref{R})
are V\&R measures.

\begin{theorem}
\label{Price&Risk}Suppose that $K\subseteq C_{+}^{0}(\mathbb{R})$ and that $%
\varphi :K\times \mathcal{L}(\Omega ,\mathcal{F})\rightarrow \overline{%
\mathbb{R}}$ satisfies (\ref{equal1}). Then $R_{\varphi}$ defined
by (\ref{R}) is a Value\&Risk measure that satisfies
(\textbf{CLI}).

If in addition $K\subseteq \left\{ f\in C_{+}^{0}(\mathbb{R})\mid f\text{
concave}\right\} $ then $R_{\varphi}$ defined by (\ref{R}) satisfies also (\textbf{QCoX%
}).
\end{theorem}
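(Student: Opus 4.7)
The strategy is to verify each axiom of Definition \ref{defi1}, together with (CLI) and (QCoX), directly from the defining formulas for $\Pi_\varphi$ and $R_\varphi$. The crucial structural observation is that assumption (\ref{equal1}) decouples the feasibility constraint from the random variable: the set $\{f\in K\mid\varphi(f,X)\geq s\}$ depends only on $s$. Moreover $E_\mathbb{Q}[f(X)]=\int f\,dQ_X$ depends on $X$ only through its law $Q_X$. These two facts immediately yield (CLI): whenever $P_X^1=P_Y^2$, the two infima defining $\Pi_\varphi(s,X,\mathbb{P}^1)$ and $\Pi_\varphi(s,Y,\mathbb{P}^2)$ coincide term by term, hence $R_\varphi(p,X,\mathbb{P}^1)=R_\varphi(p,Y,\mathbb{P}^2)$. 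Axiom (1Mon) is then immediate from the fact that $\{s\mid\Pi_\varphi(s,X,\mathbb{Q})\leq p\}$ is monotone increasing in $p$.

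For (2Mon) and (3Mon) the hypothesis $K\subseteq C_+^0(\mathbb{R})$ (each $f\in K$ increasing) is what does the work. If $X\leq Y$ then $f(X)\leq f(Y)$ for every $f\in K$, so $E_\mathbb{P}[f(X)]\leq E_\mathbb{P}[f(Y)]$; combined with the $X$-independent feasibility set this gives $\Pi_\varphi(s,X,\mathbb{P})\leq\Pi_\varphi(s,Y,\mathbb{P})$, from which (2Mon) follows by taking left inverses. Likewise, $\mathbb{P}^1\preccurlyeq_X\mathbb{P}^2$ means $P_X^1\preccurlyeq_1 P_X^2$, and first order stochastic dominance yields $\int f\,dP_X^1\leq\int f\,dP_X^2$ for every increasing continuous $f$ (a truncation argument together with the convention $\infty-\infty=-\infty$ handles possibly unbounded $f$), hence $\Pi_\varphi(s,X,\mathbb{P}^1)\leq\Pi_\varphi(s,X,\mathbb{P}^2)$ and (3Mon) after inversion.

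The quasiconvexity assertions follow from a common pattern: for fixed $s$ and $X$, the map $\mathbb{Q}\mapsto\Pi_\varphi(s,X,\mathbb{Q})$ is the infimum of a family of affine functionals of $\mathbb{Q}$, hence concave. If $\mathbb{P}^1,\mathbb{P}^2$ lie in the sublevel set $\{R_\varphi(p,X,\cdot)\leq a\}$, then for every $s>a$ both $\Pi_\varphi(s,X,\mathbb{P}^i)>p$, and concavity forces $\Pi_\varphi(s,X,\lambda\mathbb{P}^1+(1-\lambda)\mathbb{P}^2)>p$ as well, so the sublevel set is convex. This proves (QCo). The (QCoX) argument under the additional assumption that every $f\in K$ is concave is the mirror image: pointwise concavity gives $f(\lambda X_1+(1-\lambda)X_2)\geq\lambda f(X_1)+(1-\lambda)f(X_2)$, hence $E_\mathbb{P}[f(\lambda X_1+(1-\lambda)X_2)]\geq\lambda E_\mathbb{P}[f(X_1)]+(1-\lambda)E_\mathbb{P}[f(X_2)]$, and taking the infimum over the $X$-independent feasible set produces a concavity-in-$X$ lower bound on $\Pi_\varphi$ that powers the same sublevel-set argument.

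The main obstacle I foresee is pedantic rather than conceptual: one must verify all of these inequalities carefully in the presence of the extended-real convention $\infty-\infty=-\infty$ and of the boundary cases in Definition \ref{defR} where $R_\varphi$ is set to $\pm\infty$. In particular, the (3Mon) step requires extending stochastic dominance from bounded test functions to unbounded increasing continuous $f$ via monotone truncation, checking separately the cases where the positive or negative parts of $f(X)$ are non-integrable. Once those technical checks are in place, each axiom reduces to the corresponding structural property of $\Pi_\varphi$ and to the one-line passage from its sublevels to those of $R_\varphi$.
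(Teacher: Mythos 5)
Your proposal is correct and follows essentially the same route as the paper: the paper derives the theorem from Proposition \ref{properties} in Appendix A, which establishes exactly the structural properties of $\Pi_{\varphi}$ you list (monotonicity in $r$, ordering under $X\leq Y$ and under $\preccurlyeq_{X}$ via increasing test functions, concavity in $\mathbb{Q}$ and --- for concave $K$ --- in $X$) and then passes to $R_{\varphi}$ through the same sublevel-set inclusion argument, with (\ref{equal1}) guaranteeing that the feasibility set $\{f\in K\mid\varphi(f,X)\geq r\}$ is independent of $X$ (i.e.\ conditions (\ref{monphi}) and (\ref{fK})). The integrability technicalities you flag are handled in the paper only implicitly via the convention $\infty-\infty=-\infty$, so your treatment is, if anything, slightly more careful on that point.
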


It is worth mentioning that in virtue of Proposition
\ref{properties} (a1) and (a2) the properties (1Mon) and (QCo) for
the map $R$ will hold independently from the properties of $K$ or
$\varphi $. The following Proposition (which proof is postponed in
Appendix A) considers a fixed couple $(X,\mathbb{Q})$ and studies
three properties of $\Pi $ and $R$ with respect to monotonicity,
convex combinations and Minkowski sum of the sets $K$ of testing
claims.

\begin{proposition}\label{dependence:K}
For fixed $X\in \mathcal{L}(\Omega ,\mathcal{F})$ and $Q\in \mathcal{P}%
(\Omega )$, consider $K\subseteq C^{0}$ and denote
\begin{eqnarray*}
\Pi _{K}(r) &:&=\inf_{f\in K}\left\{ E_{\mathbb{Q}}\left[ f(X)\right] \mid
\varphi (f,X)\geq r\right\} , \\
R_{K}(p) &:&=\sup \left\{ s\in \mathbb{R}\mid \Pi _{K}(s)\leq p\right\} .
\end{eqnarray*}%
Let $K^{1},K^{1}\subseteq C^{0}$.

\begin{itemize}
\item[1] If $K^{1}\subseteq K^{2}$ then $\Pi _{K^{1}}\geq \Pi _{K^{2}}$ and $%
R_{K^{1}}\leq R_{K^{2}}$.

\item[2] If $\varphi $ is quasiconcave as a function of $f$ then for any
fixed $\lambda \in \lbrack 0,1]$
\begin{eqnarray}
\Pi _{K^{\lambda }} &\leq &\lambda \Pi _{K^{1}}+(1-\lambda )\Pi _{K^{2}},
\label{PIqco} \\
R_{K^{\lambda }} &\geq &R_{K^{1}}\wedge R_{K^{2}}  \label{Rqco}
\end{eqnarray}%
where
\begin{eqnarray*}
K^{\lambda } & := & \lambda K^{1}+(1-\lambda )K^{2} \\ & =
&\{f^{\lambda }\in C^{0}\mid f^{\lambda}\lambda f^{1}+(1-\lambda
)f^{2}\text{ for }f^{1}\in K^{1}\text{ and }f^{2}\in K^{2}\}.
\end{eqnarray*}

\item[3] If, for each $r\in \mathbb{R}$, $\varphi (f^{1},X)\geq r$ and $%
\varphi (f^{2},X)\geq r$ implies $\varphi (f^{1}+f^{2},X)\geq r$ then:%
\begin{eqnarray*}
\Pi _{K^{1}+K^{2}} &\leq &\Pi _{K^{1}}+\Pi _{K^{2}}, \\
R_{K^{1}+K^{2}}(p) &\geq &\sup_{p_1+p_2=p}\left\{
R_{K^{1}}(p_1)\wedge R_{K^{2}}(p_2)\right\} \text{, }p\in
\mathbb{R}\text{.}
\end{eqnarray*}
\end{itemize}
\end{proposition}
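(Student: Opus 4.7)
My plan is to treat the three items in order, and in each case prove the statement for $\Pi$ first and then derive the statement for $R$ from it using the fact that $r\mapsto \Pi_K(r)$ is monotone non-decreasing (because the constraint set $\{f\in K\mid \varphi(f,X)\ge r\}$ shrinks as $r$ grows, so its infimum is weakly increasing). This monotonicity is what allows me to translate $\Pi_K(s)\le p$ for all $s<s_0$ into $R_K(p)\ge s_0$.

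For item 1, if $K^1\subseteq K^2$ then for every $r$ the feasibility set in the definition of $\Pi_{K^1}(r)$ is contained in that of $\Pi_{K^2}(r)$, so $\Pi_{K^1}(r)\ge \Pi_{K^2}(r)$. Consequently $\{s\mid \Pi_{K^1}(s)\le p\}\subseteq \{s\mid \Pi_{K^2}(s)\le p\}$ and taking suprema gives $R_{K^1}(p)\le R_{K^2}(p)$.

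For item 2, fix $r\in\mathbb{R}$ and take any $f^i\in K^i$ with $\varphi(f^i,X)\ge r$, $i=1,2$. Quasiconcavity of $\varphi$ in $f$ yields $\varphi(\lambda f^1+(1-\lambda)f^2,X)\ge r$, so $\lambda f^1+(1-\lambda)f^2\in K^{\lambda}$ is feasible, and by linearity of the expectation
\[
\Pi_{K^{\lambda}}(r)\le \lambda E_{\mathbb{Q}}[f^1(X)]+(1-\lambda)E_{\mathbb{Q}}[f^2(X)].
\]
Since the two infima on the right can be taken independently over the feasibility sets of $K^1$ and $K^2$, passing to infima gives (\ref{PIqco}). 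For (\ref{Rqco}), let $s<R_{K^1}(p)\wedge R_{K^2}(p)$; by the defining sup and monotonicity of $\Pi$ in $r$, I get $\Pi_{K^i}(s)\le p$ for $i=1,2$, whence by (\ref{PIqco}) $\Pi_{K^{\lambda}}(s)\le \lambda p+(1-\lambda)p=p$, so $R_{K^{\lambda}}(p)\ge s$; letting $s\uparrow R_{K^1}(p)\wedge R_{K^2}(p)$ concludes the argument.

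For item 3, fix $r$ and take feasible $f^i\in K^i$ so that $\varphi(f^i,X)\ge r$. By hypothesis $\varphi(f^1+f^2,X)\ge r$, hence $f^1+f^2\in K^1+K^2$ is feasible for $\Pi_{K^1+K^2}(r)$, and again the separability of the expectation into independent infima yields $\Pi_{K^1+K^2}(r)\le \Pi_{K^1}(r)+\Pi_{K^2}(r)$. To get the inequality on $R$, fix any decomposition $p=p_1+p_2$ and any $s<R_{K^1}(p_1)\wedge R_{K^2}(p_2)$; monotonicity of $\Pi_{K^i}$ in $r$ combined with the definition of $R_{K^i}$ as a supremum gives $\Pi_{K^i}(s)\le p_i$, so $\Pi_{K^1+K^2}(s)\le p_1+p_2=p$, hence $R_{K^1+K^2}(p)\ge s$. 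Letting $s\uparrow R_{K^1}(p_1)\wedge R_{K^2}(p_2)$ and then taking the supremum over decompositions $p_1+p_2=p$ gives the claim.

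The main (minor) obstacle is the translation from the $\Pi$-inequalities to the $R$-inequalities: one must use that $\Pi_K$ is monotone non-decreasing in $r$ to ensure that if a value $s$ lies strictly below the sup defining $R_K(p)$ then $\Pi_K(s)\le p$ outright, rather than only in the limit. Once this is observed, each of the three parts reduces to a direct computation.
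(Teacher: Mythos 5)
Your proof is correct and follows essentially the same route as the paper's: each inequality is first established at the level of $\Pi$ by combining feasible test functions ($\lambda f^{1}+(1-\lambda)f^{2}$, resp.\ $f^{1}+f^{2}$) and splitting the expectation into independent infima, and then transferred to $R$ using the monotonicity of $\Pi_{K}(\cdot)$. The only cosmetic difference is that you make the monotonicity step explicit via an $s<R_{K^{1}}\wedge R_{K^{2}}$ argument, where the paper compares the sublevel sets $\{s\mid \Pi_{K^{i}}(s)\leq p\}$ directly.
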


It is clear that for a larger set of testing claims the price $\Pi $ will
decrease and the risk reduction $R$ will increase. From Item 2, we deduce
that by taking convex combinations of two sets $K^{1}$ and $K^{2}$ the risk
reduction $R_{K^{\lambda }}$ is always larger than the minimum $%
R_{K^{1}}\wedge R_{K^{2}}$ of the two single risk reductions, so that the
operation of taking convex combination is encouraged.

\section{Examples of V\&R measures from Def. \protect\ref{defR}}

\label{examples}

\subsection{Control of unbounded losses of the underlying by releasing
options.}

\label{esempio:1} In this first example we consider the case in which the
underlying $X\in \mathcal{L} (\Omega ,\mathcal{F})$ produces a potentially
unbounded loss, in particular $\inf_{\omega\in\Omega} X(\omega)=-\infty$. We
here consider a fairly general class of approximating test functions
described by a family $\{f_{\alpha}\}_{\alpha\in \mathbb{R} }$ such that

\begin{enumerate}
\item $\{f_{\alpha}\}_{\alpha\in \mathbb{R} }\subset C^0_+$ and $f_{0}(x)=x$
for every $x\in\mathbb{R} $;

\item $f_{\alpha}(x)< f_{\beta}(x)$ for every $\alpha< \beta$ and any $x\in%
\mathbb{R} $;

\item for every $x\in\mathbb{R} $ we have $\lim_{\alpha\to
0^{\pm}}f_{\alpha}(x)=x$;

\item if we set $c(\alpha)=\sup_{x\in\mathbb{R} }\{f_{\alpha}(x)-x\}$ then $%
\lim_{\alpha\to 0^{\pm}}c(\alpha)=0$.
\end{enumerate}

As in (\ref{1}), we choose
\begin{equation*}
\varphi (f_{\alpha },X)=-\inf_{x\in \mathbb{R}}\{x-f_{\alpha }(x)\}=c(\alpha
),
\end{equation*}%
which represents the risk reduction we would benefit by selling $f_{\alpha
}(X)$ jointly to the acquisition of $X$ (independently from the payoff of $X$%
). Equivalently we may interpret $\varphi (f_{\alpha },X)$ as the maximal
benefit we would realize buying $f_{\alpha }$ and selling the underlying.
Notice that it can be easily checked from the properties of the family $%
\{f_{\alpha }\}$ that $c(\cdot )$ is (strictly) increasing in $\alpha $.
\newline
The parameter $\alpha $, which indexes the family, represents the degree of
approximation: the higher $\alpha $ is the higher payoff the derivative $%
f_{\alpha }(X)$ will grant. The identity, i.e. when $\alpha =0$, clearly
corresponds to the case in which the risk completely annihilates by buying
and selling $X$. On the other hand for $\alpha >0$ (resp. $\alpha <0$) we
are considering testing functions which approximate from above (resp. from
below) the identity: therefore the strategy $X-f_{\alpha }(X)<0$ (resp. $%
X-f_{\alpha }(X)>0$) will bring losses (resp. gains) which are controlled by
$c(\alpha )$.

If we write explicitly the function $\Pi$ defined in \eqref{Pi} we obtain
the following formulation
\begin{eqnarray*}
\Pi(r,X;\mathbb{Q}) &=&\inf_{\alpha\in\mathbb{R} }\left\{ E_{\mathbb{Q}}%
\left[ f_{\alpha}(X) \right] \mid c(\alpha)\geq r\right\} \\
&=&\inf_{\alpha\in\mathbb{R} }\left\{ E_{\mathbb{Q}}\left[ f_{\alpha}(X) %
\right] \mid \alpha\geq c^{-1}(r)\right\} \\
&=& E_{\mathbb{Q}}\left[ f_{c^{-1}(r)}(X)\right],
\end{eqnarray*}%
with $c^{-1}$ being the left inverse of $c$.

From Proposition \ref{Price&Risk} we know that $R$ defined by \eqref{R} is a
Value\& Risk map (i.e. $R$ satisfies (1-2-3Mon), (QCo) and (CLI)). Moreover,
from $\varphi (f_{\alpha },X)=c(\alpha ),$ we obtain
\begin{eqnarray}
R(p,X;\mathbb{Q}) &=&\sup \{s\mid E_{\mathbb{Q}}\left[ f_{c^{-1}(s)}(X)%
\right] \leq p\}  \notag \\
&=&\sup \{\varphi (f_{s},X)\mid E_{\mathbb{Q}}\left[ f_{s}(X)\right] \leq
p\},  \label{R:example1}
\end{eqnarray}%
which interpretation is the following: \emph{the intrinsic Risk of acquiring
$X$ at price $p$ if we assume $\mathbb{Q}$ as the correct pricing rule ,
corresponds to the maximal risk reduction we would obtain buying $X$ and
selling a derivative $f_{\alpha }(X)$ with a price at most equal to $p$}.
Indeed implementing the strategy which buys $X$ and sells $f_{\alpha }(X)$,
has initial zero cost and guarantees that potential losses are at most given
by $\varphi (f_{\alpha },X)=c(\alpha )$.

\begin{properties}
Given the above definitions we have the following additional properties.

\begin{enumerate}
\item[P1] If for some $\alpha\in \mathbb{R} $ we have $p=E_{\mathbb{Q}}\left[
f_{\alpha}(X)\right]$ then $R(p,X;\mathbb{Q})=\varphi(f_{\alpha},X)=c(%
\alpha) $.

\smallskip

The proof of this property follows directly from the representation of $R$
given in \eqref{R:example1} and the properties of the family $\{f_{\alpha}\}$%
.

\item[P2] $R(a,a;\mathbb{Q})=0$ for $a\in\mathbb{R} $ and for every $\mathbb{%
Q} \in\mathcal{P}(\Omega)$.

\smallskip

We notice that $f_s(a)\leq a$ if and only if $s\leq 0$ so that $R(a,a;%
\mathbb{Q} )=\sup\{c(s)\mid f_s(a)\leq a\}\leq 0$. Moreover for $s=0$ we
have $c(s)=0$ and hence the thesis.

\item[P3] If $p \lesseqqgtr E_{\mathbb{Q} }[X]$ then $R(p,X;\mathbb{Q} )
\lesseqqgtr 0$. \

\smallskip

To show this last property we recall that for $\alpha>0$ (resp. $\alpha<0$
or $\alpha=0$) we have $f_{\alpha}(x)>x$ (resp. $f_{\alpha}(x)<x$ or $%
f_{\alpha}(x)=x$) for any $x\in\mathbb{R} $ so that $E_{\mathbb{Q}
}[f_{\alpha}(X)]>E_{\mathbb{Q} }[X]$ (resp. $E_{\mathbb{Q}
}[f_{\alpha}(X)]<E_{\mathbb{Q} }[X]$ or $E_{\mathbb{Q} }[f_{\alpha}(X)]=E_{%
\mathbb{Q} }[X]$). Then
\begin{eqnarray*}
R(E_{\mathbb{Q} }[X],X;\mathbb{Q} ) & = & \sup\left\{c(s)\mid E_{\mathbb{Q}
}[f_s(X)]\leq E_{\mathbb{Q} }[X]\right\}=0,
\end{eqnarray*}
and similarly for $p>E_{\mathbb{Q} }[X]$ or $p<E_{\mathbb{Q} }[X]$.
\end{enumerate}
\end{properties}

\subsection{Control of potential losses of the underlying using call options.%
}

Consider the set of call options $K=\{f(x)=(x-k)^{+}\mid k\in \mathbb{R}\}$
for any possible strike and assume that the agent is confident about the
fact that the underlying $X$ is bounded from below (i.e. $\inf_{\omega \in
\Omega }X(\omega )>-\infty $). For simplicity let the codomain of $X$ be
some convex subset of $\mathbb{R}$ (i.e. there are no gaps in the codomain)
of the type $(a,+\infty )$ or $[a,+\infty )$ (in particular $\sup_{\omega
\in \Omega }X(\omega )=+\infty $). Buying a call option written on $X$ for $%
k<0$ and selling $X$, guarantees both a positive performance and no expected
losses but the agent will have to pay a high price. If $k>0$ (for a lower
price) the agent will face controlled losses (for $X\geq 0$) and gains up to
a value $|a|$ for $X<0$. We consider
\begin{equation*}
\varphi ((x-k)^{+},X)=\inf_{\omega \in \Omega }\{(X(\omega )-k)^{+}-X(\omega
)\}=-k
\end{equation*}%
which represents the (pessimistic) payoff we will face in the worst case
scenario acquiring the derivative and selling the underlying.

We have immediately that
\begin{equation}
\Pi (r,X;\mathbb{Q})=\inf_{k\in \mathbb{R}}\left\{ E_{\mathbb{Q}}\left[
(X-k)^{+}\right] \mid \varphi ((x-k)^{+},X)\geq r\right\} =\int (X+r)^{+}d%
\mathbb{Q}  \label{pi:call}
\end{equation}%
To prove the previous equality we only notice that for $k_{n}\uparrow -r$ as
$n\rightarrow \infty $, we have $\int (X-k_{n})^{+}d\mathbb{Q}\downarrow
\int (X+r)^{+}d\mathbb{Q}$.

\bigskip

\textbf{Properties} From Proposition \ref{properties} we know that $R$
defined by \eqref{R} is (1-2-3Mon), (QCo) and (CLI) and hence is a Value and
Risk map. Moreover

\begin{description}
\item[P1] $R(E_{\mathbb{Q}}[(X-k)^{+}],X;\mathbb{Q})=\varphi ((x-k)^{+},X)=-k
$.

This property follows immediately from the definition. It states that if we
pay for $X$ the same price as a call options with $k<0$ (resp. $k>0$) then
we are facing an intrinsic risk given by $-k>0$ (resp. $-k<0$). This
intrinsic risk corresponds to the difference between the payoff of $(X-k)^{+}
$ and $X$ in the worst case scenario.

\item[P2] Cash Additivity: $R(p,X+a;\mathbb{Q})=R(p,X;\mathbb{Q})-a$.

To show this property we simply observe that
\begin{equation*}
\Pi (r,X+a;\mathbb{Q})=E_{\mathbb{Q}}[(X+(r+a))^{+}]=\Pi (r+a,X;\mathbb{Q})
\end{equation*}%
and therefore $R(p,X+a;\mathbb{Q})=\sup \{s\mid \Pi (s+a,X;\mathbb{Q})\leq
p\}=\sup \{s-a\mid \Pi (s,X;\mathbb{Q})\leq p\}$, which gives the thesis.

\item[P3] Let $X\geq 0$. If $p=E_{\mathbb{Q}}[X]$ (resp. $p\gtrless E_{%
\mathbb{Q}}[X]$) then $R(p,X;\mathbb{Q})=0$ (resp. $R(p,X;\mathbb{Q}%
)\gtrless 0$ ). In particular $R(a,a;\mathbb{Q})=0$ for $a\in \lbrack
0,+\infty )$ and for every $\mathbb{Q}\in \mathcal{P}(\Omega )$.

The property follows from $\sup \left\{ s\in \mathbb{R}\mid E_{\mathbb{Q}%
}[(X+s)^{+}]\leq E_{\mathbb{Q}}[X]\right\} =0$. Indeed for any $s>0$ $E_{%
\mathbb{Q}}[(X+s)^{+}]=E_{\mathbb{Q}}[X]+s>E_{\mathbb{Q}}[X]$ and since $%
X\geq 0$ we have $E_{\mathbb{Q}}[(X)^{+}]=E_{\mathbb{Q}}[X]$.

\item[P4] Let $E_{\mathbb{Q}}[X]<0$ then $R(E_{\mathbb{Q}}[X],X;\mathbb{Q}%
)=-\infty $. In particular $R(a,a;\mathbb{Q})=-\infty $ for $a\in (-\infty
,0)$ and for every $\mathbb{Q}\in \mathcal{P}(\Omega )$.

The proof is straightforward. Clearly the set of call options is not
feasible for testing a position $X$ which has a strong negative component
from the $\mathbb{Q}$-perspective.
\end{description}

\subsection{Asymmetric Tail Control with concave derivatives}

\label{esempio:tail} In this example we consider the case in which the
underlying $X\in \mathcal{L} (\Omega ,\mathcal{F})$ produces a potentially
unbounded loss together with potentially unbounded gains, in particular $%
\sup_{\omega\in\Omega} X^+(\omega)=\sup_{\omega\in\Omega}
X^-(\omega)=+\infty $. We here consider a fairly general class of concave
and increasing test functions described by a family $\{f_{\alpha}\}_{\alpha%
\in \mathbb{R}}$ such that

\begin{enumerate}
\item $\{f_{\alpha}\}_{\alpha\in \mathbb{R} }\subset C^0_+\cap C^1$, $%
f_{\alpha}$ concave for every $\alpha$ and $f_0(0)=0, f^{\prime}_{0}(0)=1$;

\item $f_{\alpha}(x)< f_{\beta}(x)$ for every $\alpha< \beta$ and any $x\in%
\mathbb{R} $;

%
\end{enumerate}

The higher $\alpha $ is the higher payoff the derivative $f_{\alpha }(X)$
will grant. Moreover for $\alpha <0$ the derivative $f_{\alpha }(X)$ will be
strictly dominated by $X$ in any possible state of nature.

The agent sells $f_{\alpha }(X)$ and keeps $X$: for $\alpha >0$ she will be
guaranteed a positive payoff in any case (with a reduced performance on the
positive tail). For $\alpha <0$ the agent may suffer a controlled loss on
the set $\{\omega \in \Omega \mid a_{1}<X(\omega )<a_{2}\}$ with $%
a_{1},a_{2}\in \mathbb{R}\cup \{\infty \}$ being the intersections of $%
f_{\alpha }(x)$ with the identity function $id(x)=x$. The residual risk in
the worst case scenario left over after selling $f_{\alpha }(X)$ jointly to
the acquisition of $X$ (independently from the payoff of $X$) is given by
\begin{equation*}
\varphi (f_{\alpha },X)=-\inf_{x\in \mathbb{R}}\{x-f_{\alpha }(x)\}=c(\alpha
).
\end{equation*}%
Easy computations show that $c(\alpha )=\bar{x}-f_{\alpha }(\bar{x})$ with $%
\bar{x}$ solution of $f_{\alpha }^{\prime }(x)=1$. \newline
Notice that from the concavity of $\{f_{\alpha }\}_{\alpha \in \mathbb{R}}$
and Theorem \ref{Price&Risk} we know that $R$ defined by \eqref{R} is a
Value\& Risk map (i.e. $R$ is (1-2-3Mon), (QCo) and (CLI)) satisfying in
addition the property (QCoX).

To obtain more explicit results we specify $f_{\alpha }(x)=g(\alpha )-e^{-x}$
with $g(0)=1$ and $g$ strictly increasing. In this case $c(\alpha )=g(\alpha
)-1$ and we can write explicitly the function $\Pi $ defined in \eqref{Pi}
\begin{eqnarray*}
\Pi (r,X;\mathbb{Q}) &=&\inf_{\alpha \in \mathbb{R}}\left\{ E_{\mathbb{Q}}%
\left[ f_{\alpha }(X)\right] \mid c(\alpha )\geq r\right\} \\
&=&E_{\mathbb{Q}}\left[ f_{g^{-1}(r+1)}(X)\right] =r+1-E_{\mathbb{Q}%
}[e^{-X}].
\end{eqnarray*}%
Moreover we can observe that
\begin{eqnarray}
R(p,X;\mathbb{Q}) &=&\sup \{s\mid E_{\mathbb{Q}}\left[ f_{g^{-1}(r+1)}(X)%
\right] \leq p\}  \notag \\
&=&\sup \{\varphi (f_{s},X)\mid E_{\mathbb{Q}}\left[ f_{s}(X)\right] \leq
p\}=p-E_{\mathbb{Q}}[1-e^{-X}],  \label{R:example2}
\end{eqnarray}%
which interpretation is the following: the intrinsic Risk of acquiring $X$
at price $p$ is the discrepancy between the price $p$ and the value of the
derivative in $0$ under the probability measure $\mathbb{Q}$.

\subsection{Insured testing functions}

Given a position $X\in \mathcal{L}(\Omega ,\mathcal{F})$ we face the problem
that the risk connected to $X$ might be unbounded but the agent is not
willing to sell $X$. For this reason she aims at buying an insurance on $X$
in order to control the risk. The set of insurances on $X$ is denoted by $%
\mathcal{I}\subseteq C^{0}$ and we assume that the price of each insurance $%
i\in \mathcal{I}$ is exogenously determined by a (possibly non-liner)
functional $c(\cdot )$. The only requirements on $c$ are positivity ($%
c(i)\geq 0$ for every $i\geq 0$) and monotonicity. Let $\mathbb{Q}\in
\mathcal{P}$ be a pricing rule, then the agent will try to minimize the cost
under $\mathbb{Q}$ under the risk constraint for the aggregated position. In
this case
\begin{equation*}
\mathcal{K}=\{f(x)=x+i(x)-c(i)\mid i\in \mathcal{I}\}.
\end{equation*}%
For this reason the agent will face the minimization problem

\begin{eqnarray*}
&& \Pi(r,X;\mathbb{Q}) = \inf_{f\in K}\left\{ E_{\mathbb{Q}}\left[ f(X)%
\right] \mid \varphi (f(X))\geq r\right\} \\
& =& E_{\mathbb{Q}}\left[X\right]+\inf_{i\in \mathcal{I}}\left\{ E_{\mathbb{Q%
}}\left[i(X)\right]-c(i) \mid \varphi (X+i(X)-c(i))\geq r\right\}
\end{eqnarray*}

Here $\Pi (r,X;\mathbb{Q})$ represents the price under $\mathbb{Q}$ of the
cheapest insured position which guarantees that the risk diminishes at least
by $r$. If for some $r$ this problem is solved by $i^{\ast }$ and if $p$ is
the observed price for $X$, then we have an equilibrium between observed
prices and pricing beliefs under $\mathbb{Q}$ i.e.
\begin{equation*}
p+c(i^{\ast })=E_{\mathbb{Q}}[X+i^{\ast }(X)]
\end{equation*}%
with a risk residual equal to $r$.

\paragraph{Insuring using put options.}

Consider the special case with $p$ representing the initial value of an
unbounded from below risky position $X\in \mathcal{L}(\Omega ,\mathcal{F})$
(i.e. $\inf_{\omega \in \Omega }X(\omega )=-\infty $). We insure $X$ using
put options, so that
\begin{equation*}
K=\{f(x)=x+(k-x)^{+}-c(k)\mid k\in \mathbb{R}\}
\end{equation*}%
and it is easy to show that $K\subset C_{+}^{0}$. Here $c:\mathbb{R}%
\rightarrow \lbrack 0,+\infty )$ is assumed to be strictly increasing and $%
k-c(k)$ is strictly increasing with $\lim_{k\rightarrow +\infty }(k-c(k))>0$%
. Moreover we are interested at the case where $\inf_{\omega }X(\omega
)=-\infty $ and choose consequently $\varphi (f,X)=\inf_{\omega }f(X(\omega
))=k-c(k)$ which is independent from $X$ and represents the worst case
payoff of the insured position $f(X)$. We compute
\begin{eqnarray*}
\Pi (r,X;\mathbb{Q}) &=&E_{\mathbb{Q}}\left[ X\right] +\inf_{k\in \mathbb{R}%
}\left\{ E_{\mathbb{Q}}\left[ (k-X)^{+}-c(k)\right] \mid k-c(k)\geq r\right\}
\\
&=&E_{\mathbb{Q}}\left[ X\right] +E_{\mathbb{Q}}\left[ (b(r)-X)^{+}\right]
-c(b(r)),
\end{eqnarray*}%
where $b$ is the inverse of the function $y-c(y)$. Let $R$ be the map
defined by \eqref{R}. Simple inspections together with Proposition \ref%
{properties} show that $R$ is (1-2-3Mon), (QCo) and (CLI) and hence is a
Value and Risk map. Indeed if $p=E_{\mathbb{Q}}[X]$ we have
\begin{equation*}
R(E_{\mathbb{Q}}[X],X;\mathbb{Q})=\max \{s\mid E_{\mathbb{Q}}\left[
(b(s)-X)^{+}\right] =c(b(s))\}
\end{equation*}%
The proof follows from the definition. The interpretation is that the risk
we face buying $X$ at price $E_{\mathbb{Q}}[X]$ corresponds to the index $s$
of the insuring derivative for which the exogenous price $c(s)$ corresponds
to the price $E_{\mathbb{Q}}\left[ (b(s)-X)^{+}\right] $ computed using $%
\mathbb{Q}$.

%
%
%
%
%

\section{Appendix A\label{AppA}}

\label{GeneralResults}

Let $\mathbb{Q}^{1},\mathbb{Q}^{2}\in \mathcal{P}(\Omega )$, $X\in \mathcal{L%
}(\Omega ,\mathcal{F})$ and let $C_{+}^{0}:=\{f\in C^{0}\mid f\text{
increasing }\}.$ Recall the notation $\mathbb{Q}^{1}\preccurlyeq _{X}\mathbb{%
Q}^{2}$ given in Definition \ref{defOrder} and the fact that $\mathbb{Q}%
^{1}\preccurlyeq _{1}\mathbb{Q}^{2}$ implies $\int fdQ^{1}\leq \int fdQ^{2}$
for any $f\in C_{+}^{0}$.

The following Proposition collects the properties of $\Pi $ and $R$ and can
be used to check that Theorem \ref{Price&Risk} holds true.

\begin{proposition}
\label{properties}[Monotonicity and convexity] Consider $\Pi $, $R$ defined
respectively by (\ref{Pi}) and (\ref{R}). Then we have the following
properties.

\begin{itemize}
\item[(a1)] $\Pi (\cdot ,X;\mathbb{Q})$ is monotone increasing and $R(\cdot
,X;\mathbb{Q})$ is (1Mon).

\item[(a2)] $\Pi (r,X;\cdot )$ is concave and $R(p,X;\cdot )$ is quasiconvex;

\item[(b)] If $K\subseteq C_{+}^{0}(\mathbb{R})$ then:

\begin{itemize}
\item[(b1)] $\mathbb{Q}^{1}\preccurlyeq _{X}\mathbb{Q}^{2}$ implies $\Pi
(r,X;\mathbb{Q}^{1})\leq \Pi (r,X;\mathbb{Q}^{2})$ and $R$ is (3Mon);

\item[(b2)] $\mathbb{Q}^{1}\preccurlyeq _{X}\mathbb{Q}^{2}$ and $p_{1}\geq
p_{2}$ implies $R(p_{1},X;\mathbb{Q}^{1})\geq R(p_{2},X;\mathbb{Q}^{2})$;
\end{itemize}

\item[(c)] Suppose that $\varphi $ satisfies
\begin{equation}
X\leq Y\Longrightarrow \varphi (f,X)\geq \varphi (f,Y)\text{ for all }k\in K.
\label{monphi}
\end{equation}%
Then $X\leq Y$ implies $\Pi (r,X;\mathbb{Q})\geq \Pi (r,Y;\mathbb{Q})$ and $%
R $ is (2Mon).

\item[(d)] If $K\subseteq \{f\in C_{+}^{0}(\mathbb{R})\mid f\text{ concave }%
\}$ and $\varphi $ satisfies (\ref{equal1}) then $\Pi (r,\cdot ;\mathbb{Q})$
is concave and $R(p,\cdot ,\mathbb{Q})$ is quasiconvex.

\item[(e)] (CLI) Suppose that
\begin{equation}
\{f\in K\mid \varphi (f,Y)\geq r\}=\{f\in K\mid \varphi (f,X)\geq r\}\text{
for all }f\in K\text{.}  \label{fK}
\end{equation}%
For any $(X,\mathbb{Q}^{1}),(Y,\mathbb{Q}^{2})\in \mathcal{L}(\Omega .%
\mathcal{F})\times \mathcal{P}(\Omega )$ such that $Q_{X}^{1}=Q_{Y}^{2}$
then $\Pi (r,X;\mathbb{Q}^{1})=\Pi (r,Y;\mathbb{Q}^{2})$ and $R(p,X;\mathbb{Q%
}^{1})=R(p,Y;\mathbb{Q}^{2})$, so that $R$ satisfies (CLI).
\end{itemize}
\end{proposition}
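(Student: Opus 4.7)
The overarching strategy is to prove every assertion about $\Pi$ first and then derive the corresponding assertion for $R$ from the ``generalized inverse'' identity $R(p,X;\mathbb{Q})=\sup\{s:\Pi(s,X;\mathbb{Q})\leq p\}$. Items (a1), (b1), (c) are monotonicity statements about $\Pi$, each of which translates into a monotonicity statement about $R$ by the elementary rule: if $\Pi(\cdot)\leq\widetilde\Pi(\cdot)$ pointwise then the sublevel set in $s$ shrinks for $\widetilde\Pi$, so $\widetilde R\leq R$. Item (a2) is the concavity-in-$\mathbb Q$ / quasiconvexity-in-$\mathbb Q$ pair, (d) is its $X$-analogue, and (e) is a direct consequence of the definitions once (\ref{fK}) is used.

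For the $\Pi$-side I would argue as follows. \emph{(a1)} As $r$ grows the feasibility set $\{f\in K:\varphi(f,X)\geq r\}$ shrinks, so the infimum in (\ref{Pi}) is monotone nondecreasing in $r$. \emph{(a2)} For each admissible $f$ the functional $\mathbb Q\mapsto E_{\mathbb Q}[f(X)]=\int f(X)\,d\mathbb Q$ is affine in $\mathbb Q$, so $\Pi(r,X;\cdot)$ is an infimum of affine maps, hence concave. \emph{(b1)} When $K\subseteq C^0_+$, $f$ is increasing and $\mathbb Q^1\preccurlyeq_X\mathbb Q^2$ gives $\int f\,dQ_X^1\leq\int f\,dQ_X^2$ for every admissible $f$, so the pointwise inequality passes to the infimum. \emph{(c)} Under (\ref{monphi}), if $\varphi(f,Y)\geq r$ then $\varphi(f,X)\geq r$, so the feasible set for $Y$ is contained in that for $X$; combined with $f(X)\leq f(Y)$ (using $f$ increasing) one compares the two infima and obtains $\Pi(r,X;\mathbb{Q})\leq \Pi(r,Y;\mathbb{Q})$, which is exactly what is needed to get (2Mon) for $R$. \emph{(d)} Concavity of $f$ gives $f(\lambda X_1+(1-\lambda)X_2)\geq \lambda f(X_1)+(1-\lambda)f(X_2)$ pointwise, and by (\ref{equal1}) the feasibility constraint $\varphi(f,\cdot)\geq r$ is independent of the underlying among $X_1,X_2,X_\lambda$, so one can apply the inequality $\inf(a+b)\geq\inf a+\inf b$ over the common constraint set to conclude concavity of $\Pi(r,\cdot\,;\mathbb Q)$. \emph{(e)} By (\ref{fK}) the constraint set is the same for $(X,\mathbb Q^1)$ and $(Y,\mathbb Q^2)$, and since $Q_X^1=Q_Y^2$ we have $\int f\,dQ_X^1=\int f\,dQ_Y^2$ for every $f\in K$, so both infima coincide.

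For the $R$-side, in each case I use the following template: assume $\Pi$ satisfies a monotonicity/concavity property and deduce the corresponding property of $R$ from $R=\Pi^{-1}$ (left generalized inverse). Monotonicity propagates trivially. For quasiconvexity in $\mathbb Q$, set $M=R(p,X;\mathbb Q^1)\vee R(p,X;\mathbb Q^2)$; for any $s>M$ the definition of the sup forces $\Pi(s,X;\mathbb Q^i)>p$ for $i=1,2$, and concavity of $\Pi(s,X;\cdot)$ then gives $\Pi(s,X;\lambda\mathbb Q^1+(1-\lambda)\mathbb Q^2)\geq \lambda\Pi(s,X;\mathbb Q^1)+(1-\lambda)\Pi(s,X;\mathbb Q^2)>p$, so $s\notin\{t:\Pi(t,\cdots)\leq p\}$ and hence $R(p,X;\lambda\mathbb Q^1+(1-\lambda)\mathbb Q^2)\leq s$; letting $s\downarrow M$ gives (QCo). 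The same argument, with $X$ in place of $\mathbb Q$ and concavity in $X$ from (d), delivers (QCoX). Item (b2) is then obtained by chaining (1Mon) and (3Mon).

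The only subtle point is the quasiconvexity step: the sup in the definition of $R$ need not be attained, and one must handle the distinction between $\Pi(s,\cdots)\leq p$ and $\Pi(s,\cdots)<p$ carefully. Monotonicity of $\Pi$ in $s$ (from (a1)) makes the set $\{s:\Pi(s,\cdots)\leq p\}$ a left-infinite interval, which is what ensures that ``$s>R(p,\cdots)$'' is equivalent to ``$\Pi(s,\cdots)>p$''; this equivalence is the little lemma I would isolate and use throughout. Everything else is a routine inf/sup manipulation on top of the definitions.
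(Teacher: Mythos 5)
Your proposal is correct and follows essentially the same route as the paper: establish the monotonicity/concavity properties of $\Pi$ first and then transfer each one to $R$ through the generalized inverse, using the identical sublevel-set argument (for $s>t_2$ both $\Pi(s,\cdot)>p$, hence the concave combination exceeds $p$) to get quasiconvexity in (a2) and (d). One remark: in item (c) the inequality you derive, $\Pi(r,X;\mathbb{Q})\leq \Pi(r,Y;\mathbb{Q})$, is the one the paper's own proof obtains and the one that actually yields (2Mon) for $R$; the opposite sign displayed in the statement of the proposition appears to be a typo, so your direction is the right one.
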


Notice that if $\varphi $ satisfies (\ref{equal1}) then (\ref{monphi}) and (%
\ref{fK}) hold true.

\bigskip

\begin{proof}
\noindent (a1) follows from the definition.

\noindent (a2) The concavity of $\Pi (r,X;\cdot )$ follows from its
definition and the properties of the $\inf $. Take $\mathbb{Q}^{1},\mathbb{Q}%
^{2}$ and $\lambda \in (0,1)$ and let $R(p,X;\mathbb{Q}^{1}):=t_{1}\leq
t_{2}:=R(p,X;\mathbb{Q}^{2})$. In this proof we omit the dependence on $X$.
We need to prove that $R(p,\lambda \mathbb{Q}^{1}+(1-\lambda )\mathbb{Q}%
^{2})\leq t_{2}$. Note that $t_{i}=\sup \{s\mid \Pi (s,\mathbb{Q}^{i})\leq
p\}$. Then: $\Pi (s,\mathbb{Q}^{i})>p$ for all $s>t_{i}$. Therefore, $\Pi (s,%
\mathbb{Q}^{1})>p$ and $\Pi (s,\mathbb{Q}^{2})>p$ for all $s>t_{2}$ implies:
$\lambda \Pi (s,\mathbb{Q}^{1})+(1-\lambda )\Pi (s,\mathbb{Q}^{2})>p$ for
all $s>t_{2}$. As a consequence:%
\begin{equation*}
\sup \left\{ s\mid \lambda \Pi (s,\mathbb{Q}^{1})+(1-\lambda )\Pi (s,\mathbb{%
Q}^{2})\leq p\right\} \leq t_{2}.
\end{equation*}%
From the concavity of $\Pi (s,\cdot ),$ $\lambda \Pi (s,\mathbb{Q}%
^{1})+(1-\lambda )\Pi (s,\mathbb{Q}^{2})\leq \Pi (s,\lambda \mathbb{Q}%
^{1}+(1-\lambda )\mathbb{Q}^{2})$, we obtain:%
\begin{equation*}
\left\{ s\mid \Pi (s,\lambda \mathbb{Q}^{1}+(1-\lambda )\mathbb{Q}^{2})\leq
p\right\} \subseteq \left\{ s\mid \lambda \Pi (s,\mathbb{Q}^{1})+(1-\lambda
)\Pi (s,\mathbb{Q}^{2})\leq p\right\} .
\end{equation*}%
Hence:%
\begin{eqnarray*}
R(p,\lambda \mathbb{Q}^{1}+(1-\lambda )\mathbb{Q}^{2}) &=&\sup \left\{ s\mid
\Pi (s,\lambda \mathbb{Q}^{1}+(1-\lambda )\mathbb{Q}^{2})\leq p\right\} \\
&\leq &\sup \left\{ s\mid \lambda \Pi (s,\mathbb{Q}^{1})+(1-\lambda )\Pi (s,%
\mathbb{Q}^{2})\leq p\right\} \leq t_{2}.
\end{eqnarray*}

\noindent (b1) Recall $\mathbb{Q}^{1}\preccurlyeq _{X}\mathbb{Q}^{2}$
implies $\int fdQ_{X}^{1}\leq \int fdQ_{X}^{2}$ for any $f\in C_{+}^{0}$.
Clearly this shows that $\mathbb{Q}^{1}\preccurlyeq _{X}\mathbb{Q}^{2}$
implies $\Pi (s,X;\mathbb{Q}^{1})\leq \Pi (s,X;\mathbb{Q}^{2})$. As a
consequence $\{s\mid \Pi (s,X;\mathbb{Q}^{1})\leq p\}\supseteq \{s\mid \Pi
(s,X;\mathbb{Q}^{2})\leq p\}$ and $R(p,X;\mathbb{Q}^{1})\geq R(p,X;\mathbb{Q}%
^{2})$.

\noindent (b2) Follows from (a1) and (b1).

\noindent (c) Let $X\leq Y$. Then (\ref{monphi}) implies $\{f\in K\mid
\varphi (f,Y)\geq r\}\subseteq \{f\in K\mid \varphi (f,X)\geq r\}$ so that
\begin{eqnarray*}
\Pi (r,Y;\mathbb{Q}) &=&\inf \left\{ E_{Q}[f(Y)]\mid \varphi (f,Y)\geq
r\right\} \geq \inf \left\{ E_{Q}[f(Y)]\mid \varphi (f,X)\geq r\right\} \\
&\geq &\inf \left\{ E_{Q}[f(X)]\mid \varphi (f,X)\geq r\right\} =\Pi (r,X;%
\mathbb{Q})
\end{eqnarray*}%
Moreover from this property we have $\{s\mid \Pi (s,Y;\mathbb{Q})\leq
p\}\subseteq \{s\mid \Pi (s,X;\mathbb{Q})\leq p\}$ so that $R(p,Y;\mathbb{Q}%
)\leq R(p,X;\mathbb{Q})$.

\noindent (d) The concavity of $\Pi (r,\cdot ;\mathbb{Q})$ follows from $%
K\subseteq \{f\in C_{+}^{0}(\mathbb{R})\mid f\text{ concave }\}$, the
properties of the $\inf $ and (\ref{equal1}). Take $X_{1},X_{2}$ and $%
\lambda \in (0,1)$ and let $R(p,X;\mathbb{Q}):=t_{1}\leq t_{2}:=R(p,X;%
\mathbb{Q})$. In this proof we omit the dependence on $\mathbb{Q}$. We need
to prove that $R(p,\lambda X_{1}+(1-\lambda )X_{2})\leq t_{2}$. As before $%
\Pi (s,X_{i})>p$ for all $s>t_{i}$. Therefore, $\Pi (s,X_{1})>p$ and $\Pi
(s,X_{2})>p$ for all $s>t_{2}$. For any $\lambda \in (0,1)$ we have $\lambda
\Pi (s,X_{1})+(1-\lambda )\Pi (s,X_{2})>p$ for all $s>t_{2}$. This implies:%
\begin{equation*}
\sup \left\{ s\mid \lambda \Pi (s,X_{1})+(1-\lambda )\Pi (s,X_{2})\leq
p\right\} \leq t_{2}.
\end{equation*}%
From the concavity of $\Pi (s,\cdot ),$ we obtain:
\begin{equation*}
\left\{ s\mid \Pi (s,\lambda X_{1}+(1-\lambda )X_{2})\leq p\right\}
\subseteq \left\{ s\mid \lambda \Pi (s,X_{1})+(1-\lambda )\Pi (-s,X_{2})\leq
p\right\} .
\end{equation*}%
Hence:%
\begin{eqnarray*}
R(p,\lambda X_{1}+(1-\lambda )X_{2}) &=&\sup \left\{ s\mid \Pi (s,\lambda
X_{1}+(1-\lambda )X_{2})\leq p\right\} \\
&\leq &\sup \left\{ s\mid \lambda \Pi (s,X_{1})+(1-\lambda )\Pi
(s,X_{2})\leq p\right\} \leq t_{2}.
\end{eqnarray*}%
(e) Follows directly from the definitions and (\ref{fK}).
\end{proof}

\bigskip

\begin{proof}[Proof of Proposition \ref{dependence:K}]
The first property is straightforward. To prove (\ref{PIqco})
observe that
\begin{eqnarray*}
&&\lambda \Pi _{K^{1}}(r)+(1-\lambda )\Pi _{K^{2}}(r)
\\&=&\inf_{f^{1}\in K^{1}}\left\{ E_{\mathbb{Q}}\left[ \lambda
f^{1}(X)\right] \mid \varphi
(f^{1},X)\geq r\right\}  \\
&+&\inf_{f^{2}\in K^{2}}\left\{ E_{\mathbb{Q}}\left[ (1-\lambda )f^{2}(X)%
\right] \mid \varphi (f^{2},X)\geq r\right\}  \\
&=&\inf_{f^{1}\in K^{1},\text{ }f^{2}\in K^{2}}\left\{
E_{\mathbb{Q}}\left[
\lambda f^{1}(X)+(1-\lambda )f^{2}(X)\right] \mid \varphi (f^{1},X)\geq r%
\text{ and }\varphi (f^{2},X)\geq r\right\}  \\
&\geq &\inf_{f^{1}\in K^{1},\text{ }f^{2}\in K^{2}}\left\{ E_{\mathbb{Q}}%
\left[ f^{\lambda }(X)\right] \mid \varphi (f^{\lambda },X)\geq r\right\}  \\
&=&\inf_{f\in K^{\lambda }}\left\{ E_{\mathbb{Q}}\left[
f(X)\right] \mid \varphi (f,X)\geq r\right\} =\Pi _{K^{\lambda
}}(r).
\end{eqnarray*}%
The previous inequality is motivated by the quasi-concavity of
$\varphi ,$
namely%
\begin{equation*}
\varphi (f^{1},X)\geq r\text{ and }\varphi (f^{2},X)\geq r\text{ implies }%
\varphi (f^{\lambda },X)\geq r
\end{equation*}%
Now we show (\ref{Rqco}).
\begin{eqnarray*}
R_{K^{\lambda }}(p) &=&\sup \left\{ s\in \mathbb{R}\mid \Pi
_{K^{\lambda
}}(s)\leq p\right\}  \\
&\geq &\sup \left\{ s\in \mathbb{R}\mid \lambda \Pi
_{K^{1}}(s)+(1-\lambda
)\Pi _{K^{2}}(s)\leq p\right\}  \\
&\geq &\sup \left\{ s\in \mathbb{R}\mid \Pi _{K^{1}}(s)\leq
p\text{ and }\Pi
_{K^{2}}(s)\leq p\right\}  \\
&=&R_{K^{1}}(p)\wedge R_{K^{2}}(p).
\end{eqnarray*}%
where the first inequality follows from (\ref{PIqco}).

The first inequality in Item 3 follows exactly with the same
argument used in Item 2. The second inequality of Item 3 then is a
consequence of:
\begin{eqnarray*}
R_{K^{1}+K^{2}}(p) &\geq &\sup \left\{ s\in \mathbb{R}\mid \Pi
_{K^{1}}(s)+\Pi _{K^{2}}(s)\leq p\right\}  \\
&\geq &\sup \left\{ s\in \mathbb{R}\mid \Pi _{K^{1}}(s)\leq \alpha
p\text{
and }\Pi _{K^{2}}(s)\leq (1-\alpha )p\right\}  \\
&= &R_{K^{1}}(\alpha p)\wedge R_{K^{2}}((1-\alpha )p),\text{ for any }%
\alpha \in \mathbb{R}
\end{eqnarray*}
\end{proof}

The proof of the following Proposition is omitted since is a straightforward
consequence of the definitions.

\begin{proposition}[Behavior with respect to cash]\label{cash}
Consider $\Pi $, $R $ defined respectively by (\ref{Pi}) and (\ref{R}). Let $%
r,p,\alpha \in \mathbb{R} $ then we have the following properties.

\begin{enumerate}
\item[(a)] $R(p,p;\mathbb{Q} )=0$ if $\Pi(r,p;\mathbb{Q} )=p+r$.

\item[(b)] $R(p+\alpha,X;\mathbb{Q})=\alpha+R(p,X;\mathbb{Q})$ if $%
\Pi(r+\alpha,X;\mathbb{Q})=\alpha+\Pi(r,X;\mathbb{Q})$.

\item[(c)] $R(p,X+\alpha;\mathbb{Q} )=R(p,X;\mathbb{Q})-\alpha$ if $%
\Pi(r+\alpha,X;\mathbb{Q})=\Pi(r,X+\alpha;\mathbb{Q} )$.

\item[(d)] $R(p+\alpha,X+\alpha;\mathbb{Q} )=R(p,X;\mathbb{Q})$ if $%
\Pi(r,X+\alpha;\mathbb{Q})=\Pi(r,X;\mathbb{Q} )+\alpha$.
\end{enumerate}

%
%
\end{proposition}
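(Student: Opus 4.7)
The plan is to verify each of the four items (a)–(d) by directly unfolding the definition
\begin{equation*}
R(p,X;\mathbb{Q})=\sup\{s\in\mathbb{R}\mid \Pi(s,X;\mathbb{Q})\leq p\}
\end{equation*}
and then performing a translation of the dummy variable $s$ inside the $\sup$. Each item consists of a hypothesis on how $\Pi$ transforms under a shift of one of its arguments, and the conclusion is the corresponding shift property of its generalized left inverse $R$. Since $R$ is defined by a sublevel condition on $\Pi$, every shift in the hypothesis translates into an equivalent shift of that sublevel condition.

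Concretely, for (a) I would plug the hypothesis $\Pi(r,p;\mathbb{Q})=p+r$ into the definition to obtain $R(p,p;\mathbb{Q})=\sup\{s\mid p+s\leq p\}=0$. For (b) I would write $R(p+\alpha,X;\mathbb{Q})=\sup\{s\mid \Pi(s,X;\mathbb{Q})\leq p+\alpha\}$ and then, using $\Pi(s,X;\mathbb{Q})=\Pi((s-\alpha)+\alpha,X;\mathbb{Q})=\alpha+\Pi(s-\alpha,X;\mathbb{Q})$ from the hypothesis, reduce the constraint to $\Pi(s-\alpha,X;\mathbb{Q})\leq p$; the change of variable $t=s-\alpha$ then yields $\alpha+R(p,X;\mathbb{Q})$. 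For (c), the hypothesis $\Pi(s,X+\alpha;\mathbb{Q})=\Pi(s+\alpha,X;\mathbb{Q})$ directly gives $R(p,X+\alpha;\mathbb{Q})=\sup\{s\mid \Pi(s+\alpha,X;\mathbb{Q})\leq p\}$, and the substitution $t=s+\alpha$ produces $R(p,X;\mathbb{Q})-\alpha$. For (d), I would combine the two shifts: $R(p+\alpha,X+\alpha;\mathbb{Q})=\sup\{s\mid \Pi(s,X;\mathbb{Q})+\alpha\leq p+\alpha\}=\sup\{s\mid \Pi(s,X;\mathbb{Q})\leq p\}=R(p,X;\mathbb{Q})$.

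There is essentially no obstacle: the statement is, as the text already warns, a straightforward consequence of the definition of $R$ as a generalized left inverse. The only point requiring minor care is sign bookkeeping in the change of variable $s\leftrightarrow s\pm\alpha$ inside the $\sup$, together with the observation that translating the argument of $\Pi$ by a constant $\alpha$ translates the sublevel set $\{s\mid \Pi(s,\cdot\,;\mathbb{Q})\leq p\}$ by the same constant, which is precisely the operation the $\sup$ records as $R$.
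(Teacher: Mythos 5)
Your proof is correct and is exactly the argument the paper has in mind: the authors omit the proof of Proposition \ref{cash}, stating it is a straightforward consequence of the definitions, and your direct unfolding of $R$ as the generalized left inverse of $\Pi$ together with the change of variable $s\mapsto s\pm\alpha$ in the $\sup$ is that straightforward consequence. All four sign computations check out, and the boundary conventions of Definition \ref{defR} translate consistently, so nothing is missing.
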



In the following Proposition we provide some explicit forms for $\Pi $, when
we are able to find a representative class of one parameter functions.

\begin{proposition}
\label{trasf} For any fixed $X\in\mathcal{L}(\Omega,\mathcal{F} )$ consider $%
\varphi:K\times \mathcal{L}(\Omega,\mathcal{F} )\rightarrow \mathbb{R} \cup {%
+\infty}$ and let $I_{\varphi}=\inf_{f\in K}\varphi(f,X)$. Assume that there
exist $f^0\in K$ and a one parameter class of transformations $%
\{T_{\alpha}\}_{\alpha\geq I_{\varphi}}$ such that

\begin{itemize}
\item $T_{\alpha}:K\rightarrow K$ with $T_0f^0=f^0 $and $\varphi(T_{%
\alpha}f^0,X)=\alpha$,

\item $T_{\alpha} f^0 < T_{\beta}f^0$ for $\alpha< \beta$,

\item for any $g\in K$ such that $\varphi(g,X)=\alpha$ we have $T_{\alpha}
f^0\leq g$ for $\alpha\leq 0$ (resp. $T_{\alpha} f^0\geq f^0$ $\alpha\geq 0$%
).
\end{itemize}

Then

\begin{itemize}
\item[(a)] $\Pi(r,X;\mathbb{Q}) =\inf_{f\in K}\left\{ E_{\mathbb{Q}}\left[
f(X)\right] \mid \varphi (f,X)= r\right\}$.

\item[(b)] $\Pi(r,X;\mathbb{Q}) =\inf_{\alpha\geq I_{\varphi}} \left\{ E_{%
\mathbb{Q}}\left[ T_{\alpha}\circ f^0 (X) \right] \mid \varphi
(T_{\alpha}\circ f^0,X)\geq r\right\}$;

\item[(c)] $\Pi(r,X;\mathbb{Q}) = \; E_{\mathbb{Q}}\left[ T_{r} f^0(X)\right]
$;

\item[(d)] $R(E_{\mathbb{Q} }[T_{\alpha}f^0(X)],X;\mathbb{Q}%
)=\varphi(T_{\alpha} f^0(X))=\alpha.$
\end{itemize}
\end{proposition}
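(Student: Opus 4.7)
The plan is to prove (c) first, since (a), (b) and (d) then follow essentially by inspection.

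For (c), one direction is immediate: by the first bulleted hypothesis, $\varphi(T_{r}f^{0},X)=r$, so $T_{r}f^{0}$ is a feasible point of the minimization defining $\Pi(r,X;\mathbb{Q})$, giving $\Pi(r,X;\mathbb{Q})\leq E_{\mathbb{Q}}[T_{r}f^{0}(X)]$. For the reverse inequality, I fix any $f\in K$ with $\varphi(f,X)=\alpha\geq r$. Reading the third hypothesis as the extremality property $T_{\alpha}f^{0}\leq g$ for every $g\in K$ with $\varphi(g,X)=\alpha$, I obtain $T_{\alpha}f^{0}\leq f$ pointwise on $\mathbb{R}$. Combined with the strict monotonicity in the second bullet, this gives $T_{r}f^{0}\leq T_{\alpha}f^{0}\leq f$ pointwise, hence $T_{r}f^{0}(X(\omega))\leq f(X(\omega))$ for every $\omega\in\Omega$ and $E_{\mathbb{Q}}[T_{r}f^{0}(X)]\leq E_{\mathbb{Q}}[f(X)]$. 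Taking the infimum over feasible $f$ yields the missing inequality $E_{\mathbb{Q}}[T_{r}f^{0}(X)]\leq \Pi(r,X;\mathbb{Q})$, completing (c).

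Statement (a) is now immediate: by (c) the infimum defining $\Pi$ is attained at $f=T_{r}f^{0}$, which satisfies $\varphi(f,X)=r$ exactly, so replacing the constraint $\varphi\geq r$ by $\varphi=r$ does not raise the infimum. For (b) the $\geq$ direction is trivial from $\{T_{\alpha}f^{0}\}\subseteq K$, while the $\leq$ direction follows by taking $\alpha=r$ in the right-hand side and invoking (c). For (d), the definition of $R$ together with (c) gives
\begin{equation*}
R(E_{\mathbb{Q}}[T_{\alpha}f^{0}(X)],X;\mathbb{Q})=\sup\{s\in\mathbb{R}\mid E_{\mathbb{Q}}[T_{s}f^{0}(X)]\leq E_{\mathbb{Q}}[T_{\alpha}f^{0}(X)]\},
\end{equation*}
and the pointwise strict monotonicity in (ii) makes $s\mapsto E_{\mathbb{Q}}[T_{s}f^{0}(X)]$ nondecreasing with a threshold at $s=\alpha$, so the supremum equals $\alpha=\varphi(T_{\alpha}f^{0},X)$.

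\textbf{Main obstacle.} The crux of the proof is the correct reading of the third hypothesis. As literally written, it seems to provide an extremality property only for $\alpha\leq 0$, which would force the argument above to be split into two regimes and perhaps restrict the conclusions to $r\leq 0$. I would proceed under the natural interpretation that $T_{\alpha}f^{0}$ is the pointwise-minimum of the fiber $\{g\in K\mid \varphi(g,X)=\alpha\}$ for every admissible $\alpha$, since without such extremality the bound $\Pi(r,X;\mathbb{Q})\geq E_{\mathbb{Q}}[T_{r}f^{0}(X)]$ can fail. A secondary technicality arises in (d), where one must ensure that the pointwise strict inequality $T_{s}f^{0}<T_{\alpha}f^{0}$ for $s<\alpha$ transfers to a strict inequality of expectations; this is automatic under a mild integrability condition that avoids comparing two infinities, and the sup is otherwise still exactly $\alpha$ by a right-continuity argument restricted to the parameterized curve.
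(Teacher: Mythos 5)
Your proof is correct and follows essentially the same route as the paper's: both hinge on the chain $T_{r}f^{0}\leq T_{\alpha}f^{0}\leq f$ obtained from the extremality and monotonicity of the parameterized family, the only difference being that you establish (c) directly first while the paper establishes (a) by contradiction and then deduces (b)--(c). Your reading of the third hypothesis as extremality of $T_{\alpha}f^{0}$ in every fiber $\{g\in K\mid \varphi(g,X)=\alpha\}$ is exactly how the paper's own proof uses it (it invokes $T_{r}f^{0}\leq g$ for arbitrary $r$), so the obstacle you flag is resolved in the same way there.
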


\begin{proof}
(1) Obviously $\Pi(r,X;\mathbb{Q}) \leq \inf_{f\in K}\left\{ E_{\mathbb{Q}}%
\left[ f(X)\right] \mid \varphi (f,X)= r\right\}$. By contradiction assume $%
< $. Then there exists $\overline{f}\in K$ such that $\varphi(\overline{f},
X)>r$ and
\begin{equation*}
E_{\mathbb{Q} }[\overline{f}(X)]<\inf_{f\in K}\left\{ E_{\mathbb{Q}}\left[
f(X)\right] \mid \varphi (f, X)= r\right\}.
\end{equation*}
Take $\alpha=\varphi(\overline{f}, X)$: in this way we have $%
\varphi(T_{\alpha}\circ f_0, X)= \varphi(\overline{f}, X)$ which implies $%
T_{\alpha} f^0\leq \overline{f}$. Moreover if $\varepsilon=\alpha-r>0$ we
have $T_{\alpha-\varepsilon} f^0\leq T_{\alpha} f^0\leq \overline{f}$ with $%
\varphi(T_{\alpha-\varepsilon} f^0,X)=r$ and thus a contradiction since $E_{%
\mathbb{Q} }[T_{\alpha-\varepsilon}\circ f_0,X]\leq E_{\mathbb{Q} }[%
\overline{f}(X)]< \inf_{f\in K}\left\{ E_{\mathbb{Q}}\left[ f(X)\right] \mid
\varphi (f,X)= r\right\}$.

(b) and (c) From the previous step
\begin{equation*}
\Pi (r,X;\mathbb{Q})=\inf_{f\in K}\left\{ E_{\mathbb{Q}}\left[ f(X)\right]
\mid \varphi (f,X)=r\right\} \leq E_{\mathbb{Q}}\left[ T_{r}f^{0}(X)\right]
\end{equation*}%
By assumption we also have $T_{r}f^{0}\leq g$ for any $g\in K$ such that $%
\varphi (g,X)=r$ and hence $E_{\mathbb{Q}}\left[ T_{r}f^{0}(X)\right] \leq
\Pi (r,X;\mathbb{Q})$. Thus $T_{r}f^{0}$ corresponds to the minimizer.
\newline
(4) Follows directly from the definition of $R(E_{\mathbb{Q}}[T_{\alpha
}f^{0}(X)],X;\mathbb{Q})$.
\end{proof}

\subsection{Duality for testing functions cones}

This section is devoted to the proof of Theorem \ref{thA}, which is
instrumental to Proposition \ref{dualityK}. In this section we will often
omit in the notations the dependence from $\mathbb{R}$ and write $ca$ for $%
ca(\mathbb{R})$, similarly for the other symbols.

\bigskip

Let $\mathcal{K}\subseteq \mathcal{L}^{\infty }(\Omega ,\mathcal{F})$ and $%
\varphi :\mathcal{K}\rightarrow \overline{\mathbb{R}}$. Let $\Pi _{\varphi }:%
\mathbb{R}\times ca\rightarrow \overline{\mathbb{R}}$ be defined by:
\begin{equation}
\Pi _{\varphi }(r,\mu ):=\inf_{Y\in \mathcal{K}}\left\{ E_{\mu }[Y]\mid
\varphi (Y)\geq r\right\} ,  \label{Q}
\end{equation}%
and let $R_{\varphi }:\mathbb{R}\times ca\rightarrow \overline{\mathbb{R}}$
be the right inverse of the increasing function $\Pi (\cdot ,\mu )$
\begin{equation}
R_{\varphi }(p,\mu ):=\sup \left\{ r\in \mathbb{R}\mid \Pi _{\varphi }(r,\mu
)\leq p\right\} .  \label{H2}
\end{equation}%
Let $H:\mathbb{R}\times ca\rightarrow \overline{\mathbb{R}}$ be defined by
\begin{equation}
H_{\varphi }(p,\mu ):=\sup_{\xi \in \mathcal{K}}\left\{ \varphi (\xi )\mid
E_{\mu }[\xi ]\leq p\right\} .  \label{HH}
\end{equation}%
Notice that the three functions $\Pi _{\varphi }(\cdot ,\mu ),$ $R_{\varphi
}(\cdot ,\mu )$\ and $H_{\varphi }(\cdot ,\mu )$ are monotone increasing. In
the proofs we will omit the label $\varphi $ in $\Pi _{\varphi },$ $%
R_{\varphi }$ and $H_{\varphi }.$

\begin{proposition}
\label{propHH}Let $H_{\varphi }^{+}(\cdot ,\mu )$ be the right continuous
version of $H_{\varphi }(\cdot ,\mu )$. Then:%
\begin{equation}
H_{\varphi }^{+}(p,\mu ):=\inf_{s>p}H_{\varphi }(s,\mu )=R_{\varphi }(p,\mu
).  \label{HRinf}
\end{equation}
\end{proposition}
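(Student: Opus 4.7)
The plan is to fix $\mu \in ca$, suppress it from the notation, and establish the two inequalities $H_{\varphi}^{+} \le R_{\varphi}$ and $H_{\varphi}^{+} \ge R_{\varphi}$ separately, using nothing beyond the definitions of $\Pi_{\varphi}$, $R_{\varphi}$, $H_{\varphi}$ and the monotonicity of $\Pi_{\varphi}(\cdot,\mu)$ noted just before the statement.

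First I would observe that $R_{\varphi}(\cdot,\mu)$ is itself right-continuous. Indeed, $R_{\varphi}$ is the generalized right inverse of the monotone increasing function $\Pi_{\varphi}(\cdot,\mu)$: from $R_{\varphi}(p) \ge r$ one gets (by monotonicity of $\Pi_{\varphi}$ applied to any $s > r - \varepsilon$ with $\Pi_{\varphi}(s) \le p$) that $\Pi_{\varphi}(r') \le p$ for all $r' < r$. If $p_n \downarrow p$ and $r < \lim_n R_{\varphi}(p_n)$, then $r < R_{\varphi}(p_n)$ for every $n$, so $\Pi_{\varphi}(r') \le p_n$ for every $r' < r$, and passing to the limit gives $\Pi_{\varphi}(r') \le p$, i.e.\ $R_{\varphi}(p) \ge r$. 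This yields $R_{\varphi}(p) = \lim_n R_{\varphi}(p_n)$, so $R_{\varphi} = R_{\varphi}^{+}$.

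Next I would prove the pointwise bound $H_{\varphi}(p,\mu) \le R_{\varphi}(p,\mu)$. For any feasible $\xi \in \mathcal{K}$ with $E_{\mu}[\xi] \le p$, the definition of $\Pi_{\varphi}$ (taking $Y = \xi$, $r = \varphi(\xi)$) gives $\Pi_{\varphi}(\varphi(\xi)) \le E_{\mu}[\xi] \le p$; therefore $\varphi(\xi)$ belongs to the set whose supremum defines $R_{\varphi}(p)$, so $\varphi(\xi) \le R_{\varphi}(p)$. Taking the sup over feasible $\xi$ gives $H_{\varphi}(p) \le R_{\varphi}(p)$; taking inf over $s > p$ and invoking the right-continuity of $R_{\varphi}$ established above,
\begin{equation*}
H_{\varphi}^{+}(p) = \inf_{s>p} H_{\varphi}(s) \le \inf_{s>p} R_{\varphi}(s) = R_{\varphi}(p).
\end{equation*}

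Finally I would prove $R_{\varphi}(p) \le H_{\varphi}^{+}(p)$. Fix any $s > p$ and any $r$ with $\Pi_{\varphi}(r) \le p$. Since $\Pi_{\varphi}(r) \le p < s$ and $\Pi_{\varphi}(r)$ is an infimum, there exists $Y \in \mathcal{K}$ with $\varphi(Y) \ge r$ and $E_{\mu}[Y] < s$; such $Y$ is feasible for $H_{\varphi}(s)$, giving $H_{\varphi}(s) \ge \varphi(Y) \ge r$. Taking the supremum over all $r$ with $\Pi_{\varphi}(r) \le p$ gives $H_{\varphi}(s) \ge R_{\varphi}(p)$, and taking the infimum over $s > p$ yields $H_{\varphi}^{+}(p) \ge R_{\varphi}(p)$. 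Combining both inequalities gives \eqref{HRinf}. The only place where care is required is the strict inequality $p < s$ in the second half, which is precisely what enables one to convert the infimum defining $\Pi_{\varphi}(r)$ into the existence of a genuine feasible witness for $H_{\varphi}(s)$; this is also the reason the identity involves $H_{\varphi}^{+}$ rather than $H_{\varphi}$ itself.
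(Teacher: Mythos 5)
Your proof is correct and follows essentially the same route as the paper's: the pointwise bound $H_{\varphi}(\cdot,\mu)\leq R_{\varphi}(\cdot,\mu)$ combined with the right-continuity of $R_{\varphi}(\cdot,\mu)$ gives one inequality, and the approximate-minimizer argument exploiting the slack $s-p>0$ gives the other. The only differences are cosmetic: you supply an explicit proof of the right-continuity of $R_{\varphi}(\cdot,\mu)$, which the paper merely asserts as a property of right inverses of increasing functions, while the paper separately flags the harmless edge case $\varphi(\xi)=\infty$ that your phrase ``belongs to the set whose supremum defines $R_{\varphi}(p,\mu)$'' glosses over.
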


\begin{proof}
Since $R(\cdot ,\mu )$ is the right inverse of the increasing function $\Pi
(\cdot ,\mu ),$ $R(\cdot ,\mu )$ is right continuous. To prove that $%
H^{+}(p,\mu )\leq R(p,\mu )$ it is sufficient to show that for all $t>p$ we
have:%
\begin{equation}
H(t,\mu )\leq R(t,\mu ),  \label{50}
\end{equation}%
Indeed, if (\ref{50}) is true%
\begin{equation*}
H^{+}(p,\mu )=\inf_{t>p}H(t,\mu )\leq \inf_{t>p}R(t,\mu )=R(p,\mu ),
\end{equation*}%
as both $H^{+}$ and $R$ are right continuous in the first argument.\newline
Writing explicitly the inequality (\ref{50})
\begin{equation*}
\sup_{\xi \in \mathcal{K}}\left\{ \varphi (\xi )\mid E_{\mu }[\xi ]\leq
t\right\} \leq \sup \left\{ r\in \mathbb{R}\mid \Pi (r,\mu )\leq t\right\}
\end{equation*}%
and letting $\xi \in \mathcal{K}$ satisfying $E_{\mu }[\xi ]\leq t$, we see
that it is sufficient to show the existence of $r\in \mathbb{R}$ such that $%
\Pi (r,\mu )\leq t$ and $r\geq \varphi (\xi )$. If $\varphi (\xi )=\infty $
then $\Pi (r,\mu )\leq t$ for any $r$ and therefore $R(t,\mu )=H(t,\mu
)=\infty $.

Suppose now that $\infty >\varphi (\xi )>-\infty $ and define $r:=\varphi
(\xi ).$ As $E_{\mu }[\xi ]\leq t$ we have:%
\begin{equation*}
\Pi (r,\mu ):=\inf \left\{ E_{\mu }[\xi ]\mid \varphi (\xi )\geq r\right\}
\leq t.
\end{equation*}%
Then $r\in \mathbb{R}$ satisfies the required conditions.

To obtain $H^{+}(p,\mu ):=\inf_{t>p}H(t,\mu )\geq R(p,\mu )$ it is
sufficient to prove that, for all $t>p,$ $H(t,\mu )\geq R(p,\mu )$, that is :%
\begin{equation}
\sup_{\xi \in \mathcal{K}}\left\{ \varphi (\xi )\mid E_{\mu }[\xi ]\leq
t\right\} \geq \sup \left\{ r\in \mathbb{R}\mid \Pi (r,\mu )\leq p\right\} .
\label{52}
\end{equation}%
Fix any $t>p$\ and consider any $r\in \mathbb{R}$ such that $\Pi (r,\mu
)\leq p$. By the definition of $\Pi $, for all $\varepsilon >0$ there exists
$\xi _{\varepsilon }\in \mathcal{K}$ such that $\varphi (\xi _{\varepsilon
})\geq r$ and $E_{\mu }[\xi _{\varepsilon }]\leq p+\varepsilon .$ Take $%
\varepsilon $ such that $0<\varepsilon <t-p$. Then $E_{\mu }[\xi
_{\varepsilon }]\leq t$ and $\varphi (\xi _{\varepsilon })\geq r$ and (\ref%
{52}) follows.
\end{proof}

\bigskip

From now on we suppose that $\mathcal{K}\subseteq \mathcal{L}^{\infty
}(\Omega ,\mathcal{F})$ is a convex cone. Consider $\mathcal{K}^{+}=\mathcal{%
K}\cap \mathcal{L}_{+}^{\infty }$ and define%
\begin{eqnarray*}
\mathcal{K}^{\circ } &=&\{\mu \in ca\mid \mu (\xi )\geq 0\;\forall \xi \in
\mathcal{K}^{+}\}. \\
\mathcal{K}_{1}^{\circ } &=&\{Q\in \mathcal{K}^{\circ }\mid Q(1_{\Omega
})=1\}.
\end{eqnarray*}

\begin{proposition}
\label{propHR}Let $\mathcal{M}\subseteq \mathcal{K}^{\circ }$ and $\Psi
_{\varphi }:\mathcal{K}\rightarrow \overline{\mathbb{R}}$ be defined by:
\begin{equation*}
\Psi _{\varphi }(Y):=\inf_{Q\in \mathcal{M}}H_{\varphi }(E_{Q}[Y],Q).
\end{equation*}%
If $\Psi _{\varphi }$ is continuous from above ($Y_{n}\downarrow Y$ implies $%
\Psi _{\varphi }(Y_{n})\downarrow \Psi _{\varphi }(Y)$), then:%
\begin{equation*}
\Psi _{\varphi }(Y):=\inf_{Q\in \mathcal{M}}H_{\varphi
}(E_{Q}[Y],Q)=\inf_{Q\in \mathcal{M}}H_{\varphi }^{+}(E_{Q}[Y],Q)=\inf_{Q\in
\mathcal{M}}R_{\varphi }(E_{Q}[Y],Q)
\end{equation*}
\end{proposition}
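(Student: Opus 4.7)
The second equality, $\inf_{Q \in \mathcal{M}} H^+_\varphi(E_Q[Y], Q) = \inf_{Q \in \mathcal{M}} R_\varphi(E_Q[Y], Q)$, is an immediate consequence of the pointwise identity $H^+_\varphi(p, \mu) = R_\varphi(p, \mu)$ already established in Proposition \ref{propHH}; so only the first equality, $\Psi_\varphi(Y) = \inf_{Q \in \mathcal{M}} H^+_\varphi(E_Q[Y], Q)$, needs to be proved.

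The plan is to rewrite the right-hand side using the defining formula $H^+_\varphi(p, Q) = \inf_{\varepsilon > 0} H_\varphi(p + \varepsilon, Q)$ and then swap the two infima (always legal):
\begin{equation*}
\inf_{Q \in \mathcal{M}} H^+_\varphi(E_Q[Y], Q) = \inf_{\varepsilon > 0} \inf_{Q \in \mathcal{M}} H_\varphi(E_Q[Y] + \varepsilon, Q).
\end{equation*}
Assuming $Q(1_\Omega) = 1$ for every $Q \in \mathcal{M}$, the inner infimum equals $\inf_Q H_\varphi(E_Q[Y + \varepsilon \cdot 1_\Omega], Q) = \Psi_\varphi(Y + \varepsilon \cdot 1_\Omega)$, so the right-hand side becomes $\inf_{\varepsilon > 0} \Psi_\varphi(Y + \varepsilon \cdot 1_\Omega)$.

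Next I would invoke the hypothesis that $\Psi_\varphi$ is continuous from above. The sequence $Y_n := Y + (1/n) \cdot 1_\Omega$ decreases pointwise to $Y$, and $\Psi_\varphi$ is itself monotone in its argument (larger $Y$ yields larger $E_Q[Y]$ and hence larger $H_\varphi(E_Q[Y], Q)$ for each $Q$, by monotonicity of $H_\varphi$ in its first argument). Consequently $\inf_{\varepsilon > 0} \Psi_\varphi(Y + \varepsilon \cdot 1_\Omega) = \lim_{n \to \infty} \Psi_\varphi(Y_n) = \Psi_\varphi(Y)$ by continuity from above, closing the chain of identities.

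The main obstacle in this plan is the normalization issue: the step $E_Q[Y + \varepsilon \cdot 1_\Omega] = E_Q[Y] + \varepsilon$ genuinely requires $Q(1_\Omega) = 1$, so the argument applies cleanly only when $\mathcal{M} \subseteq \mathcal{K}_1^\circ$---which is precisely the case needed in the application to Proposition \ref{dualityK}. One must also have $1_\Omega \in \mathcal{K}$ so that the perturbed random variables $Y_n$ lie in the domain of $\Psi_\varphi$ and the continuity-from-above hypothesis can be applied along this sequence.
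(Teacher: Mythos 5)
Your argument is correct and shares the paper's two key ingredients: Proposition \ref{propHH} (the identity $H_{\varphi }^{+}=R_{\varphi }$) for the second equality, and continuity from above along a decreasing approximation, combined with an interchange of infima, for the first. Where you differ is in the implementation, and the difference matters for the hypotheses. The paper never computes $\inf_{Q}H_{\varphi }^{+}(E_{Q}[Y],Q)$ as an iterated infimum; it sandwiches it, getting the lower bound $\Psi _{\varphi }(Y)\leq \inf_{Q}H_{\varphi }^{+}(E_{Q}[Y],Q)$ for free from $H_{\varphi }\leq H_{\varphi }^{+}$ (monotonicity of $H_{\varphi }(\cdot ,Q)$), and the upper bound from an \emph{arbitrary} sequence $Y_{n}\downarrow Y$ in $\mathcal{K}$ via $\inf_{Q}H_{\varphi }^{+}(E_{Q}[Y],Q)\leq \inf_{Q}\lim_{n}H_{\varphi }(E_{Q}[Y_{n}],Q)=\lim_{n}\Psi _{\varphi }(Y_{n})=\Psi _{\varphi }(Y)$. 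Because the sequence is generic, no normalization of $Q$ is ever invoked, so that route works under the stated hypothesis $\mathcal{M}\subseteq \mathcal{K}^{\circ }$; your concrete choice $Y_{n}=Y+(1/n)1_{\Omega }$ is exactly what imports the two extra assumptions you honestly flag ($Q(1_{\Omega })=1$, i.e.\ $\mathcal{M}\subseteq \mathcal{K}_{1}^{\circ }$, and $Y+\varepsilon 1_{\Omega }\in \mathcal{K}$), neither of which the statement grants, though both hold in the one place the proposition is used (Theorem \ref{thA}, where $\mathcal{M}=\mathcal{K}_{1}^{\circ }$ and $\mathcal{K}$ is the relevant cone). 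In partial compensation, your sequence automatically gives $E_{Q}[Y_{n}]=E_{Q}[Y]+1/n>E_{Q}[Y]$, which is silently needed in the paper's step $\inf_{s>E_{Q}[Y]}H_{\varphi }(s,Q)\leq \lim_{n}H_{\varphi }(E_{Q}[Y_{n}],Q)$ and is not guaranteed for a generic decreasing sequence (indeed even $E_{Q}[Y_{n}]\geq E_{Q}[Y]$ is not automatic for a signed $Q\in \mathcal{K}^{\circ }$ unless $Y_{n}-Y\in \mathcal{K}^{+}$ --- a wrinkle your ray $\varepsilon \mapsto Y+\varepsilon 1_{\Omega }$ also sidesteps once $Q(1_{\Omega })=1$). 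In short: same mechanism, but the paper's sandwich proves the proposition as literally stated, while your direct computation proves it under mildly stronger, application-harmless hypotheses.
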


\begin{proof}
Let $Y\in \mathcal{K}$ and $\left\{ Y_{n}\right\} \subseteq \mathcal{K}$. If
$Y_{n}\geq Y$ and $Q\in \mathcal{M}$ then $E_{Q}[Y_{n}]\geq E_{Q}[Y].$ Since
$H(\cdot ,Q)$ is increasing, for every $Q\in \mathcal{M}$ we obtain
\begin{equation*}
H^{+}(E_{Q}[Y],Q):=\inf_{s>E_{Q}[Y]}H(s,Q)\leq \lim_{Y_{n}\downarrow
Y}H(E_{Q}[Y_{n}],Q).
\end{equation*}%
Therefore:%
\begin{align*}
\Psi _{\varphi }(Y)& =\inf_{Q\in \mathcal{M}}H(E_{Q}[Y],Q)\leq \inf_{Q\in
\mathcal{M}}H^{+}(E_{Q}[Y],Q)\leq \inf_{Q\in \mathcal{M}}\lim_{Y_{n}%
\downarrow Y}H(E_{Q}[Y_{n}],Q) \\
& =\lim_{Y_{n}\downarrow Y}\inf_{Q\in \mathcal{M}}H(E_{Q}[Y_{n}],Q)=%
\lim_{Y_{n}\downarrow Y}\Psi _{\varphi }(Y_{n})=\Psi _{\varphi }(Y).
\end{align*}%
The last equality in the Proposition follows from (\ref{HRinf}).
\end{proof}

\bigskip

In the following theorem we provide the representation of $\varphi $ in
terms of the dual functions $H_{\varphi }$ and $R_{\varphi }$ defined in (%
\ref{HH}) and (\ref{H2}).

\begin{theorem}
\label{thA}Suppose that $\mathcal{K\subseteq L}^{\infty }(\Omega ,\mathcal{F}%
)$ is a convex cone $\sigma (\mathcal{L}^{\infty },ca)$-closed and that $%
\varphi :\mathcal{K}\rightarrow \overline{\mathbb{R}}$ is monotone
increasing, quasiconcave and $\sigma (\mathcal{L}^{\infty },ca)$-upper
semicontinuous (using the relative topology on $\mathcal{K}$). Then for all $%
Y\in \mathcal{K}$
\begin{equation}
\varphi (Y)=\Psi _{\varphi }(Y)=\inf_{Q\in \mathcal{K}_{1}^{\circ
}}H_{\varphi }(E_{Q}[Y],Q)=\inf_{Q\in \mathcal{K}_{1}^{\circ }}R_{\varphi
}(E_{Q}[Y],Q).
\end{equation}
\end{theorem}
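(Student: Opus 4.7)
The plan is to establish the middle equality $\varphi(Y)=\inf_{Q\in\mathcal{K}_1^\circ}H_\varphi(E_Q[Y],Q)$ by a quasiconvex duality argument based on Hahn--Banach separation, and then to transfer the result to $R_\varphi$ via Proposition~\ref{propHR}. The easy direction, $\varphi(Y)\leq\Psi_\varphi(Y)$, is immediate: for every $Q\in\mathcal{K}_1^\circ$ the choice $\xi=Y$ is feasible in the supremum defining $H_\varphi(E_Q[Y],Q)$, giving $H_\varphi(E_Q[Y],Q)\geq\varphi(Y)$, which survives passage to the infimum over $Q$.

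For the reverse direction I would fix any $r>\varphi(Y)$ and study the upper level set $A_r:=\{\xi\in\mathcal{K}\mid\varphi(\xi)\geq r\}$. This set is convex by the quasiconcavity of $\varphi$ and is $\sigma(\mathcal{L}^\infty,ca)$-closed because $\mathcal{K}$ is closed and $\varphi$ is upper semicontinuous in the relative topology. Since $Y\notin A_r$, the Hahn--Banach separation theorem in the locally convex space $(\mathcal{L}^\infty,\sigma(\mathcal{L}^\infty,ca))$ produces $\mu\in ca$ and $\beta\in\mathbb{R}$ with
\begin{equation*}
E_\mu[Y]<\beta\leq E_\mu[\xi]\quad\text{for every }\xi\in A_r.
\end{equation*}
To place $\mu$ in $\mathcal{K}^\circ$, I would exploit the cone structure together with the monotonicity of $\varphi$: if $\xi_0\in A_r$ and $\eta\in\mathcal{K}^+$, then $\xi_0+t\eta\in\mathcal{K}$ and $\varphi(\xi_0+t\eta)\geq\varphi(\xi_0)\geq r$, so $\xi_0+t\eta\in A_r$ for all $t\geq 0$; the inequality $E_\mu[\xi_0]+tE_\mu[\eta]\geq\beta$ for all $t\geq 0$ forces $E_\mu[\eta]\geq 0$, hence $\mu\in\mathcal{K}^\circ$. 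Setting $Q:=\mu/\mu(\Omega)\in\mathcal{K}_1^\circ$, the separation then yields $\varphi(\xi)<r$ whenever $\xi\in\mathcal{K}$ satisfies $E_Q[\xi]\leq E_Q[Y]$; taking the supremum gives $H_\varphi(E_Q[Y],Q)\leq r$, and letting $r\downarrow\varphi(Y)$ completes the proof of $\Psi_\varphi(Y)\leq\varphi(Y)$.

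The third equality $\inf_Q H_\varphi(E_Q[Y],Q)=\inf_Q R_\varphi(E_Q[Y],Q)$ will then follow from Proposition~\ref{propHR}, provided $\Psi_\varphi$ is continuous from above. Having just identified $\Psi_\varphi\equiv\varphi$ on $\mathcal{K}$, and knowing $\varphi$ is monotone increasing and $\sigma(\mathcal{L}^\infty,ca)$-upper semicontinuous, I expect the required continuity from above to follow directly from upper semicontinuity along pointwise decreasing bounded sequences, which are $w^\ast$-convergent on $\mathcal{L}^\infty$ by dominated convergence.

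The hardest step will be the normalization $\mu(\Omega)>0$ needed to form $Q=\mu/\mu(\Omega)\in\mathcal{K}_1^\circ$: Hahn--Banach alone does not exclude $\mu(\Omega)\leq 0$. One must argue, using that $1_\Omega\in\mathcal{K}$ (so that $\mu(\Omega)\geq 0$ follows automatically from $\mu\in\mathcal{K}^\circ$), that strict positivity of the total mass can be enforced, typically through a perturbation or approximation trick drawn from the quasi-convex duality machinery of \cite{FMP12}. If $\mu(\Omega)=0$ persistently, one would instead separate $Y$ from the slightly enlarged set $A_r+\varepsilon\,1_\Omega$ and let $\varepsilon\downarrow 0$ to recover the required probability-type element of $\mathcal{K}_1^\circ$.
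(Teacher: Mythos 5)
Your proposal follows essentially the same route as the paper: separate $Y$ from the superlevel set $\{\xi\in\mathcal{K}\mid\varphi(\xi)\geq\varphi(Y)+\varepsilon\}$ by Hahn--Banach in $(\mathcal{L}^{\infty},\sigma(\mathcal{L}^{\infty},ca))$, use the cone structure of $\mathcal{K}$ together with the monotonicity of $\varphi$ to push the separating functional into $\mathcal{K}^{\circ}$, normalize to land in $\mathcal{K}_{1}^{\circ}$, and then pass to $R_{\varphi}$ via continuity from above and Propositions \ref{propHH} and \ref{propHR}. The one point where you diverge is that you explicitly flag the normalization $\mu(\Omega)>0$ as a step requiring justification, whereas the paper silently rescales $\mu$ by $\lambda\in(0,\infty)$ to reach $\mathcal{K}_{1}^{\circ}$; your caution (and proposed perturbation fix) is if anything more careful than the published argument.
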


\begin{proof}
Fix $Y\in \mathcal{K}$. As $Y\in \left\{ \xi \in \mathcal{K}\mid E_{\mu
}[\xi ]\leq E_{\mu }[Y]\right\} $, by the definition of $H(E_{\mu }[Y],\mu )$
we deduce that, for all $\mu \in ca$
\begin{equation*}
H(E_{\mu }[Y],\mu )\geq \varphi (Y)
\end{equation*}%
hence
\begin{equation}
\inf_{\mu \in ca}H(E_{\mu }[Y],\mu )\geq \varphi (Y).  \label{AAA}
\end{equation}%
We prove the opposite inequality. Let $\varepsilon >0$ and define the set%
\begin{equation*}
C_{\varepsilon }:=\left\{ \xi \in \mathcal{K}\mid \varphi (\xi )\geq \varphi
(Y)+\varepsilon \right\}
\end{equation*}%
As $\varphi $ is quasi-concave and $\sigma (\mathcal{L}^{\infty },ca)$-upper
semicontinuous (on $\mathcal{K}$), $C_{\varepsilon }$ is convex and $\sigma (%
\mathcal{L}^{\infty },ca)$-closed. Suppose $\varphi (Y)>-\infty $ (if $%
\varphi (Y)=-\infty ,$ we may take $C_{M}:=\left\{ \xi \in L^{\infty }\mid
\varphi (\xi )\geq -M\right\} $ and the following argument would hold as
well). Since $Y\notin C_{\varepsilon }$, the Hahn Banach theorem implies the
existence of a continuous linear functional that strongly separates $Y$ and $%
C_{\varepsilon },$ that is there exist $\mu _{\varepsilon }\in ca$ such that%
\begin{equation}
E_{\mu _{\varepsilon }}[\xi ]>E_{\mu _{\varepsilon }}[Y]\text{ for all }\xi
\in C_{\varepsilon }.  \label{BBB}
\end{equation}%
Hence%
\begin{equation*}
\left\{ \xi \in \mathcal{K}\mid E_{\mu _{\varepsilon }}[\xi ]\leq E_{\mu
_{\varepsilon }}[Y]\right\} \subseteq C_{\varepsilon }^{c}:=\left\{ \xi \in
\mathcal{K}\mid \varphi (\xi )<\varphi (Y)+\varepsilon \right\}
\end{equation*}%
and from (\ref{AAA})%
\begin{align*}
\varphi (Y)& \leq \inf_{\mu \in ca}H(E_{\mu }[Y],\mu )\leq H(E_{\mu
_{\varepsilon }}[Y],\mu _{\varepsilon }) \\
& =\sup \left\{ \varphi (\xi )\mid \xi \in \mathcal{K}\text{ and }E_{\mu
_{\varepsilon }}[\xi ]\leq E_{\mu _{\varepsilon }}[Y]\right\} \\
& \leq \sup \left\{ \varphi (\xi )\mid \xi \in \mathcal{K}\text{ and }%
\varphi (\xi )<\varphi (Y)+\varepsilon \right\} \leq \varphi (Y)+\varepsilon
.
\end{align*}%
Therefore, $\varphi (Y)=\inf_{\mu \in ca}H(E_{\mu }[Y],\mu )$.

\bigskip

To show that the $inf$ can be taken over $\mathcal{K}^{\circ }$, it is
sufficient to prove that $\mu _{\varepsilon }\in \mathcal{K}^{\circ }$. Let $%
\xi \in C_{\varepsilon }.$ Given that $\varphi $ is monotone increasing and
that $\mathcal{K}$ is a convex cone, $\xi +nW\in C_{\varepsilon }$ for every
$n\in \mathbb{N}$ and $W\in \mathcal{K}^{+}$. From (\ref{BBB}), we have:%
\begin{equation*}
E_{\mu _{\varepsilon }}[(\xi +nW)]>E_{\mu _{\varepsilon }}[Y]\Rightarrow
E_{\mu _{\varepsilon }}[W]>\frac{E_{\mu _{\varepsilon }}[(Y-\xi )]}{n}%
\rightarrow 0,\text{ as }n\rightarrow \infty .
\end{equation*}%
As this holds for any $W\in \mathcal{K}^{+}$ we deduce that $\mu
_{\varepsilon }\in \mathcal{K}^{\circ }$. Therefore, $\inf_{\mu \in \mathcal{%
K}^{\circ }}H(E_{\mu }[Y],\mu )=\varphi (Y)$. By definition of $H(p,\mu )$,%
\begin{equation*}
H(E_{\mu }[Y],\mu )=H(E_{\lambda \mu }[Y],\lambda \mu )\quad \forall \mu \in
\mathcal{K}^{\circ }\text{ , }\mu \neq 0,\text{ }\lambda \in (0,\infty ).
\end{equation*}%
Hence we deduce $\varphi (Y)=\inf_{Q\in \mathcal{K}_{1}^{\circ
}}H(E_{Q}[Y],Q).$

The remaining equalities follows from Propositions (\ref{propHH}) and (\ref%
{propHR}), since $\varphi (Y)=\inf_{Q\in \mathcal{K}_{1}^{\circ
}}H(E_{Q}[Y],Q)$ is continuous from above. Indeed, if $Y_{n}\downarrow Y$
then the monotonicity of $\varphi $ implies $\varphi (Y_{n})\geq \varphi (Y)$
and $\varphi (Y_{n})\downarrow $. Moreover the Monotone Convergence Theorem
implies for every $\mu =\mu ^{+}-\mu ^{-}\in ca$ that $E_{\mu ^{\pm
}}[Y_{n}]\rightarrow E_{\mu ^{\pm }}[Y]$ and therefore $Y_{n}\rightarrow Y$
in the $\sigma (\mathcal{L}^{\infty },ca)$. As $\mathcal{K}$ is closed and $%
\varphi $ is usc in the $\sigma (\mathcal{L}^{\infty },ca)$ topology, we can
conclude $\limsup_{n}\varphi (Y_{n})=\lim \varphi (Y_{n})\leq \varphi (Y)$,
and therefore $\lim \varphi (Y_{n})=\varphi (Y)$.
\end{proof}

\section{Appendix B: Dual representation\label{AppB}}

We recall, from \cite{FMP12}, the dual representation of risk measures
defined on $\mathcal{P}(\mathbb{R}).$ Consider the set%
\begin{eqnarray*}
C_{b}^{-}(\mathbb{R}) &=&\left\{ f\in C_{b}(\mathbb{R})\mid f\text{ is
decreasing}\right\} . \\
&=&\left\{ f\in C_{b}(\mathbb{R})\mid Q,P\in \mathcal{P}(\mathbb{R})\text{
and }P\preccurlyeq _{1}Q\Rightarrow \int fdQ\leq \int fdP\right\} .
\end{eqnarray*}

\begin{proposition}[Prop. 5.6 \protect\cite{FMP12}]
\label{propvolle}Any $\sigma (\mathcal{P}(\mathbb{R}),C_{b}(\mathbb{R}))$%
-lsc Risk Measure $\Phi :\mathcal{P}(\mathbb{R})\rightarrow \mathbb{R}\cup
\left\{ \infty \right\} $ can be represented as
\begin{equation*}
\Phi (P)=\sup_{f\in C_{b}^{-}}V^{-1}\left( \int fdP,f\right) .
\end{equation*}%
where $V:\mathbb{R}\times C_{b}(\mathbb{R})\rightarrow \overline{\mathbb{R}}$
is given by:
\begin{equation}
V(a,f):=\sup_{Q\in \mathcal{P}(\mathbb{R})}\left\{ \int fdQ\mid \Phi (Q)\leq
a\right\} \text{, }a\in \mathbb{R},  \label{123}
\end{equation}%
and
\begin{equation}
V^{-1}(v,f):=\inf \left\{ \alpha \in \mathbb{R}\mid V(\alpha ,f)\geq
v\right\} ,\text{ }v\in \mathbb{R}.  \label{1111}
\end{equation}
\end{proposition}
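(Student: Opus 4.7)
The plan is to establish the two inequalities separately, with the nontrivial direction relying on a Hahn–Banach separation refined to respect monotonicity.

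\textbf{Easy direction} ($\Phi(P) \geq \sup_{f\in C_b^-} V^{-1}(\int f\,dP,f)$). Fix $f\in C_b^-$. By the definition of $V$ in \eqref{123}, taking $Q=P$ with $a=\Phi(P)$ yields $\int f\,dP \leq V(\Phi(P),f)$. Hence $\Phi(P)$ belongs to the set $\{\alpha : V(\alpha,f)\geq \int f\,dP\}$, so by \eqref{1111} its infimum satisfies $V^{-1}(\int f\,dP,f)\leq \Phi(P)$. Taking the supremum over $f\in C_b^-$ gives the desired inequality. This is purely definitional.

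\textbf{Hard direction.} Fix $P\in\mathcal{P}(\mathbb{R})$ and any $\alpha<\Phi(P)$ (the case $\Phi(P)=-\infty$ is trivial). The plan is to exhibit some $f\in C_b^-$ such that $V^{-1}(\int f\,dP,f)\geq\alpha$, from which letting $\alpha\uparrow \Phi(P)$ concludes the proof. Consider the sublevel set
\begin{equation*}
\mathcal{A}_\alpha := \{Q\in\mathcal{P}(\mathbb{R}) : \Phi(Q)\leq\alpha\}.
\end{equation*}
By (QCo) it is convex, by the $\sigma(\mathcal{P}(\mathbb{R}),C_b(\mathbb{R}))$-lsc hypothesis it is closed in this topology, and by (Mon) it is upward-closed with respect to $\preccurlyeq_1$ (i.e., $Q\in\mathcal{A}_\alpha$ and $Q\preccurlyeq_1 Q'$ imply $Q'\in\mathcal{A}_\alpha$). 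Since $P\notin\mathcal{A}_\alpha$, a Hahn–Banach separation in the dual pairing $\langle C_b(\mathbb{R}),ca(\mathbb{R})\rangle$ produces some $f\in C_b(\mathbb{R})$ with
\begin{equation*}
\int f\,dP > \sup_{Q\in\mathcal{A}_\alpha}\int f\,dQ = V(\alpha,f).
\end{equation*}
This strict inequality immediately gives $V^{-1}(\int f\,dP,f)\geq\alpha$, as $\alpha$ is not in the defining set of the infimum.

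\textbf{Main obstacle: forcing $f\in C_b^-$.} A generic separating functional need not be decreasing, so the delicate step is upgrading the Hahn–Banach output so that $f$ lies in $C_b^-$. The idea is to exploit the $\preccurlyeq_1$-upward-closedness of $\mathcal{A}_\alpha$ together with the characterization
\begin{equation*}
C_b^- = \{f\in C_b(\mathbb{R}) : Q\preccurlyeq_1 P\Rightarrow \int f\,dP\leq\int f\,dQ\},
\end{equation*}
i.e., decreasing functions are precisely those reversing the first-order stochastic order. Concretely, if an arbitrary separator $g$ yielded $\int g\,d\mu<0$ for some $\mu$ corresponding to shifting mass to the right within $\mathcal{A}_\alpha$, one would contradict upward-closedness (by letting the perturbation scale to infinity along a recession direction inside $\mathcal{A}_\alpha$). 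This forces the separator to be monotone-compatible, hence replaceable by a decreasing function without losing the separation. Once this is achieved, combining with the easy direction completes $\Phi(P)=\sup_{f\in C_b^-}V^{-1}(\int f\,dP,f)$.
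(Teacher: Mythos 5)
First, note that the paper does not actually prove this proposition: it is quoted verbatim from \cite{FMP12} (Prop.~5.6), so the only in-paper point of comparison is the proof of Theorem \ref{thA} in Appendix A, which handles the analogous duality on a convex \emph{cone} $\mathcal{K}\subseteq\mathcal{L}^\infty$. Your overall architecture is the right one and matches that template: the easy inequality is purely definitional and you prove it correctly, and the hard inequality correctly reduces to strongly separating $P$ from the convex, $\sigma(\mathcal{P}(\mathbb{R}),C_b(\mathbb{R}))$-closed, $\preccurlyeq_1$-upward-closed sublevel set $\mathcal{A}_\alpha$, then pushing $\alpha\uparrow\Phi(P)$. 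You also correctly identify the crux: the Hahn--Banach separator need not lie in $C_b^-$.

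The gap is in how you propose to close that crux. The mechanism you sketch --- ``letting the perturbation scale to infinity along a recession direction inside $\mathcal{A}_\alpha$'' to force a sign condition on the separator --- is exactly the device used in the proof of Theorem \ref{thA}, where one shows $\mu_\varepsilon\in\mathcal{K}^\circ$ by noting $\xi+nW\in C_\varepsilon$ and dividing the separation inequality by $n$. That argument is unavailable here: $\mathcal{A}_\alpha\subseteq\mathcal{P}(\mathbb{R})$ is bounded in total variation, so its recession cone in $ca(\mathbb{R})$ is $\{0\}$, and for any $f\in C_b$ and any $\preccurlyeq_1$-increasing family (e.g.\ $T_cQ$, $c\to\infty$) the integrals $\int f\,d(T_cQ)$ stay bounded; nothing blows up, so no sign condition on $f$ can be extracted and no contradiction with upward-closedness arises. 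What is actually needed is a different, genuinely nontrivial step: for instance, replace the separator $f$ by its decreasing envelope $f^-(x):=\sup_{y\geq x}f(y)\in C_b^-$, observe $\int f^-\,dP\geq\int f\,dP$, and then use the $\preccurlyeq_1$-upward-closedness of $\mathcal{A}_\alpha$ to show $\sup_{Q\in\mathcal{A}_\alpha}\int f^-\,dQ\leq\sup_{Q\in\mathcal{A}_\alpha}\int f\,dQ$ (by pushing the mass of each $Q\in\mathcal{A}_\alpha$ forward to points where $f$ nearly attains $f^-$, producing a $\preccurlyeq_1$-dominating measure still in $\mathcal{A}_\alpha$). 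This envelope-plus-pushforward lemma is the real content of the monotone quasiconvex duality in \cite{FMP12}, and your proposal asserts rather than supplies it. Everything else in your write-up is sound (modulo the harmless slip that the degenerate case is $\Phi(P)=+\infty$, not $-\infty$, since $\Phi$ takes values in $\mathbb{R}\cup\{+\infty\}$).
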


We also mention that the $\sigma (\mathcal{P}(\mathbb{R}),C_{b}(\mathbb{R}))$
lower semicontinuity property can be characterized with an appropriate and
simple continuity from above condition with respect to the first order
stochastic dominance (see \cite{FMP12} Proposition 2.5).

\begin{example}[The certainty equivalent]
Fix any continuous, bounded from below and strictly decreasing function $f:%
\mathbb{R}\rightarrow \mathbb{R}$. Then the map $\Phi _{f}:\mathcal{P}(%
\mathbb{R})\rightarrow \mathbb{R}\cup \{+\infty \}$ defined by:
\begin{equation*}
\Phi _{f}(P):=-f^{-1}\left( \int fdP\right)
\end{equation*}%
is a Risk Measure on $\mathcal{P}(\mathbb{R})$. It is also easy to check
that $\Phi _{f}$ is $\sigma (\mathcal{P}(\mathbb{R}),C_{b}(\mathbb{R}))-$%
lsc. In \cite{FMP12} it is shown that $\Phi _{f}$ can not be convex on $%
\mathcal{P}(\mathbb{R})$. By selecting the function $f(x)=e^{-x}$ we obtain $%
\Phi _{f}(P)=\ln \left( \int \exp \left( -x)dF_{P}(x)\right) \right) $,
which is in addition (TrI). Its associated risk measure $\rho _{\mathbb{P}%
}:L^{0}(\Omega ,\mathcal{F},\mathbb{P})\rightarrow \mathbb{R}\cup \{+\infty
\}$ defined on random variables, $\rho _{\mathbb{P}}(X)=\Phi _{f}(P_{X})=\ln
E_{\mathbb{P}}[e^{-X}],$\ is the Entropic Convex Risk Measure.
\end{example}

\paragraph{On the $\Lambda V@R$\label{lambda}.}

All the details of the present section can be found in \cite{FMP12}. We
consider a family of risk measures called $\Lambda V@R$ which depend on a
Probability/Loss function $\Lambda $. This family provides one example of a $%
V\&R$ measure that exhibits the peculiar cash invariance property (\ref{Cash}%
).

Fix the \emph{right continuous} increasing function $\Lambda :\mathbb{R}%
\rightarrow \lbrack 0,1]$ and define the family $\left\{ F_{m}\right\}
_{m\in \mathbb{R}}$ of functions $F_{m}:\mathbb{R}\rightarrow \lbrack 0,1]$
by
\begin{equation*}
F_{m}(x):=\Lambda (x)\mathbf{1}_{(-\infty ,m)}(x)+\mathbf{1}_{[m,+\infty
)}(x).
\end{equation*}%
It is easy to show that if $\sup_{x\in \mathbb{R}}\Lambda (x)<1$ then the
associated map $\Phi :\mathcal{P}(\mathbb{R})\rightarrow \mathbb{R}\cup
\{+\infty \}$ defined by
\begin{equation*}
\Phi (P):=-\sup \left\{ m\in \mathbb{R}\mid P\in \mathcal{A}^{m}\right\} ,
\end{equation*}%
with
\begin{equation*}
\mathcal{A}^{m}:=\{Q\in \mathcal{P}(\mathbb{R})\mid F_{Q}\leq F_{m}\},
\end{equation*}%
is (Mon), (Qco) and $\sigma (\mathcal{P}(\mathbb{R}),C_{b})-$l.s.c.. This
map was named $\Lambda V@R$ since
\begin{equation*}
\Lambda V@R(P_{X}):=-\sup \left\{ m\in \mathbb{R}\mid \mathbb{P}(X\leq
x)\leq \Lambda (x),\;\forall x\leq m\right\} .
\end{equation*}%
If $\Lambda (x)=\lambda $ for every $x\in \mathbb{R}$ then%
\begin{equation*}
\Lambda V@R=V@R_{\lambda }
\end{equation*}%
coincides with the classical Value at Risk $V@R_{\lambda }$, and in
particular if $\lambda =0$ we recover the worst case risk measure
\begin{equation*}
\rho _{w}(X)=-\inf (X).
\end{equation*}%
Both these risk measures are (TrI), whereas for a general $\Lambda $ we get
the following property
\begin{equation}
\Lambda V@R(P_{X+\alpha })=\Lambda ^{\alpha }V@R(P_{X})-\alpha ,
\label{Cash}
\end{equation}%
where $\Lambda ^{\alpha }(x)=\Lambda (x+\alpha )$. Clearly by the definition
$\Lambda V@R(\delta _{0})=0$ for every $\Lambda $ so that
\begin{equation}
\Lambda V@R(\delta _{\alpha })=-\alpha  \label{pha}
\end{equation}%
We also mention that the $\Lambda V@R$ is elicitable (depending on the
selection of $\Lambda $) and is statistically consistent. We refer to \cite%
{BPR16} for details on this topic.

\noindent Regarding the dual representation of the $\Lambda V@R,$ the
functions $V$ in (\ref{123}) and $V^{-1}$ in (\ref{1111}) can be easily
computed (see \cite{FMP12}):%
\begin{eqnarray}
V(a,f) &=&f(-\infty )+\int_{-\infty }^{-a}(1-\Lambda )df,  \label{111} \\
V^{-1}(v,f) &=&-H_{f}^{l}(v-f(-\infty )),  \label{222}
\end{eqnarray}%
where $H_{f}^{l}$ is the left inverse of the function: $a\rightarrow
\int_{-\infty }^{a}(1-\Lambda )df$.

\noindent As two particular cases, from (\ref{111}) and (\ref{222}), we get
for the $V@R_{\lambda }$ (where $\Lambda (x)=\lambda )$ : $V^{-1}\left(
v,f\right) =-f^{l}\left( \frac{v-\lambda f(-\infty )}{1-\lambda }\right) $;
for the Worst Case risk measure $\rho _{w}$ (where $\Lambda (x)=0)$ we
obtain $V^{-1}\left( v,f\right) =-f^{l}(v)$, where $f^{l}$ is the left
inverse of $f$.

\end{document}